\newcommand{\formulas}{\mathcal{F}}
\newcommand{\SymProt}{\mathbb{S}}
\newcommand{\unreach}{\mathsf{unreach}}
\newcommand{\prodreach}{\mathsf{prodreach}}
\newcommand{\avail}{\mathsf{avail}}
\newcommand{\Characterization}{\emph{CC}\xspace}
\newcommand{\Kleenestar}{\mathrel{\vphantom{\to}^*}}
\let\originalleft\left
\let\originalright\right
\renewcommand{\left}{\mathopen{}\mathclose\bgroup\originalleft}
\renewcommand{\right}{\aftergroup\egroup\originalright}
\newcommand{\pvec}[1]{\vec{#1}\mkern2mu\vphantom{#1}}
\newcommand{\Procs}{\ensuremath{\mathcal{P}}}
\definecolor{roleColor}{rgb}{0.1, 0.3, 0.1}
\newcommand{\roleCol}[1]{{\color{roleColor}#1}}
\newcommand{\roleFmt}[1]{\roleCol{\mathtt{#1}}}
\newcommand{\procA}{{\roleFmt{p}}}
\newcommand{\procB}{{\roleFmt{q}}}
\newcommand{\procC}{{\color{roleColor}\roleFmt{r}}}
\newcommand{\procD}{{\color{roleColor}\roleFmt{s}}}
\newcommand{\procE}{{\color{roleColor}\roleFmt{t}}}
\newcommand{\run}{\rho}
\newcommand{\val}{\ensuremath{m}}
\newcommand{\MsgVals}{\ensuremath{\mathcal{V}}}
\newcommand{\lblmsgO}{\textcolor{orange}{\mathsf{o}}}
\newcommand{\lblmsgB}{\textcolor{blue}{\mathsf{b}}}
\newcommand{\lblmsgM}{\textcolor{magenta}{\mathsf{m}}}
\newcommand{\msgO}{\lblmsgO}
\newcommand{\msgB}{\lblmsgB}
\newcommand{\msgM}{\lblmsgM}
\newcommand{\CSM}[1]{\ensuremath{\{\!\!\{#1_\procA\}\!\!\}_{\procA \in \Procs}}}
\newcommand{\CLTS}[1]{\ensuremath{\{\!\!\{#1_\procA\}\!\!\}_{\procA \in \Procs}}}
\newcommand{\emptystring}{\varepsilon}
\newcommand{\set}[1]{\{#1\}}
\newcommand{\cond}[1]{\{#1\}}
\newcommand{\initUpdLoop}[2]{\ensuremath{\langle #1 \rangle \is \langle #2 \rangle}}
\newcommand{\lang}{\mathcal{L}}
\newcommand{\langasync}{\mathcal{L}_{async}}
\newcommand{\interswaplang}{\mathcal{C}^{\interswap}}
\newcommand{\SyncToAsync}{\ensuremath{\texttt{\upshape split}}}
\newcommand{\channels}{\ensuremath{\mathsf{Chan}}}
\newcommand{\channel}[2]{\ensuremath{#1,#2}}
\newcommand{\trace}{\texttt{\upshape trace}}
\newcommand{\GG}{\mathbf{G}}
\newcommand{\semglobal}{\ensuremath{\mathsf{GAut}}}
\newcommand{\semglobalsync}{\ensuremath{\mathsf{GAut}}}
\newcommand{\AlphSync}{\ensuremath{\Gamma}}
\newcommand{\AlphSyncSubscript}{\ensuremath{\Gamma_{\negthinspace\mathit{sync}}}} 
\newcommand{\Alphabet}{\AlphAsync} 
\newcommand{\AlphAsync}{\ensuremath{Σ}}
\newcommand{\AlphAsyncSubscript}{\ensuremath{Σ_{\mathit{async}}}}
\newcommand{\interswap}{\ensuremath{\sim}}
\def \ifempty#1{\def\temp{#1} \ifx\temp\empty }
\newcommand{\snd}[3]{#1\triangleright#2!#3}
\newcommand{\rcv}[3]{#2\triangleleft#1?#3}
\newcommand{\msgFromTo}[3]{#1\!\to\!#2\!:\!#3}
\newcommand{\msgFromToNS}[3]{#1\to#2:#3}
\newcommand{\pref}{\operatorname{pref}}
\newcommand{\preforder}{\ensuremath{\leq}}
\newcommand{\myparagraph}[1]{\smallskip\noindent\textbf{#1}}
\newcommand{\wproj}{{\ensuremath{\Downarrow}}}
\def\mmerge{\mathrel{\ThisStyle{\stretchrel*{\ooalign{
					\raise0.2\LMex\hbox{$\SavedStyle\sqcap$}\cr
					\raise-0.2\LMex\hbox{$\SavedStyle\sqcap$}}}{\sqcap}}}}
\def\mmmerge{\mathrel{\ThisStyle{\stretchrel*{\ooalign{
					\raise0.6\LMex\hbox{$\SavedStyle\sqcap$}\cr
					\raise0.2\LMex\hbox{$\SavedStyle\sqcap$}\cr
					\raise-0.2\LMex\hbox{$\SavedStyle\sqcap$}}}{\sqcap}}}}
\newcommand{\blockedset}{\ensuremath{\mathcal{B}}}
\newcommand{\semavail}{\ensuremath{M}}
\newcommand{\union}{\cup}
\newcommand{\inters}{\cap}
\newcommand{\Union}{\bigcup}
\newcommand{\Inters}{\bigcap}
\newcommand{\dunion}{\uplus}
\DeclarePairedDelimiter\card{\lvert}{\rvert}
\providecommand{\implies}{\Rightarrow}
\providecommand{\Coloneqq}{\mathrel{\mathop{::}}=} 
\newcommand{\is}{\coloneq}
\newcommand{\from}{\colon}
\newcommand{\ie}{i.e.~}
\newcommand{\eg}{e.g.~}
\def\grammOr{\hspace{3pt}\mid\hspace{3pt}}
\def\grammIs{\Coloneqq}
\gdef\@grammar@bar{
	\catcode`\|=\active
	\def|{\grammOr}
}
\newcommand{\gramm}[1]{
	\begingroup
	\def\is{\grammIs}
	\@grammar@bar
	#1
	\endgroup
}
\newenvironment{grammar}{
	\begin{equation*}
		\def\is{& \grammIs }
		
		\@grammar@bar
		\aligned
	}
	{
		\endaligned
	\end{equation*}
	\aftergroup\ignorespaces
}
\newcommand{\hole}{\hbox{-}}
\newcommand{\subsetproj}[2]{\ensuremath{\mathscr{P}(#1,#2)}}
\newcommand{\globcomplocal}[3]{\ensuremath{\operatorname{R}^#1_{#2}(#3)}}
\newcommand{\denotations}[1]{\llbracket {#1} \rrbracket}
\newcommand{\domain}{\mathfrak{D}}
\newcommand{\CommentLine}[1]{
	\State \textcolor{blue}{$\triangleright$ \textit{#1}}
}
\newcommand{\seller}{\ensuremath{\roleFmt{S}}}
\newcommand{\buyerOne}{\ensuremath{\roleFmt{B_1}}}
\newcommand{\buyerTwo}{\ensuremath{\roleFmt{B_2}}}
\newcommand{\isbn}{\ensuremath{\mathsf{ISBN}}}
\newcommand{\succMsg}{\ensuremath{\mathsf{succ}}}
\newcommand{\contMsg}{\ensuremath{\mathsf{cont}}}
\newcommand{\quitMsg}{\ensuremath{\mathsf{quit}}}
\tikzset{
  trans/.style={
    draw,-{Stealth[round]}, semithick, shorten >= 1pt,
  },
  init/.style={initial by arrow},
  final/.style={accepting},
  initial text={},
  initial distance=2ex,
  every initial by arrow/.style={trans},
}
\tikzset{
  sem/.style={
     every state/.style = {semnode},
	 every edge/.style = {semarrow},
	 every loop/.style = {semarrow},
  }
}
\tikzset{
  semnode/.style={
    thick,
    draw,
    minimum size=1ex,
    shape=circle,
    font=\scriptsize,
    inner sep=2pt,
  },
  semarrow/.style={
    trans,
    font=\scriptsize,
    draw,
    pos=.4,
  }
}
\newcommand\fs[1]{\todo[color=blue!20!white]{FS, #1}}
\newcommand\efl[1]{\todo[color=purple!20!white]{EFL, #1}}
\newcommand\FS[1]{\todo[color=white,inline]{\textcolor{blue}{FS, #1}}}
\newcommand\EFL[1]{\todo[color=white,inline]{\textcolor{purple}{EFL, #1}}}
\begin{document}
\title
[Characterizing Implementability of Global Protocols with Infinite States and Data]
{Characterizing Implementability of Global Protocols \\ with Infinite States and Data}
\author{Elaine Li}
\authornote{corresponding author}
\affiliation{\institution{New York University}
	\country{USA}}
\author{Felix Stutz}
\affiliation{
  \institution{University of Luxembourg}
  \country{Luxembourg}
}
\author{Thomas Wies}
\affiliation{\institution{New York University}
	\country{USA}}
\author{Damien Zufferey}
\affiliation{
  \institution{SonarSource}
  \country{Switzerland}
}
\renewcommand{\shortauthors}{E.\ Li, F.\ Stutz, T.\ Wies, D.\ Zufferey}
\begin{abstract}
We study the implementability problem for an expressive class of symbolic communication protocols involving multiple participants. 
Our symbolic protocols describe infinite states and data values using dependent refinement predicates. 
Implementability asks whether a global protocol specification admits a distributed, asynchronous implementation, namely one for each participant, that is deadlock-free and exhibits the same behavior as the specification. 
We provide a unified explanation of seemingly disparate sources of non-implementability through a precise semantic characterization of implementability for infinite protocols. 
Our characterization reduces the problem of implementability to (co)reachability in the global protocol restricted to each participant. 
This compositional reduction yields the first sound and relatively complete algorithm for checking implementability of symbolic protocols. 
We use our characterization to show that for finite protocols, implementability is co-NP-complete for explicit representations and PSPACE-complete for symbolic representations. 
The finite, explicit fragment subsumes a previously studied fragment of multiparty session types for which our characterization yields a co-NP decision procedure, tightening a prior PSPACE upper bound.
 \end{abstract}
\keywords{
Protocol verification,
Multiparty session types,
Refinement
}
        \maketitle
	%
\section{Introduction}
\label{sec:intro}
Concurrency is ubiquitous in modern computing, message-passing is a major concurrency paradigm, and communication protocols are therefore a key target for formal verification. 
Communication protocols specify distributed, message-passing behaviors from a global point of view, altogether describing the interactions between all participants in the protocol. 
Implementability and synthesis are two central questions to the verification of communication protocols. 
Implementability asks whether a protocol admits a distributed implementation, and synthesis in turn computes an admissible one. 
A distributed implementation is considered admissible if it is deadlock-free and exhibits exactly the same communication behaviors described by the specification. 
We refer to the latter property as \emph{protocol fidelity}. 
The implementability question precedes the synthesis question in importance: synthesizing implementations for unrealizable protocols is a fruitless endeavor. 

Global protocol specifications find industry applications in the form of UML's high-level message sequence charts and the Web Service Choreography Description Language, and are widely studied in academia in the form of multiparty session types and choreographic programming. 
Multiparty session types (MSTs) have been implemented in at least 16 programming languages 
including Python \cite{DBLP:journals/fmsd/DemangeonHHNY15, DBLP:journals/corr/NeykovaY16, DBLP:journals/fac/NeykovaBY17}, Java \cite{DBLP:conf/fase/HuY16, DBLP:conf/fase/HuY17}, C \cite{DBLP:conf/tools/NgYH12}, Go \cite{DBLP:conf/icse/LangeNTY18, DBLP:journals/pacmpl/CastroHJNY19}, Scala \cite{DBLP:conf/ecoop/Castro-PerezY23}, Rust \cite{DBLP:conf/ppopp/CutnerYV22,DBLP:journals/darts/LagaillardieNY22}, OCaml \cite{DBLP:conf/ecoop/ImaiNYY19}, F\# \cite{DBLP:conf/cc/NeykovaHYA18}, 
and applied to 
    operating systems~\cite{DBLP:conf/eurosys/FahndrichAHHHLL06},
    high performance computing~\cite{DBLP:conf/pvm/HondaMMNVY12, DBLP:conf/fpl/NiuNYWYL16,demuijnckhughes_et_al:LIPIcs.ECOOP.2019.6},
    cyber-physical systems~\cite{DBLP:conf/ecoop/MajumdarPYZ19, DBLP:journals/pacmpl/MajumdarYZ20}, and
    web services~\cite{DBLP:conf/tgc/YoshidaHNN13}.
We refer the reader to  \cite{DBLP:books/sp/24/Yoshida24} and \cite{M23} for a comprehensive survey of MST and choreography applicability respectively.

To model real-world verification targets, we desire for our protocol specifications to be as expressive as possible.
Various dimensions of expressivity have been explored in the literature, such as arbitrary message payloads, non-deterministic choice, unrestricted recursion and parametricity. 
Formalisms such as choreography automata~\cite{DBLP:conf/ecoop/GheriLSTY22}, high-level message sequence charts~\cite{DBLP:conf/sdl/MauwR97,
	DBLP:conf/ac/GenestMP03,
	DBLP:conf/acsd/GenestM05,
	DBLP:conf/concur/GazagnaireGHTY07,
	DBLP:journals/tosem/RoychoudhuryGS12, 
	DBLP:journals/tse/AlurEY03, 
	DBLP:journals/tcs/Lohrey03, 
	DBLP:conf/concur/AlurY99,DBLP:conf/mfcs/MuschollP99, 
	DBLP:conf/stacs/Morin02,DBLP:journals/jcss/GenestMSZ06} and session types~\cite{DBLP:conf/popl/HondaYC08, DBLP:conf/concur/BocchiHTY10, DBLP:conf/tgc/BocchiDY12, DBLP:journals/jlp/ToninhoY17, DBLP:journals/pacmpl/00020HNY20, DBLP:conf/cav/LiSWZ23} correspond to syntactically-defined fragments that incorporate a selection of these features.
	
In this paper, we study the implementability problem for a semantically-defined class of communication protocols, which we call \emph{global communicating labeled transition systems (GCLTS)}.
GCLTS impose only modest syntactic restrictions and subsume many existing fragments of asynchronous multiparty session types and choreography automata. 
GCLTS capture the following important features: 
\begin{itemize}
	\item Asynchrony: the semantics are interpreted over a peer-to-peer, asynchronous network, with FIFO channels connecting each pair of protocol participants. 
	\item Generalized sender-driven choice: the only notable syntactic restriction imposed by our formalism is that at each branching point of the protocol's control flow, a single participant chooses a branch. 
	In other words, the first message that is sent in each branch of a choice must come from the same sender. However, we impose no restrictions on the recipient or the message payload other than that no two branches share the same recipient and message.
	\item Infinite protocol state: protocol states contain registers that take values from an infinite domain. This allows loops to carry memory across iterations, and allows the protocol to be specified in terms of dependent refinement predicates. 
	\item Infinite message payloads: messages can carry values drawn from an infinite data domain. 
\end{itemize}
Implementability is undecidable for this general class of protocols.
The presence of and interaction between the aforementioned features means that even soundly approximating implementability is challenging. 
Existing work is either comparable in expressivity but does not solve the implementability problem, or solves the implementability problem but is incomparably restricted in expressivity. 
\citet{DBLP:journals/pacmpl/00020HNY20} present a framework for synchronous, refined multiparty session types that soundly approximates implementability through its endpoint projection, but that may yield local specifications that are not implementable.
Several works \cite{DBLP:journals/tcs/AlurEY05, DBLP:journals/tcs/Lohrey03, DBLP:conf/ecoop/Stutz23, DBLP:conf/cav/LiSWZ23} precisely characterize implementability for finite protocol specifications. However, the implementability check in \cite{DBLP:journals/tcs/AlurEY05, DBLP:conf/cav/LiSWZ23} relies on synthesizing an implementation upfront, which is not possible for infinite state protocols.
\citet{DBLP:conf/concur/DasP20}~study local session types with arithmetic refinements in a binary~setting.

We address these challenges by decomposing the implementability problem into two steps. 
First, we give a precise, semantic characterization of implementability for GCLTS that we prove sound and complete once and for all.
Our characterization is defined directly on the global specification, and thus forgoes the need to first synthesize a candidate implementation. 
Moreover, our characterization gives a unified semantic explanation to disparate causes of non-implementability that arise from the expressivity of our protocol fragment.
We encapsulate the complexities introduced by communication-specific features such as asynchrony and partial information in the first step. 
Our semantic characterization reduces implementability to (co)reachability in the GCLTS. 
Specifically, we provide a sound and complete reduction to the first-order fixpoint logic $\mu$CLP~\cite{DBLP:journals/pacmpl/UnnoTGK23}. 
The $\mu$CLP calculus can express recursive predicates with least and greatest fixpoint semantics where the predicate body is constrained by a first-order logic formula over a background theory. 
Our implementability characterization can therefore be checked by existing $\mu$CLP solvers. 
Second, we use this reduction to obtain a
blueprint for solving implementability algorithmically.
Our reduction yields algorithms that are sound and complete relative to an assumed oracle for solving $\mu$CLP validity, in addition to decision procedures with optimal complexity for various decidable classes. 
\clearpage
\paragraph{Contributions.}
In summary, our contributions are: 
\begin{itemize}
	\item Global communicating labeled transition systems (GCLTS): a semantically-defined class of asynchronous communication protocols that subsumes most formalisms in the literature. 
	\item A precise characterization of implementability for GCLTS. 
	\item The first symbolic algorithm for checking implementability of infinite, symbolic protocols that is sound and relatively complete. 
	\item Optimal decision procedures for checking implementability of finite protocols. In particular, we show that for explicit protocol representations that enumerate all states and transitions, the problem is co-NP-complete, and for symbolic protocol representations that encode states and transitions using predicates and variables, the problem is PSPACE-complete. 
	        \item As a corollary of the previous result, we obtain a co-NP decision procedure for implementability of global types, tightening a prior PSPACE upper bound~\cite{DBLP:conf/cav/LiSWZ23,DBLP:journals/corr/abs-2305-17079}. 
\end{itemize}
\section{Overview} 
\label{sec:overview}
We motivate our work using an infinite state version of a two-bidder protocol, depicted as a high-level message sequence chart (HMSC) in \cref{fig:two-bidder-protocol}. The protocol specifies the behavior of two bidders, $\buyerOne$ and $\buyerTwo$, who negotiate to split the purchase of a book from seller $\seller$.

The protocol begins with $\buyerOne$ announcing to $\seller$ and $\buyerTwo$ the book $y$ it proposes to buy. The protocol requires that $y$ signifies a valid ISBN number, which we abstract with the predicate $\isbn(y)$. The seller $\seller$ then informs $\buyerOne$ the requested book's price $z$. After this, $\buyerOne$ and $\buyerTwo$ enter a bidding phase in which they negotiate the split of their respective contributions $b_1$ and $b_2$ towards the purchase. In each round of the bidding phase, $\buyerOne$ proposes its contribution $b_1$ to $\buyerTwo$. Bidder $\buyerTwo$ then decides to either abort the protocol by sending a $\quitMsg$ message to $\seller$, or respond to $\buyerOne$ with its own bid $b_2$. In case $\buyerTwo$ aborts, $\seller$ echoes the abort message to $\buyerOne$ and the protocol terminates. In case $\buyerTwo$ continues bidding, if the sum of the proposed bids exceeds the book's price,  $\buyerOne$ informs $\seller$ of the successful negotiation. Seller $\seller$ in turn relays the message to $\buyerTwo$. Otherwise, $\buyerOne$ sends a $\contMsg$ message to $\buyerTwo$, informing them that they need to enter another bidding round. Throughout the bidding phase, $\buyerOne$ and $\buyerTwo$ track the values of their latest bids in the registers $z_1$ and $z_2$. The refinements ensure that the proposed bids are strictly increasing from one round to the next, thus enforcing termination.
\begin{figure}[b]
  \begin{minipage}[b]{.59\textwidth}
	\centering
	
	\resizebox{1.01\textwidth}{!}{
\begin{tikzpicture}[hmsc,baseline,msg/.append style={tight,above=2pt,pos=.2},node distance=15ex]
\begin{scope}[msc=s1,init]
  \begin{scope}[participant=s]
    \node[head];
    \node[event] {};
    \node[no event] {};
    \node[event] {};
  \end{scope}
  \node[yshift=.7pt] at (s-1) {$\seller$};
  \begin{scope}[participant=ba]
    \node[head];
    \node[event] {};
    \node[event] {};
    \node[event] {};
  \end{scope}
  \node at (ba-1) {$\buyerOne$};
  \begin{scope}[participant=bb]
    \node[head];
    \node[no event] {};
    \node[event] {};
    \node[no event] {};
  \end{scope}
  \node at (bb-1) {$\buyerTwo$};
  \draw[messages]
    (ba-2) edge node[msg,midway]{$y \cond{\isbn(y)}$} (s-2)
    (ba-3) edge node[msg,midway]{$y \cond{\isbn(y)}$} (bb-3)
    (s-4) edge node[msg,midway]{$z \cond{z > 0}$} (ba-4)
  ;
\end{scope}
\begin{scope}[msc=s2, yshift=-16ex, xshift=13ex]
  \begin{scope}[participant=s]
    \node[head];
    \node[no event] {};
  \end{scope}
  \node[yshift=.7pt] at (s-1) {$\seller$};
  \begin{scope}[participant=ba]
    \node[head];
    \node[event] {};
  \end{scope}
  \node at (ba-1) {$\buyerOne$};
  \begin{scope}[participant=bb]
    \node[head];
    \node[event] {};
  \end{scope}
  \node at (bb-1) {$\buyerTwo$};
  \draw[messages]
    (ba-2) edge node[msg,midway]{$b_1 \cond{b_1 > z_1}$} (bb-2)
  ;
\end{scope}
\begin{scope}[msc=s3, yshift=-27ex, final]
  \begin{scope}[participant=s]
    \node[head];
    \node[event] {};
    \node[event] {};
  \end{scope}
  \node[yshift=.7pt] at (s-1) {$\seller$};
  \begin{scope}[participant=ba]
    \node[head];
    \node[no event] {};
    \node[event] {};
  \end{scope}
  \node at (ba-1) {$\buyerOne$};
  \begin{scope}[participant=bb]
    \node[head,xshift=-5ex];
    \node[event] {};
    \node[no event] {};
  \end{scope}
  \node at (bb-1) {$\buyerTwo$};
  \draw[messages]
    (bb-2) edge node[msg,pos=0.2]{$\quitMsg$} (s-2)
    (s-3) edge node[msg,pos=0.51]{$\quitMsg$} (ba-3)
  ;
\end{scope}
\begin{scope}[msc=s4, yshift=-40ex, final]
  \begin{scope}[participant=s]
    \node[head];
    \node[event] {};
    \node[event] {};
  \end{scope}
  \node[yshift=.7pt] at (s-1) {$\seller$};
  \begin{scope}[participant=ba]
    \node[head];
    \node[event] {};
    \node[no event] {};
  \end{scope}
  \node at (ba-1) {$\buyerOne$};
  \begin{scope}[participant=bb]
    \node[head,xshift=-5ex];
    \node[no event] {};
    \node[event] {};
  \end{scope}
  \node at (bb-1) {$\buyerTwo$};
  \draw[messages]
    (ba-2) edge node[msg,midway]{$\succMsg \cond{b_1 + b_2 \geq z}$} (s-2)
    (s-3) edge node[msg,pos=0.8]{$\succMsg$} (bb-3)
  ;
\end{scope}
\begin{scope}[msc=s5, yshift=-27ex, xshift=31ex]
  \begin{scope}[participant=s]
    \node[head];
    \node[no event] {};
  \end{scope}
  \node[yshift=.7pt] at (s-1) {$\seller$};
  \begin{scope}[participant=ba]
    \node[head,xshift=-5ex];
    \node[event] {};
  \end{scope}
  \node at (ba-1) {$\buyerOne$};
  \begin{scope}[participant=bb]
    \node[head];
    \node[event] {};
  \end{scope}
  \node at (bb-1) {$\buyerTwo$};
  \draw[messages]
    (bb-2) edge node[msg,midway]{$b_2 \cond{b_2 > z_2}$} (ba-2)
  ;
\end{scope}
\begin{scope}[msc=s6, yshift=-40ex, xshift=31ex]
  \begin{scope}[participant=s]
    \node[head];
    \node[no event] {};
  \end{scope}
  \node[yshift=.7pt] at (s-1) {$\seller$};
  \begin{scope}[participant=ba]
    \node[head,xshift=-5ex];
    \node[event] {};
  \end{scope}
  \node at (ba-1) {$\buyerOne$};
  \begin{scope}[participant=bb]
    \node[head];
    \node[event] {};
  \end{scope}
  \node at (bb-1) {$\buyerTwo$};
  \draw[messages]
    (ba-2) edge node[msg,midway]{$\contMsg \cond{b_1 + b_2 < z}$} (bb-2)
  ;
\end{scope}
\draw[trans]
  (s1) edge node[left,xshift=-1ex]{\scriptsize$\initUpdLoop{z_1, z_2}{0, 0}$} (s2.north)
  (s2) edge (s3)
  (s5) edge (s4)
  (s2) edge (s5)
  (s5) edge (s6)
  ;
  \draw[trans]
  (s6.south) -- node {} ++(0,-0.15) |- ++(2.40,0) |- 
  node[yshift=1.5ex,xshift=-7ex]{\scriptsize$\initUpdLoop{z_1,z_2}{b_1,b_2}$} (s2.east)
  ;
\end{tikzpicture} 	}
	\caption{Two-bidder protocol.}
		\label{fig:two-bidder-protocol}
  \end{minipage}
  \hfill
  \begin{minipage}[b]{.4\textwidth}
	\centering
	\resizebox{.98\textwidth}{!}{
\begin{tikzpicture}[sem, node distance=2em and 1em]
    \node[state, initial above, initial text = $\seller$] (q0) {$q_{0, \seller}$};
    \node[state, below = of q0] (q1) {$q_{1, \seller}$};
    \node[state, below = of q1] (q2) {$q_{2, \seller}$};
    \node[state, below left = of q2] (q3) {$q_{3, \seller}$};
    \node[state, accepting, below = of q3] (q4) {$q_{4, \seller}$};
    \node[state, below right = of q2] (q5) {$q_{5, \seller}$};
    \node[state, accepting, below = of q5] (q6) {$q_{6, \seller}$};
    
    \path (q0) edge node[right] {$\rcv{\buyerOne}{\seller}{y \cond{\isbn(y)}}$} (q1);
    \path (q1) edge node[right] {$\snd{\seller}{\buyerOne}{z \cond{z > 0}}$} (q2);
    \path (q2) edge node[left] {$\rcv{\buyerTwo}{\seller}{\quitMsg}$} (q3);
    \path (q3) edge node[left] {$\snd{\seller}{\buyerOne}{\quitMsg}$} (q4);
    \path (q2) edge node[right] {$\rcv{\buyerOne}{\seller}{\succMsg \cond{b_1 + b_2 \geq z}}$} (q5);
    \path (q5) edge node[right] {$\snd{\seller}{\buyerTwo}{\succMsg}$} (q6);
\end{tikzpicture}
 	}
	\caption{State machine for seller $\seller$ for
		\cref{fig:two-bidder-protocol}.}
   \label{fig:two-bidder-implementation-1}
  \end{minipage}
\end{figure}
\begin{figure}[t]
  \begin{minipage}[b]{.52\textwidth}
	\centering
	\resizebox{0.88\textwidth}{!}{
\begin{tikzpicture}[sem, node distance=1.7em and .8em]
    \node[state, initial above, initial text = $\buyerOne$] (q0) {$q_{0, \buyerOne}$};
    \node[state, right = 6.5em of q0] (q1) {$q_{1, \buyerOne}$};
    \node[state, below = of q1] (q2) {$q_{2, \buyerOne}$};
    \node[state, left = 6.5em of q2] (q3) {$q_{3, \buyerOne}$};
    \node[state, below = of q3] (q4) {$q_{4, \buyerOne}$};
    \node[state, below = of q4] (q5) {$q_{5, \buyerOne}$};
    \node[state, accepting, below left = of q5] (q6) {$q_{6, \buyerOne}$};
    \node[state, below right = of q5] (q7) {$q_{7, \buyerOne}$};
    \node[state, accepting, below = 2.1em of q7] (q8) {$q_{8, \buyerOne}$};
    \node[state, below = 2.1em of q7, xshift=3em] (q9) {$q_{9, \buyerOne}$};
    
    \path (q0) edge node[above,xshift=0.5em] {$\snd{\buyerOne}{\seller}{y \cond{\isbn(y)}}$} (q1);
    \path (q1) edge node[right] {$\snd{\buyerOne}{\buyerTwo}{y \cond{\isbn(y)}}$} (q2);
    \path (q2) edge node[above,xshift=-.3em] {$\rcv{\seller}{\buyerOne}{z \cond{z > 0}}$} (q3);
    \path (q3) edge node[right] {$\initUpdLoop{z_1, z_2}{0, 0}$} (q4);
    \path (q4) edge node[right] {$\snd{\buyerOne}{\buyerTwo}{b_1 \cond{b_1 > z_1}}$} (q5);
    \path (q5) edge node[left] {$\rcv{\seller}{\buyerOne}{\quitMsg}$} (q6);
    \path (q5) edge node[right] {$\rcv{\buyerTwo}{\buyerOne}{b_2 \cond{b_2 > z_2}}$} (q7);
    \path (q7) edge node[left] {$\snd{\buyerOne}{\buyerTwo}{\succMsg \cond{b_1 + b_2 \geq z}}$} (q8);
    \path (q7) edge node[right] {$\snd{\buyerOne}{\buyerTwo}{\contMsg \cond{b_1 + b_2 < z}}$} (q9);
    \draw[semarrow] (q9.south) -- node {} ++(0,-0.15) |- ++(2.3cm,0) |- (q4) node[above,xshift=3cm] {$\initUpdLoop{z_1,z_2}{b_1,b_2}$} node[] {};
\end{tikzpicture}
 	}
	\caption{State machine for bidder $\buyerOne$ for
		\cref{fig:two-bidder-protocol}.}
   \label{fig:two-bidder-implementation-2}
  \end{minipage}
  \begin{minipage}[b]{.47\textwidth}
	\centering
	\resizebox{.80\textwidth}{!}{
\begin{tikzpicture}[sem, node distance=1.7em and .8em]
    \node[state, initial above, initial text = $\buyerTwo$] (q0) {$q_{0, \buyerTwo}$};
    \node[state, below = of q0] (q1) {$q_{1, \buyerTwo}$};
    \node[state, below = of q1] (q2) {$q_{2, \buyerTwo}$};
    \node[state, below = of q2] (q3) {$q_{3, \buyerTwo}$};
    \node[state, accepting, below left = of q3] (q4) {$q_{4, \buyerTwo}$};
    \node[state, below right = of q3] (q5) {$q_{5, \buyerTwo}$};
    \node[state, accepting, below = 2.1em of q5] (q6) {$q_{6, \buyerTwo}$};
    \node[state, below = 2.1em of q5, xshift=3em] (q7) {$q_{7, \buyerTwo}$};
    
    \path (q0) edge node[right] {$\rcv{\buyerOne}{\buyerTwo}{y \cond{\isbn(y)}}$} (q1);
    \path (q1) edge node[right] {$\initUpdLoop{z_1, z_2}{0, 0}$} (q2);
    \path (q2) edge node[right] {$\rcv{\buyerOne}{\buyerTwo}{b_1 \cond{b_1 > z_1}}$} (q3);
    \path (q3) edge node[left] {$\snd{\buyerTwo}{\seller}{\quitMsg}$} (q4);
    \path (q3) edge node[right] {$\snd{\buyerTwo}{\buyerOne}{b_2 \cond{b_2 > z_2}}$} (q5);
    \path (q5) edge node[left] {$\rcv{\seller}{\buyerTwo}{\succMsg}$} (q6);
    \path (q5) edge node[right] {$\rcv{\buyerOne}{\buyerTwo}{\contMsg \cond{b_1 + b_2 < z}}$} (q7);
    \draw[semarrow] (q7.south) -- node {} ++(0,-0.15) |- ++(2.3cm,0) |- (q2) node[above,xshift=3.2cm] {$\initUpdLoop{z_1,z_2}{b_1,b_2}$} node[] {};
\end{tikzpicture}
 	}
	\caption{State machine for bidder $\buyerTwo$ for
		\cref{fig:two-bidder-protocol}.}
   \label{fig:two-bidder-implementation-3}
  \end{minipage}
\end{figure}

\cref{fig:two-bidder-implementation-1,fig:two-bidder-implementation-2,fig:two-bidder-implementation-3} show an admissible implementation for the two-bidder protocol in \cref{fig:two-bidder-protocol}, consisting of a local implementation for each participant: $\seller$, $\buyerOne$ and $\buyerTwo$. 
The transition labels specify their local behaviors: 
$\snd{\buyerOne}{\seller}{y \cond{\isbn(y)}}$
specifies that $\buyerOne$ sends a message $y$ to $\seller$ such that $y$ satisfies $\isbn(y)$, i.e.\ $y$ is a valid ISBN number;
$\rcv{\buyerOne}{\seller}{y \cond{\isbn(y)}}$
specifies that $\seller$ receives $y$ from $\buyerOne$, and can assume $\isbn(y)$ holds of $y$.
We assume an asynchronous setting in which every pair of participants is connected by a FIFO channel. 
The implementability of \cref{fig:two-bidder-protocol} is witnessed by 
\cref{fig:two-bidder-implementation-1,fig:two-bidder-implementation-2,fig:two-bidder-implementation-3}, which 
together exhibit the same behaviors as the global protocol and is never stuck. 

To see that the implementability problem is non-trivial, consider a variant of the protocol in \cref{fig:two-bidder-protocol} where the $\succMsg$ message to $\seller$ is sent by $\buyerTwo$ instead of $\buyerOne$.
The resulting protocol is no longer implementable because $\buyerTwo$ never learns about the price $z$ of the book $y$ and is therefore unable to determine when the negotiation with $\buyerOne$ has succeeded.

Our example highlights several important expressive features of GCLTS: 
\begin{itemize}
\item Generalized sender-driven choice: after $\buyerTwo$ receives a bid from $\buyerOne$, it has the option to either send a bid back to $\buyerOne$ and continue the bidding process, or terminate the protocol by sending a $\quitMsg$ message to the bookseller, who then relays the termination message to the first bidder. Due to this choice interaction alone, the protocol is not expressible in \cite{DBLP:journals/pacmpl/00020HNY20}.
\item Infinite state: the protocol state contains registers that can be assigned values from an infinite domain. Registers are updated to store the last bid from each round $z_1$ and $z_2$, and to enforce that bidders make strictly increasing bids per round. 
\item Infinite message data: message payload values can be drawn from an infinite data domain, such as the book price $z$ and bids $b_1$ and $b_2$.
\item Dependent refinement predicates: message payloads are constrained by data refinements such as $z_1 < b_1$ and $z < b_1 + b_2$. The refinement predicates can refer to current register values in addition to data values sent in prior messages.
\item Partial information: each protocol participant only has a partial view of the global protocol state. For example, even though $\seller$ participates in the bidding phase of the protocol, it never learns about the bids $b_1$ and $b_2$ in each bidding round. In fact, the registers $z_1$ and $z_2$ that store the last bid are known only to the bidders.
\end{itemize}

The presence of these features in the class of communication protocols we consider makes checking implementability uniquely challenging.
For protocols with finite GCLTS specifications, deciding implementability in the presence of asynchrony and non-deterministic choice already presents a challenge. 
Note that finiteness here refers only to the specification, and does not mean that the underlying protocol is finite-state, nor that it contains only finite traces. 
Most existing work has therefore focused on developing projection operators that are sound but incomplete~\cite{DBLP:conf/popl/HondaYC08,DBLP:conf/sfm/CoppoDPY15,DBLP:journals/jlp/ToninhoY17,DBLP:conf/ecoop/ScalasDHY17}. 
These projection operators solve implementability and synthesis simultaneously by computing a candidate implementation, but often fail eagerly for protocols for which an implementation exists. 
\citet{DBLP:conf/cav/LiSWZ23} proposed the first sound and complete projection operator for finite, asynchronous, multiparty session types. 
The projection operator critically relies on the observation that if a global type is implementable, then a canonical implementation implements it. 
Thus, the implementability problem reduces to checking whether this canonical implementation indeed implements the global type, i.e.\ it recognizes the same set of behaviors and is deadlock-free.
Towards these ends, \citet{DBLP:conf/cav/LiSWZ23}~identify sound and complete conditions, referred to as \emph{Send Validity} and \emph{Receive Validity}, that are checked on the states of the canonical implementation. 
\begin{figure}[b]
\begin{tikzpicture}[sem] 
	\node[state, initial left=, initial text =](q0){};
	
	\node[state, right=0.4cm and 1.5cm of q0](q1){};
	\node[state, right=2.5cm of q1](q2){};
	
	\node[state, above right=0.2cm and 2.5cm of q2](q2l){};
	\node[state, below right=0.2cm and 2.5cm of q2](q2r){};
	\node[state, right=1.3cm of q2l](q3l){};
	\node[state, right=1.3cm of q3l](q4l){};
	\node[state, accepting, right=2.7cm of q4l](q5l){};
	\node[state, right=1.3cm of q2r](q3r){};
	\node[state, right=1.3cm of q3r](q4r){};
	\node[state, accepting, right=2.7cm of q4r](q5r){};
	\path(q0) edge node[sloped, pos=0.5, above] {$\msgFromTo{\procD}{\procA}{x \cond{\top}}$} (q1);
	\path(q1) edge node[sloped, pos=0.5, above] {\textcolor{gray}{\textcircled{a}} $\msgFromTo{\procA}{\procC}{y \cond{\textcolor{green}{y > x}}}$} (q2);
    \path(q1) edge node[sloped, pos=0.5, below] {\textcolor{gray}{\textcircled{\raisebox{-.8pt}{b}}} $\msgFromTo{\procA}{\procC}{y \cond{\textcolor{green}{y = x+2}}}$} (q2);
	\path(q2) edge node[sloped, pos=0.5, above] {$\msgFromTo{\procA}{\procB}{\lblmsgB \cond{even(x)}}$} (q2l);
	\path(q2) edge node[sloped, pos=0.5, below] {$\msgFromTo{\procA}{\procB}{\lblmsgM \cond{odd(x)}}$} (q2r);
	\path(q2l) edge node[sloped, pos=0.5, above] {$\msgFromTo{\procA}{\procC}{z_1}$} (q3l);
	\path(q3l) edge node[sloped, pos=0.5, above] {$\msgFromTo{\procB}{\procC}{z_2}$} (q4l);
	\path(q4l) edge node[sloped, pos=0.5, above] {$\msgFromTo{\procC}{\procA}{z \cond{z = z_1 - z_2}}$} (q5l);
	\path(q2r) edge node[sloped, pos=0.5, below] {$\msgFromTo{\procB}{\procC}{z_2}$} (q3r);
	\path(q3r) edge node[sloped, pos=0.5, below] {$\msgFromTo{\procA}{\procC}{z_1}$} (q4r);
	\path(q4r) edge node[sloped, pos=0.5, below] {$\msgFromTo{\procC}{\procA}{z \cond{z = z_2 - z_1}}$} (q5r);
\end{tikzpicture}
 	\vspace{-3ex}
	\caption{
	Two protocols:
	$\SymProt_1$ using \textcolor{gray}{\textcircled{a}} with receive order violation
	$\SymProt_1'$ using \textcolor{gray}{\textcircled{\raisebox{-.8pt}{b}}} without receive order violation.
	\label{fig:receive-violation-and-ok}
	}
\end{figure}

In the presence of dependent refinement predicates, checking these conditions is not straightforward.
Consider the examples $\SymProt_1$ (using \textcolor{gray}{\textcircled{a}}) and $\SymProt_1'$ (using \textcolor{gray}{\textcircled{\raisebox{-1pt}{b}}}) in \cref{fig:receive-violation-and-ok}, which are variations of the examples for Receive Validity~\cite{DBLP:conf/cav/LiSWZ23} featuring dependent predicates. 
A transition label $\msgFromTo{\procA}{\procC}{y \cond{y>x}}$, which is \textcolor{gray}{\textcircled{a}} for $\SymProt_1$, atomically specifies the send event by $\procA$ and the corresponding receive event by $\procC$, along with the constraint that $y$ satisfies $y>x$.
In $\SymProt_1$, participant $\procA$ chooses a branch without explicitly informing $\procC$ of their choice. 
In both branches, $\procC$ is required to subtract the second value that is sent from the first value that is sent, and send the result back to $\procA$. 
However, due to asynchrony, both messages can arrive in $\procC$'s message channels simultaneously, and $\procC$ cannot tell which value was sent first. 
Therefore, $\procC$ may subtract the values in the wrong order, rendering the protocol unimplementable. 

\citet{DBLP:conf/cav/LiSWZ23} propose one method of protocol repair: introducing a message sent by $\procC$ on each branch that creates a causal dependency between the messages from $\procA$ and $\procB$, so that $\procC$ can no longer receive them in either order. 
The incorporation of dependent refinements enables a new method of protocol repair; one that does not change the communication events among the participants. 
The newly repaired protocol is depicted in $\SymProt_1'$, in which 
the predicate on the second transition is changed from $y > x$ to $y = x+2$.
Despite the fact that $\procC$ is  still not informed of $\procA$'s choice, $\procC$ can \emph{infer} $\procA$'s choice through the parity of the first value it received from $\procA$ and thus correctly follow the protocol: if $y$ is even, $\procC$ receives from $\procA$ first, and if $y$ is odd, $\procC$ receives from $\procB$ first. 
\begin{figure}[t]
\begin{tikzpicture}[sem] 
    \node[state, initial left=, initial text =](q0){};
    
    \node[state, above right=0.3cm and 1.5cm of q0](q1){};
    \node[state, right=2.5cm of q1](q2){$q_1$};
    \node[state, right=1.5cm of q2](q3){};
    \node[state, accepting, right=1.5cm of q3](q4){};
    
    \node[state, below right=0.3cm and 1.5cm of q0](q5){};
    \node[state, right= 2.5cm of q5](q6){$q_2$};
    \node[state, accepting, above right=0.07cm and 2.5cm of q6](q6l){};
    \node[state, below right=0.07cm and 2.5cm of q6](q6r){};
    \node[state, accepting, right= 1.5cm of q6r](q7){};
    
    \path(q0) edge node[sloped, pos=0.5, above] {$\msgFromTo{\procD}{\procB}{\lblmsgB}$} (q1);
    \path(q1) edge node[sloped, pos=0.5, above] {$\msgFromTo{\procB}{\procA}{x_1 \cond{x_1 = 4}}$} (q2);
    \path(q2) edge node[sloped, pos=0.5, above] {$\msgFromTo{\procA}{\procB}{\lblmsgO}$} (q3);
    \path(q3) edge node[sloped, pos=0.5, above] {$\msgFromTo{\procB}{\procC}{\lblmsgM}$} (q4);
    
    \path(q0) edge node[sloped, pos=0.5, below] {$\msgFromTo{\procD}{\procB}{\lblmsgM}$} (q5);
    \path(q5) edge node[below] {$\msgFromTo{\procB}{\procA}{x_2 \cond{\top}}$} (q6);
    \path(q6) edge node[sloped, pos=0.5, above] {$\msgFromTo{\procB}{\procC}{\lblmsgB \cond{x_2 \neq 5}}$} (q6l);
    \path(q6) edge node[sloped, pos=0.5, below] {$\msgFromTo{\procB}{\procC}{\lblmsgM \cond{x_2 = 5}}$} (q6r);
    \path(q6r) edge node[below] {$\msgFromTo{\procA}{\procB}{\lblmsgO}$} (q7);
\end{tikzpicture}
 	\vspace{-2ex}
	\caption{
	$\SymProt_2$: An protocol with a send violation.
	\label{fig:send-violation}
	}
\end{figure}

We now turn our attention to send violations.
In the protocol shown in \cref{fig:send-violation}, $\procD$~chooses a branch and communicates its choice to $\procB$.
Participant $\procA$ is again not explicitly informed of the choice: in fact, $\procA$ can receive $4$ from $\procB$ on both branches.
At first glance, it appears as though it is safe for~$\procA$ to send $\lblmsgO$ to $\procB$ upon receiving 4 from $\procB$, because whilst $\procA$ cannot distinguish the two branches, both branches contain the transition $\msgFromTo{\procA}{\procB}{\lblmsgO}$.
Upon closer inspection, the predicate guarding the transition immediately preceding $\procA \xrightarrow{} \procB: \lblmsgO$ on the lower branch, $x_2 = 5$, is only satisfied when $\procA$ receives $5$ from $\procB$.
When $\procA$ receives $4$, the lower branch from $q_2$ is disabled, and since the upper branch from $q_2$ does not contain the transition $\msgFromTo{\procA}{\procB}{\lblmsgO}$, the protocol is not implementable.

The examples above exemplify the ways in which refinement predicates complicate implementability checking for symbolic protocols. We return to these examples, in addition to some others, in greater detail in \cref{sec:characterization} when we present our precise characterization of implementability. 
We structure the rest of the paper as follows. 
\S\ref{sec:prelim} presents relevant preliminary definitions and defines the class of communication protocols we consider. 
\Cref{sec:characterization} presents our semantic characterization of implementability for GCLTS in terms of (co)reachability, and proves that it is precise. 
\cref{sec:symbolic} describes our sound and complete reduction from the characterization in \cref{sec:characterization} to logical formulas in $\mu$CLP~\cite{DBLP:journals/pacmpl/UnnoTGK23}, and additionally presents improved complexity results under certain finiteness assumptions on the GCLTS.
\cref{sec:related-work} discusses related work and concludes. 
\section{Preliminaries}
\label{sec:prelim}
We introduce some basic concepts and notation before defining our class of protocols. 
\paragraph{Words.}
Let $\AlphAsync$ be an alphabet.
$\AlphAsync^*$ denotes the set of finite words over $\AlphAsync$, $\AlphAsync^\omega$ the set of infinite words, and $\AlphAsync^\infty$ their union $\AlphAsync^* \cup \AlphAsync^\omega$.
A word $u \in \AlphAsync^*$ is a \emph{prefix} of word $v \in \AlphAsync^\infty$, denoted $u \leq v$, if there exists $w \in \AlphAsync^\infty$ with $u \cdot w = v$;
we denote all prefixes of $u$ with $\pref(u)$. 
Given a word $w = w_0 \ldots w_n$, we use $w[i]$ to denote the i-th symbol $w_i \in \AlphAsync$, and $w[0..i]$ to denote the subword between and including $w_0$ and $w_i$, i.e.\ $w_0 \ldots w_i$.
\paragraph{Message Alphabets.}
Let $\Procs$ be a finite set of participants and $\MsgVals$ be a (possibly infinite) data domain. 
We define the set of \emph{synchronous events} $\AlphSyncSubscript \is \set{ \msgFromTo{\procA}{\procB}{\val} \mid \procA,\procB ∈ \Procs \text{ and } \val ∈ \MsgVals}$ where $\msgFromTo{\procA}{\procB}{\val}$ denotes a message exchange of $\val$ from sender $\procA$ to receiver $\procB$.
For a participant $\procA\in \Procs$, we define the alphabet $\AlphSync_\procA = \set{\msgFromTo{\procA}{\procB}{\val} \mid \procB \in \Procs,\; \val \in \MsgVals} \cup \set{\msgFromTo{\procB}{\procA}{\val} \mid \procB \in \Procs,\; \val \in \MsgVals}$, and a homomorphism~$\wproj_{\AlphSync_\procA}$, where $x \wproj_{\AlphSync_\procA} = x$ if $x \in \AlphSync_\procA$ and $\emptystring$ otherwise. 
A~synchronous event is split into a send and receive event for the respective participant, yielding \emph{asynchronous events}.
For a participant $\procA \in\Procs$, we define the alphabet
$\AlphAsync_{\procA,!} = \set{\snd{\procA}{\procB}{\val} \mid \procB \in \Procs,\; \val \in \MsgVals }$ of \emph{send} events
and the alphabet
$\AlphAsync_{\procA,?} = \set{\rcv{\procB}{\procA}{\val} \mid \procB \in \Procs,\; \val \in \MsgVals }$ of \emph{receive} events.
The event $\snd{\procA}{\procB}{\val}$ denotes participant $\procA$ sending a message $\val$ to $\procB$,
and $\rcv{\procB}{\procA}{\val}$ denotes participant $\procA$ receiving a message $\val$ from $\procB$.
We write $\AlphAsync_{\procA} = \AlphAsync_{\procA,!} \union \AlphAsync_{\procA,?}$,
$\AlphAsync_! = \Union_{\procA \in \Procs} \AlphAsync_{\procA,!}$, and
$\AlphAsync_? = \Union_{\procA \in \Procs} \AlphAsync_{\procA,?}$.
Finally, $\AlphAsyncSubscript = \AlphAsync_! \cup \AlphAsync_?$.
We define a homomorphism to map the synchronous alphabet to its asynchronous counterpart, defined as 
$
\SyncToAsync(\msgFromTo{\procA}{\procB}{\val})
\is
\snd{\procA}{\procB}{\val}. \,
\rcv{\procA}{\procB}{\val}
$.
Because $\SyncToAsync$ is injective, there exists a unique inverse, which we denote~$\SyncToAsync^{-1}$.
We say that $\procA$ is \emph{active} in $x \in \AlphAsyncSubscript$ if $x \in \AlphAsync_{\procA}$.
For each participant $\procA\in \Procs$, we define a homomorphism~$\wproj_{\AlphAsync_\procA}$, where $x \wproj_{\AlphAsync_\procA} = x$ if $x \in \AlphAsync_\procA$ and $\emptystring$ otherwise.
We write $\MsgVals(w)$ to project the send and receive events in $w$ onto their messages.
We fix $\Procs$ and~$\MsgVals$ in the remainder of the paper.
\paragraph{Labeled Transition Systems.}
A \emph{labeled transition system} (LTS) is a tuple
$\mathcal{S} = (S, \AlphSync, T, s_0, F)$
where
$S$ is a set of states, $\AlphSync$ is a set of labels,
$T$ is a set of transitions from $S \times \AlphSync \times S$,
$F \subseteq S$ is a set of final states,
and $s_0 \in S$ is the initial state. 
We use $p \xrightarrow{\alpha} q$ to denote the transition $(p, \alpha, q) \in T$.
Runs and traces of an LTS are defined in the expected way.
A run is $\emph{maximal}$ if it is either finite and ends in a final state, or is infinite.
The language of an LTS $\mathcal{S}$, denoted $\lang(\mathcal{S})$, is defined as the set of maximal traces.
A state $s \in S$ is a \emph{deadlock} if it is not final and has no outgoing transitions.
An LTS is \emph{deadlock-free} if no reachable state is a deadlock.
Given an LTS $\mathcal{S} = (S, \AlphSync, T, s_0, F)$ and a state $s \in S$, we use $\mathcal{S}_{s}$ to denote the LTS obtained by replacing $s_0$ with $s$ as the initial state: $(S, \AlphSync, T, s, F)$.
\subsection{Global Communicating Labeled Transition Systems (GCLTS)}
We use LTS over the synchronous alphabet~$\AlphSyncSubscript$ to model communication protocols from a global perspective. 
We impose three more conditions on the class of LTSs we consider: 
that final states do not contain outgoing transitions, that multiple outgoing transitions from a state share a sender, and that the LTS is deadlock-free.
\begin{definition}[Global communicating LTS]
	\label{def:gclts} 
An LTS $\mathcal{S} = (S, \AlphSyncSubscript, T, s_0, F)$ is a \emph{global communicating labeled transition system} (GCLTS) if the following conditions hold:
\begin{enumerate}[(1)]
\item \label{cond-gclts:1} \emph{sink-finality}:
for every final state $s \in F$, there does not exist $l \in \AlphSyncSubscript$ and $s' \in S$ with $s \xrightarrow{l} s' \in T$;
 \item \label{cond-gclts:2} \emph{sender-driven choice}:
 for all states $s, s_1, s_2 \in S$ and $l_1, l_2 \in \AlphSyncSubscript$ such that
 $s \xrightarrow{l_i} s_i \in T$ for $i \in \set{1,2}$,
 there is a participant $\procA \in \Procs$
 who is the sender for both, \ie
 $\SyncToAsync(l_i) \in \AlphAsync_{\procA,!}$ for $i \in \set{1,2}$, and furthermore $l_1 = l_2 \implies s_1 = s_2$;
 \item \label{cond-gclts:3} \emph{deadlock freedom}: $\mathcal{S}$ is deadlock-free.
\end{enumerate}
\end{definition}
Condition \ref{cond-gclts:1} is ubiquitous in the domain of multiparty session types and was also shown to require special treatment in the literature on high-level message sequence charts \cite{DBLP:conf/sefm/DanHC10}. 

Condition \ref{cond-gclts:2} is a generalisation of most multiparty session types fragments, which require not only a dedicated sender but also a dedicated receiver. 
This more restrictive condition is called \emph{directed choice}.
In contrast, \emph{mixed choice} lifts all restrictions on choice, and amounts to only requiring determinism. 
\citet{DBLP:journals/tcs/Lohrey03} showed that 
implementability
is undecidable for HMSCs satisfying determinism and Condition \ref{cond-gclts:3}. 
\citet{DBLP:phd/dnb/Stutz24} showed that implementability remains undecidable for mixed choice global 
multiparty session types satisfying determinism and Conditions \ref{cond-gclts:1} and \ref{cond-gclts:3}. 
Sender-driven choice thus represents a good middle ground, allowing to express interesting communication patterns while retaining decidability of implementability. 

Condition \ref{cond-gclts:3} simply requires that protocols do not specify deadlocking behaviors.

In the remainder of the paper we refer to a GCLTS simply as a \emph{protocol}.
\paragraph{Restricting Protocols to Participants.}
From a protocol $\mathcal{S}$, we can define a local protocol for each participant $\procA$ via domain restriction to $\AlphAsync_\procA$. 
Formally, given a protocol $\mathcal{S} = (S, \AlphSyncSubscript, T, s_0, F)$, 
we define
$\mathcal{S_\procA} \is
(S, \AlphSync_\procA \dunion \set{\emptystring}, T_\procA, s_0, F)$
where
$T_\procA \is
\set{s \xrightarrow{l \wproj_{\AlphSync_\procA}} s'
\mid s \xrightarrow{l} s' \in T}$
for a participant $\procA \in \Procs$.
\paragraph{Asynchronous Protocol Semantics.}
Note that a protocol is specified using the synchronous alphabet~$\AlphSyncSubscript$. 
To define the \emph{asynchronous} semantics of a protocol $\mathcal{S}$ we first map finite and infinite words of $\mathcal{S}$ onto their asynchronous counterpart using $\SyncToAsync$, thus obtaining a set of asynchronous words in which matching send and receive events are adjacent to each other. 
In an asynchronous, FIFO network, two events are independent if they are not related by the \emph{happened-before} relation~\cite{DBLP:journals/cacm/Lamport78}. 
For example, any two send events from distinct senders are independent. 
Consequently, two words are \emph{indistinguishable} if any asynchronous, FIFO implementation that recognizes one word must recognize the other, e.g. 
$w\cdot\snd{\procA}{\procB}{\val}\cdot\snd{\procC}{\procD}{\val'}\cdot u$ 
and 
$w\cdot\snd{\procC}{\procD}{\val'}\cdot\snd{\procA}{\procB}{\val}\cdot u$, where 
$\procA ≠ \procC$.  
We define our protocol semantics as the set of \emph{channel-compliant}~\cite{DBLP:conf/concur/MajumdarMSZ21} words that are closed under this notion of indistinguishability. 
Channel compliance characterizes words that respect FIFO order, i.e. receive events appear after their matching send event, and the order of receive events follows that of send events in each~channel. 
\begin{definition}[Channel compliance] 
	Let $w \in \AlphAsyncSubscript^\infty$. We say that $w$ is \emph{channel-compliant} if for all prefixes $w' \leq w$, for all $\procA\neq \procB \in \Procs$, 
	$\MsgVals(w' \wproj_{\rcv{\procA}{\procB}{\_}}) \leq 
	\MsgVals(w' \wproj_{\snd{\procA}{\procB}{\_}})$.  
\end{definition} 
The asynchronous semantics of a protocol is defined as follows: 
\begin{align*}
\small
\interswaplang(\mathcal{S}) = \; & \{ w' \!\in\! \AlphAsyncSubscript^* \mid \exists w \in \AlphAsyncSubscript^*. \, w \!\in\! \SyncToAsync(\lang(\mathcal{S})) \land w' \text{ is channel-compliant } 
\\ & \hspace{2.5cm}
\land \forall\procA \!\in\! \Procs.~w' \wproj_{\AlphAsync_\procA} \!=\! w \wproj_{\AlphAsync_\procA}\} \\
	\cup {}\; & \{ w' \!\in\! \AlphAsyncSubscript^\omega\! \mid \exists w \in \AlphAsyncSubscript^\omega. \, w \!\in\! \SyncToAsync(\lang(\mathcal{S})) \land \forall v' \leq w'.~\exists u, u' \in \AlphAsyncSubscript^*.~
	\\ & 	\hspace{2.5cm} v' \cdot u' \text{ is channel-compliant}~\land~
	u \leq w~\land \forall \procA \in \Procs. ~(v' \cdot u') \wproj_{\AlphAsync_\procA} = u \wproj_{\AlphAsync_\procA}
 \}
\enspace . 
\end{align*}
Membership of infinite words is defined in terms of their prefixes: every prefix $v'$ must be channel-compliant, and moreover extensible to a word $v'u'$ that is indistinguishable from another prefix already in the language. 
Since we do not make any fairness assumptions on scheduling, the semantics of infinite words includes traces such as 
$(\snd{\procA}{\procB}{\val})^ω$
for the protocol 
$(\snd{\procA}{\procB}{\val}.\rcv{\procA}{\procB}{\val})^ω$, where only the sender is scheduled. 
Membership of finite words follows standard MSC~semantics. 

In the remainder of the paper, we overload notation and use $\lang(\mathcal{S})$ to denote $\interswaplang(\mathcal{S})$. 
\paragraph{Communicating LTS}
We use communicating LTS to model the local behaviors of participants:
$\mathcal{T} = \CLTS{T}$ is a \emph{communicating labeled transition system} (CLTS) over $\Procs$ and~$\MsgVals$ if
${T}_\procA$
is a deterministic LTS
over~$\AlphAsync_\procA$ for every $\procA\in\Procs$, denoted by
$(Q_\procA, \AlphAsync_\procA, \delta_\procA, q_{0, \procA}, F_\procA)$.
Let 
$\prod_{\procA \in \Procs} Q_\procA$ 
denote the set of global states and
\mbox{$\channels = \set{(\channel{\procA}{\procB}) \mid \procA,\procB\in \Procs, \procA\neq \procB}$}
denote the set of channels. 
A~\emph{configuration} of $\mathcal{A}$ is a pair $(\vec{s}, \xi)$, where $\vec{s}\,$ is a global state and
$\xi : \channels \rightarrow \MsgVals^*$
is a mapping from each channel to a sequence of messages.
We use $\vec{s}_\procA$ to denote the state of $\procA$ in $\vec{s}$.
The CLTS transition relation, denoted $\rightarrow$, is defined as follows. 
\begin{itemize}
	\item
	$(\vec{s},\xi) \xrightarrow{\snd{\procA}{\procB}{\val}} (\pvec{s}',\xi')$ if
	$(\vec{s}_\procA, \snd{\procA}{\procB}{\val}, \pvec{s}'_\procA)\in\delta_\procA$,
	$\vec{s}_\procC = \pvec{s}'_\procC$ for every participant $\procC \neq \procA$,
	$\xi'(\channel{\procA}{\procB}) =  \xi(\channel{\procA}{\procB})\cdot\val$ and $\xi'(c) = \xi(c)$ for every other channel $c\in \channels$.
	
	\item
	$(\vec{s},\xi) \xrightarrow{\rcv{\procA}{\procB}{\val}} (\pvec{s}',\xi')$ if
	$(\vec{s}_\procB, \rcv{\procA}{\procB}{\val}, \pvec{s}'_\procB)\in\delta_\procB$,
	$\vec{s}_\procC = \pvec{s}'_\procC$ for every participant $\procC \neq \procB$,
	$\xi(\channel{\procA}{\procB}) = \val\cdot \xi'(\channel{\procA}{\procB})$
	and $\xi'(c) = \xi(c)$ for every other channel $c\in \channels$.
\end{itemize}
In the initial configuration $(\vec{s}_0, \xi_0)$, each participant's state in $\vec{s}_0$ is the initial state $q_{0,\procA}$ of $A_\procA$, and $\xi_0$ maps each channel to $\emptystring$.
A configuration $(\vec{s}, \xi)$ is \emph{final} iff $\vec{s}_\procA$ is final for every $\procA$ and $\xi$ maps each channel to~$\emptystring$.
Runs and traces are defined in the expected way. 
A run is \emph{maximal} if either it is finite and ends in a final configuration, or it is infinite. 
The language $\lang(\mathcal{T})$ of the CLTS $\mathcal{T}$ is defined as the set of maximal traces.
A configuration $(\vec{s}, \xi)$ is a \emph{deadlock} if it is not final and has no outgoing transitions.
A CLTS is \emph{deadlock-free} if no reachable configuration is a~deadlock.

Observe that in a CLTS, send transitions are always enabled, whereas receive transitions are only enabled if the message exists at the head of its corresponding channel.
Communicating state machines \cite{DBLP:journals/jacm/BrandZ83} are a special case of CLTS where the LTS for each participant $\procA \in \Procs$ is a deterministic finite state machine.
Note that CLTS describe asynchronous communication with message channels of unbounded size. Thus, they differ from Zielonka's \emph{asynchronous automata}~\cite{DBLP:journals/ita/Zielonka87}, which actually describe \emph{synchronously communicating systems}~\cite{Mukund2002}. 
We refer the reader to \cite{DBLP:books/ws/95/DR1995} for further details. \fs{FIX: citet?}

Finally, we define protocol implementability. 
\begin{definition}[Protocol Implementability]
	\label{def:lts-implementability}
	A protocol $\mathcal{S}$ is \emph{implementable} if there exists a CLTS $\CLTS{T}$ such that the following two properties hold:
	\begin{inparaenum}[(i)]
		\item \label{def:lts-implementability-protocol-fidelity}
		\emph{protocol fidelity}: $\lang(\CLTS{T}) = \lang(\mathcal{S})$, and
		\item \label{def:lts-implementability-deadlock-freedom}
		\emph{deadlock freedom}: $\CLTS{T}$ is deadlock-free.
	\end{inparaenum}
	We say that $\CLTS{T}$ implements $\mathcal{S}$.
\end{definition}
A notion of implementability that relaxes language equality to language inclusion has been studied as protocol refinement~\cite{DBLP:conf/esop/LiSW24}. 
Alternatively, one can expand the set of protocol behaviors to include deadlocking behaviors, resulting in a notion of implementability that replaces language equality with prefix set equality, and waives the requirement of deadlock freedom. 
\subsection{Symbolic protocols with dependent refinements}
We now introduce our model for finitely representing infinite state protocols.
We refer to these representations simply as \emph{symbolic protocols}.
\Cref{fig:two-bidder-symbolic} shows the two-bidder protocol from \cref{fig:two-bidder-protocol} expressed as a symbolic protocol. 

\begin{figure}
	\begin{minipage}[b]{.82\textwidth}
		\centering
		\resizebox{1.03\textwidth}{!}{
			\begin{tikzpicture}[sem, node distance=2em and 1em]
				
				\node[state, initial left, initial text = ] (q0) {$q_{0}$};
				\node[state, right = 7.7em of q0] (q1) {$q_{1}$};
				\node[state, right = 7.7em of q1] (q2) {$q_{2}$};
				\node[state, right = 7.7em of q2] (q3) {$q_{3}$};
				\node[state, below = 3em of q3] (q5) {$q_{4}$};
				\node[state, below left = 2em and 3.5em of q5] (q6) {$q_{5}$};
				\node[state, accepting, below = of q6] (qf) {$q_{6}$};
				
				\node[state, below right = 2em and 3.5em of q5] (q7) {$q_{7}$};
				\node[state, below = of q7] (q8) {$q_{8}$};
				\node[below = 0em of q0] {
					{\scriptsize 
						$
						\left\{ \hspace{-1.5ex}
						\begin{array}{c}
							r_y = 0 
							\, \land \, r_z = 0 \phantom{.} \\
							\,\, \land \, r_{z_1} = 0 
							\, \land \, r_{z_2} = 0 \phantom{.} \\
						\end{array}
						\hspace{-1.5ex} \right\}
						$
					}
				};
				\path (q0) edge node[above,pos=.5] {$\msgFromTo{\buyerOne}{\seller}{y 
						\left\{ \hspace{-1.5ex}
						\begin{array}{c}
							\isbn(y) \\
							\, \land \, r_y' = y \phantom{.} 
						\end{array}
						\hspace{-1.5ex} \right\}
					}$} (q1);
				\path (q1) edge node[above,pos=.5] {$\msgFromTo{\buyerOne}{\buyerTwo}{y 
						\cond{y = r_y}
					}$} (q2);
				\path (q2) edge node[above,pos=.5] {$\msgFromTo{\seller}{\buyerOne}{z \left\{ \hspace{-1.5ex} 
						\begin{array}{c}
							z > 0 \\ 
							\, \land \, r_z' = z \phantom{.} 
						\end{array} \hspace{-1.5ex} \right\}}$} (q3);
				\path (q3) edge node[right] {$\msgFromTo{\buyerOne}{\buyerTwo}{b_1
						\left\{ \hspace{-1.5ex}
						\begin{array}{c}
							b_1 > r_{z_1} \\
							\, \land \, r'_{z_1} = b_1 \phantom{.} 
						\end{array}
						\hspace{-1.5ex} \right\}
					}$} (q5);
				\path (q5) edge node[left,yshift=.5ex] {$\msgFromTo{\buyerTwo}{\seller}{x \cond{x=\quitMsg}}$} (q6);
				\path (q6) edge node[left] {$\msgFromTo{\seller}{\buyerOne}{x \cond{x=\quitMsg}}$} (qf);
				\path (q5) edge node[right,yshift=.5ex] {$\msgFromTo{\buyerTwo}{\buyerOne}{b_2
						\left\{ \hspace{-1.5ex}
						\begin{array}{c}
							\,\, b_2 > r_{z_2} \\ 
							\, \land \, r'_{z_2} = b_2 
						\end{array}
						\hspace{-1.5ex} \right\}
					}$} (q7);
				\path (q7) edge node[right] {$\msgFromTo{\buyerOne}{\seller}{x
						\left\{ \hspace{-1.5ex}
						\begin{array}{c}
							\, x = \succMsg \phantom{.} \\ 
							\, \land \, r_{z_1} + r_{z_2} \geq r_z \phantom{.} \\
						\end{array}
						\hspace{-1.5ex} \right\}
					}$} (q8);
				\path (q8) edge node[below,pos=.45] {$\msgFromTo{\seller}{\buyerTwo}{x \cond{x=\succMsg}}$} (qf);
				\draw[semarrow] (q7.east) -- node {} ++(3,0) |- (q3) node[above,xshift=2.8cm]
				{$\msgFromTo{\buyerOne}{\buyerTwo}{
						x
						\left\{ \hspace{-1.5ex}
						\begin{array}{c}
							x = \contMsg \\
							\, \land \, r_{z_1} + r_{z_2} < r_z \phantom{.}
						\end{array}
						\hspace{-1.5ex} \right\}
					}$} node[] {};
				
			\end{tikzpicture}
			
		}
		\caption{The two-bidder protocol from \cref{fig:two-bidder-protocol} as a symbolic protocol with registers $r_z$, $r_y$, $r_{z_1}$, and $r_{z_2}$.}
		\label{fig:two-bidder-symbolic}
	\end{minipage}
\end{figure}

The formal definition of this class of symbolic protocols is given below.
In this definition, we assume a fixed but unspecified first-order background theory of message values (\eg linear integer arithmetic).
We assume standard syntax and semantics of first-order formulas and denote by $\formulas$ the set of first-order formulas with free variables drawn from an infinite set $X$.
We assume that these variables are interpreted over the set of message values $\MsgVals$.
For a valuation $\rho \in X \to \MsgVals$ and $\varphi \in \formulas(X)$, we write $\rho \models \phi$ to indicate that $\varphi$ evaluates to true under $\rho$ in the underlying theory.
 
\begin{definition}[Symbolic protocol]
	A symbolic protocol is a tuple 
	$\SymProt = (S, R, \Delta, s_{0}, \rho_0, F)$ 
        where
        \begin{itemize}
         \item $S$ is a finite set of control states,
         \item $R$ is a finite set of register variables,
         \item $\Delta \subseteq S \times \Procs \times X \times \Procs \times \formulas \times S$ is a finite set that consists of symbolic transitions of the form $s \xrightarrow{\msgFromToNS{\procA}{\procB}{x \set{\varphi}}} s'$ where the formula $\varphi$ with free variables $R \uplus R' \uplus \set{x}$ expresses a transition constraint that relates the old and new register values ($R$ and $R'$), and the sent value $x$,
		 \item $s_0 \in S$ is the initial control state,
		 \item $\rho_0 \from R \rightarrow \MsgVals$ is the initial register assignment, and
		 \item $F \subseteq S$ is a set of final states.
        \end{itemize}
\end{definition}
To streamline our definition, we choose not to separate 
register update expressions from predicates describing the communication. 
Rather, we specify everything together in a single formula~$\varphi$, that can only talk about the current value being communicated, and the register values in the pre- and post-state.
Thus, $\varphi$ can describe values that are communicated between participants, in addition to register assignments and updates. 
For example, $\msgFromToNS{\procA}{\procB}{x \set{even(x) \land r_1' = r_1 + 2 \land r_2' = x}}$ describes $\procA$ sending $\procB$ an even number $x$, incrementing the value of register $r_1$ by 2, and storing the value of $x$ in register $r_2$. 
We formally specify the two-bidder protocol from \cref{fig:two-bidder-protocol} as a symbolic protocol in \cref{fig:two-bidder-symbolic} for demonstration purposes; note that the transition predicate $ISBN(y)$ from $q_1$ to $q_2$ is replaced with an equality. 
For readability and conciseness, we employ the following conventions from now on.
We treat communication variables as registers that are automatically assigned the communicated value, \eg
$\msgFromTo{\seller}{\buyerOne}{z \cond{z > 0}}$
should be understood as
$\msgFromTo{\seller}{\buyerOne}{x \cond{x > 0 \land z' = x}}$ for some fresh $x$.
Furthermore, if the communicated value is a constant $c$ and there is no need to store this value, we inline it and write
$\msgFromTo{\seller}{\buyerTwo}{\succMsg \cond{\top}}$
instead of
$\msgFromTo{\seller}{\buyerTwo}{x \cond{x=\succMsg}}$.
We may omit the condition $\top$,
turning
$\msgFromTo{\seller}{\buyerTwo}{\succMsg \cond{\top}}$
into
$\msgFromTo{\seller}{\buyerTwo}{\succMsg}$.

Symbolic protocols are specification-wise similar to symbolic register automata~\cite{DBLP:conf/cav/DAntoniFS019}, but allow more general patterns of register manipulation and do not a priori require formulas to come from an effective Boolean algebra.
Symbolic protocols can be seen as a finite description of an infinite-state LTS, whose concrete states consist of a control state along with an assignment for the register variables $R$. Transitions are concrete communication events that optionally modify register values. 
We formally define the concretization of a symbolic protocol below. 
\begin{definition}[Concretization of symbolic protocols]
	For a symbolic protocol $\SymProt=(S, R, \Delta, s_{0}, \rho_0, F)$, let $\mathcal{S}_{\SymProt}$ denote its concrete protocol.
	The set of states of $\mathcal{S}_{\SymProt}$ is $S \times (R \to \MsgVals)$.
	
	Transitions in $\mathcal{S}_{\SymProt}$ are defined as follows: 
	\[
	\infer
	{
		s_1 \xrightarrow{\msgFromToNS{\procA}{\procB}{x \set{\varphi}}} s_2 \in \Delta \\ 
		\rho_1\rho_2'[x \mapsto v] \models \varphi
	}
	{
		(s_1, \rho_1) \xrightarrow{\msgFromToNS{\procA}{\procB}{v}} (s_2, \rho_2)
	}
	\]
        Intuitively, the rule says that a symbolic transition from $s_1$ to $s_2$ can be instantiated to one from $(s_1, \rho_1)$ to $(s_2, \rho_2)$ on value $v$ when $v$ together with the register assignments in the pre- and post-state satisfy the transition constraint $\varphi$. Here, we use juxtaposition $\rho_1\rho_2'$ of register assignments to express their disjoint union. The assignment $\rho_2'$ is obtained from $\rho_2$ by replacing registers $r$ in the domain with their primed version in $R'$. 
	The initial state is defined as $(s_0, \rho_0)$. 
	A state $(s, \rho)$ in $\mathcal{S}_{\SymProt}$ is final when $s \in F$. 
\end{definition}

Thus, the concrete protocol $\mathcal{S}_{\SymProt}$ is a protocol over the alphabet $\AlphSyncSubscript$. 
The language of a symbolic protocol $\SymProt$ is defined as the language of its concretization $\mathcal{S}_{\SymProt}$. 
Consequently, a symbolic protocol is implementable if its concretization is implementable.
\section{Characterizing Protocol Implementability}
\label{sec:characterization}
We motivate our precise characterization of protocol implementability through examples of non-implementable protocols, and show that seemingly disparate sources of non-implementability share a unified semantic explanation. 
Recall the protocol~$\SymProt_1$ from \cref{sec:overview} with a receiver violation, depicted in \cref{fig:receive-violation-and-ok}. 
The infinite-state LTS $\SymProt_1$ contains the two concrete run prefixes depicted in
\cref{fig:receive-violation-run-2}, where the values of $x, y$ are $2,3$ and $1,3$ respectively.
\begin{figure}[b]
\begin{tabular}{rl}
(a) \hspace{-1em} &
\begin{tikzpicture}[sem] 
	\node[state, initial left=, initial text =](q0){};
	
	\node[state, right=0.4cm and 1.5cm of q0](q1){};
	\node[state, right=2.5cm of q1](q2){};
	
	\node[state, above right=0.1cm and 2cm of q2](q2l){};
	\node[state, right=1.5cm of q2l](q3l){};
	\node[state, right=1.5cm of q3l](q4l){};
	\node[state, accepting, right=1.5cm of q4l](q5l){};
	
	\path(q0) edge node[sloped, pos=0.5, above] {$\msgFromTo{\procD}{\procA}{2}$} (q1);
	\path(q1) edge node[sloped, pos=0.5, above] {$\msgFromTo{\procA}{\procC}{3}$} (q2);
	\path(q2) edge node[sloped, pos=0.5, above] {$\msgFromTo{\procA}{\procB}{\msgB}$} (q2l);
	\path(q2l) edge node[sloped, pos=0.5, above] {$\msgFromTo{\procA}{\procC}{\msgO}$} (q3l);
	\path(q3l) edge node[sloped, pos=0.5, above] {$\msgFromTo{\procB}{\procC}{\msgO}$} (q4l);
	\path(q4l) edge node[sloped, pos=0.5, above] {$\msgFromTo{\procC}{\procA}{\msgB}$} (q5l);
\end{tikzpicture}
 \\
\raisebox{4ex}{(b)} \hspace{-1em} &
\begin{tikzpicture}[sem] 
	\node[state, initial left=, initial text =](q0){};
	
	\node[state, right=0.4cm and 1.5cm of q0](q1){};
	\node[state, right=2.5cm of q1](q2){};
	
	\node[state, below right=0.1cm and 2cm of q2](q2r){};
	\node[state, right=1.5cm of q2r](q3r){};
	\node[state, right=1.5cm of q3r](q4r){};
	\node[state, accepting, right=1.5cm of q4r](q5r){};
	
	\path(q0) edge node[sloped, pos=0.5, above] {$\msgFromTo{\procD}{\procA}{1}$} (q1);
	\path(q1) edge node[sloped, pos=0.5, above] {$\msgFromTo{\procA}{\procC}{3}$} (q2);
	\path(q2) edge node[sloped, pos=0.5, below] {$\msgFromTo{\procA}{\procB}{\msgM}$} (q2r);
	\path(q2r) edge node[sloped, pos=0.5, below] {$\msgFromTo{\procB}{\procC}{\msgO}$} (q3r);
	\path(q3r) edge node[sloped, pos=0.5, below] {$\msgFromTo{\procA}{\procC}{\msgO}$} (q4r);
	\path(q4r) edge node[sloped, pos=0.5, below] {$\msgFromTo{\procC}{\procA}{\msgM}$} (q5r);
\end{tikzpicture}
 \end{tabular}
	\caption{
		Two concrete runs of $\SymProt_1$ (\cref{fig:receive-violation-and-ok}):
		(a) with $x = 2$ and $y = 3$ and
		(b) with $x = 1$ and $y = 3$.
		\label{fig:receive-violation-run-2}
	}
\end{figure}

Inspecting $\SymProt_1$'s specification reveals that the protocol expects $\procC$ to receive messages from $\procA$ and~$\procB$ in a different order depending on the branch that $\procB$ chooses to follow.
However, this expectation is unreasonable in a distributed setting.  
Between the two concrete runs,
$\procC$'s partial view of the protocol's behavior is the same: $\procC$ receives a value 3 from $\procA$, yet $\procC$ is expected to receive in $\procA, \procB$ order in one run, and receive in $\procB, \procA$ order in the other.

Recall the protocol $\SymProt_2$ from \cref{sec:overview} with a sender violation, depicted in
\cref{fig:send-violation}.
\begin{figure}[t]
\begin{minipage}[b]{.52\textwidth}
\begin{tikzpicture}[sem] 
	\node[state, initial left=, initial text =](q0){};
	
	\node[state, above right=0.3cm and 1.3cm of q0](q1){};
	\node[state, right=1.3cm of q1](q2){$q_1$};
	\node[state, right=1.3cm of q2](q3){};
	\node[state, accepting, right=1.3cm of q3](q4){};

	\path(q0) edge node[sloped, pos=0.5, above] {$\msgFromTo{\procD}{\procB}{\msgB}$} (q1);
	\path(q1) edge node[sloped, pos=0.5, above] {$\msgFromTo{\procB}{\procA}{4}$} (q2);
	\path(q2) edge node[sloped, pos=0.5, above] {$\msgFromTo{\procA}{\procB}{\msgO}$} (q3);
	\path(q3) edge node[sloped, pos=0.5, above] {$\msgFromTo{\procB}{\procC}{\msgM}$} (q4);

\end{tikzpicture}
 	\caption{
		A concrete run of $\SymProt_2$ (\cref{fig:send-violation}) with $\procD$ choosing the top branch.
		\label{fig:send-violation-run-1}
	}
\end{minipage}
\hfill
\begin{minipage}[b]{.45\textwidth}
\begin{tikzpicture}[sem] 
	\node[state, initial left=, initial text =](q0){};

	\node[state, below right=0.3cm and 1.3cm of q0](q5){};
	\node[state, right= 1.3cm of q5](q6){$q_2$};
	\node[state, accepting, above right=0.07cm and 1.3cm of q6](q6l){};

	\path(q0) edge node[sloped, pos=0.5, below] {$\msgFromTo{\procD}{\procB}{\msgM}$} (q5);
	\path(q5) edge node[below] {$\msgFromTo{\procB}{\procA}{4}$} (q6);
	\path(q6) edge node[sloped, pos=0.5, above] {$\msgFromTo{\procB}{\procC}{\msgB}$} (q6l);
\end{tikzpicture}
 	\caption{
		A concrete run of $\SymProt_2$ (\cref{fig:send-violation}) with \\ $\procD$ choosing the bottom branch and $x_2 = 4$.
		\label{fig:send-violation-run-2} 
	}
\end{minipage}
\end{figure}
Again inspecting $\SymProt_2$'s specification, the branching structure imposes the expectation that on the top branch, $\procA$ should send $\procB$ an $\msgO$ message, whereas on the bottom branch, $\procA$ should immediately terminate.
The two concrete runs in \cref{fig:send-violation-run-1} and \cref{fig:send-violation-run-2} again demonstrate that this expectation is unreasonable: $\procA$~receives the value 3 from $\procB$ in both runs, but in one run is expected to send a message, whereas in the other is expected to terminate. 

The non-implementability in the examples above can be attributed to insufficient local information about protocol control flow. 
This source of non-implementability is inherent to the expressive power of branching choice in protocol specifications, and is present even in finite protocols with more restricted choice constructs. 
While most existing works soundly detect insufficient local information through conservative projection algorithms \cite{DBLP:conf/popl/HondaYC08,DBLP:conf/sfm/CoppoDPY15,DBLP:journals/jlp/ToninhoY17,DBLP:conf/ecoop/ScalasDHY17}, \citet{DBLP:conf/cav/LiSWZ23} give a complete characterization.
To check implementability,
they
first obtain a candidate implementation by restricting the global protocol onto each participant's alphabet, and then determinizing the resulting finite state automaton.
Then,
they
check sound and complete conditions directly on the states of the candidate implementation.

Our first observation towards a precise characterization is that implementability can be checked directly on the global protocol specification, without synthesizing a candidate implementation upfront. 
This is especially important in the general case, when synthesizing a~candidate implementation is itself challenging and not always possible.
Our analysis of the protocols above shows that non-implementability can be blamed solely on the existence of certain states in the concrete LTS represented by the global protocol. 
In fact, we show in \S\ref{sec:symbolic} that the implementability check for global types by \citet{DBLP:conf/cav/LiSWZ23} can be made more efficient by forgoing the synthesis~step. 

Let us now turn our attention to a different source of non-implementability that is unique to the setting of dependent data refinements. 
Consider the following pair of symbolic protocols $\SymProt_3$ and $\SymProt_3'$, depicted in \cref{fig:unrealizable-local-type} and \cref{fig:realizable-local-type}. 
\begin{figure}[b]
\begin{minipage}[b]{.47\textwidth}
\begin{tikzpicture}[sem] 
	
	\node[state, initial left=, initial text =](s0){};
	\node[state, right=1.4cm of s0](s1){};
	\node[state, right=1.8cm of s1](s2){};
	\node[state, accepting, right=1.8cm of s2](s3){};

	\path(s0) edge node[sloped, pos=0.5, above] {$\msgFromTo{\procA}{\procB}{x \set{\top}}$} (s1);
	\path(s1) edge node[sloped, pos=0.5, above] {$\msgFromTo{\procB}{\procC}{y \set{y > x}}$} (s2);
	\path(s2) edge node[sloped, pos=0.5, above] {$\msgFromTo{\procC}{\procA}{z \set{z > x}}$} (s3);
\end{tikzpicture}
 	\caption{$\SymProt_3$: A non-implementable protocol with \\ dependent refinements.
	\label{fig:unimplementable-local-type}
	\label{fig:unrealizable-local-type}}
\end{minipage}
\hfill
\begin{minipage}[b]{.47\textwidth}
\begin{tikzpicture}[sem] 
	
	\node[state, initial left=, initial text =](s0){};
	\node[state, right=1.4cm of s0](s1){};
	\node[state, right=1.8cm of s1](s2){};
	\node[state, accepting, right=1.8cm of s2](s3){};

	\path(s0) edge node[sloped, pos=0.5, above] {$\msgFromTo{\procA}{\procB}{x \set{\top}}$} (s1);
	\path(s1) edge node[sloped, pos=0.5, above] {$\msgFromTo{\procB}{\procC}{y \set{y = x}}$} (s2);
	\path(s2) edge node[sloped, pos=0.5, above] {$\msgFromTo{\procC}{\procA}{z \set{z > x}}$} (s3);
\end{tikzpicture}
 	\caption{$\SymProt_3'$: An implementable protocol with \\ dependent refinements.
	\label{fig:implementable-local-type}
	\label{fig:realizable-local-type}}
\end{minipage}
\end{figure}

Non-implementability is again caused by insufficient local information, but this time with respect to message data rather than control flow: in fact, no branching choice appears in this pair of simple protocols. 
The problem instead arises in the fact that in both $\SymProt_3$ and $\SymProt_3'$, $\procC$ does not know the value of $x$. 
While an implementation for $\procC$ could produce a subset of $\SymProt_3$'s behaviors (\eg by sending $z$ such that $z > y$), or produce a superset of $\SymProt_3$'s behaviors (\eg by sending all values for $z$), no implementation can produce exactly the specified behaviors, as required by protocol fidelity. 
\citet{DBLP:journals/pacmpl/00020HNY20} address partial information of protocol variables by syntactically classifying whether a variable is known or unknown to a participant, and annotating the variables accordingly in the typing context: a variable is known to its sender and receiver, and unknown to all other participants. 
However, this syntactic analysis is itself insufficient, as demonstrated by these examples: both protocols yield the same classification of variables per participant, yet one is implementable and the other is not.

We instead turn to concrete runs of $\SymProt_3$ to find the source of non-implementability. 
Let us consider the concrete runs of $\SymProt_3$
depicted in
\cref{fig:unimplementable-local-type-runs}
, where the values of $x,y$ are $2,4$ and $3,4$~respectively.
\begin{figure}[t]
	\begin{tabular}{rl}
		(a) \hspace{-1em} &
\begin{tikzpicture}[sem] 
	
	\node[state, initial left=, initial text =](s0){};
	\node[state, right=1.4cm of s0](s1){};
	\node[state, right=1.8cm of s1](s2){};
	\node[state, accepting, right=1.8cm of s2](s3){};

	\path(s0) edge node[sloped, pos=0.5, above] {$\msgFromTo{\procA}{\procB}{2}$} (s1);
	\path(s1) edge node[sloped, pos=0.5, above] {$\msgFromTo{\procB}{\procC}{4}$} (s2);
	\path(s2) edge node[sloped, pos=0.5, above] {$\msgFromTo{\procC}{\procA}{3}$} (s3);
\end{tikzpicture}
 		\\
		(b) \hspace{-1em} &
\begin{tikzpicture}[sem] 
	
	\node[state, initial left=, initial text =](s0){};
	\node[state, right=1.4cm of s0](s1){};
	\node[state, right=1.8cm of s1](s2){};
	\node[state, accepting, right=1.8cm of s2](s3){};

	\path(s0) edge node[sloped, pos=0.5, above] {$\msgFromTo{\procA}{\procB}{3}$} (s1);
	\path(s1) edge node[sloped, pos=0.5, above] {$\msgFromTo{\procB}{\procC}{4}$} (s2);
	\path(s2) edge node[sloped, pos=0.5, above] {$\msgFromTo{\procC}{\procA}{4}$} (s3);
\end{tikzpicture}
 	\end{tabular}
	\caption{
		Two concrete runs of $\SymProt_3$ (\cref{fig:unimplementable-local-type}):
		(a) with $x = 2, y = 4, z = 3$ and
		(b) with $x = 3, y = 4, z = 4$.
		\label{fig:unimplementable-local-type-runs}
	}
\end{figure}

In this pair of runs, $\procC$ observes the same behaviors, namely receiving the value 4 from $\procB$. 
While $\SymProt_3$ also permits $\procC$ to send 4 to $\procA$ in the first run, sending 3 to $\procA$ in the second run constitutes a violation to the refinement predicate $z > x$, \ie $3 > 3$ is false.
Again, this presents a problem because the two run prefixes are indistinguishable to $\procC$. 
Observe that in this example, non-implementability can again be blamed solely on the existence of states in the global protocol. 

We formalize a participant's local information about the protocol using two variations on the standard notion of reachability.
Let $\mathcal{S} = (S, \AlphSyncSubscript, T, s_0, F)$ be a protocol and let $\procA \in \Procs$ be a participant. 
The standard notion defines $s'$ as reachable from $s$ in $\mathcal{S}$ on $w \in \AlphSyncSubscript^*$, denoted
$s \xrightarrow{w} \Kleenestar s'$, when there exists a sequence of transitions $s_1 \xrightarrow{l_1} s_2 \ldots s_{n-1} \xrightarrow{l_{n-1}} s_n$, such that $s_1 = s$, $s_n = s'$, $l_1 \ldots l_{n-1} = w$ and for each $1 \leq i < n$, it holds that $s_i \xrightarrow{l_i} s_{i+1} \in T$. 
We first define a notion of reachability that restricts the transitions to only the actions observable by a single participant. 
\paragraph{Participant-based Reachability.}
We say that $s \in S$ is \emph{reachable for $\procA$ on $u \in \AlphSync_\procA^*$} when there exists $w \in \AlphSyncSubscript^*$ such that 
$s_0 \xrightarrow{w} \Kleenestar s \in T$ and $w \wproj_{\AlphSync_{\procA}} = u$, 
which we denote $s_0 \xRightarrow[\procA]{u} \Kleenestar s$. 
We characterize \emph{simultaneously reachable} pairs of states for each participant using the notion of participant-based reachability.
\paragraph{Simultaneous Reachability.}
We say that $s_1, s_2 \in S$ are simultaneously reachable for participant $\procA$ on $u \in \AlphSync_\procA^*$, denoted
$s_0 \xRightarrow[\procA]{u} \Kleenestar s_1, s_2$, 
if there exist $w_1, w_2 \in \AlphSyncSubscript^*$ such that $s_0 \xrightarrow{w_1} \Kleenestar s_1 \in T, s_0 \xrightarrow{w_2} \Kleenestar s_2 \in T$ and $w_1 \wproj_{\AlphSync_\procA} = w_2 \wproj_{\AlphSync_\procA} = u$.  
Simultaneous reachability captures the notion of \emph{locally indistinguishable} states: to a participant, two states are locally indistinguishable if they are simultaneously reachable. 
Send Coherence requires that any message that can be sent from a state can also be sent from all other states that are locally indistinguishable to the sender. 
\begin{definition}[Send Coherence]
	\label{cond:scc} 
	A protocol
	$\mathcal{S} = (S, \AlphSyncSubscript, T, s_0, F)$  
	satisfies Send Coherence (SC) when for every $s_1 \xrightarrow{\msgFromToNS{\procA}{\procB}{\val}} s_2 \in T, s_1' \in S$:
	\[
		(\exists u \in \AlphSync_{\procA}^*.~s_0\xRightarrow[\procA]{u}\Kleenestar s_1, s_1')
		\implies 
		(\exists s_2' \in S.~s_1' \xRightarrow[\procA]{\msgFromToNS{\procA}{\procB}{\val}} \Kleenestar s_2') \enspace.
	\]
\end{definition}
Receive Coherence, on the other hand, requires that no message which can be received from a state can be received from any other state that is locally indistinguishable to the receiver. 
\begin{definition}[Receive Coherence]
	\label{cond:rcc} 
	A protocol
	$\mathcal{S} = (S, \AlphSyncSubscript, T, s_0, F)$  
	satisfies Receive Coherence (RC) when for every $s_1 \xrightarrow{\msgFromToNS{\procA}{\procB}{\val}} s_2, s_1' \xrightarrow{\msgFromToNS{\procC}{\procB}{\val}} s_2' \in T$:
	\[
		(\procC \neq \procA \, \land \,
		\exists u \in \AlphSync_{\procB}^*.~s_0 \xRightarrow[\procB]{u}\Kleenestar s_1, s_1')
		\!\implies\! 
		\forall w \in \pref (\lang(\mathcal{S}_{s_2'})).~
		w \wproj_{\Alphabet_{\procB}} \!\neq\! \emptystring \, \lor
		\, \MsgVals(w \wproj_{\snd{\procA}{\procB}{\_}}) \!\neq\!
		\MsgVals(w \wproj_{\rcv{\procA}{\procB}{\_}}) \cdot \val) \;.
	\]
\end{definition}
No Mixed Choice requires that roles cannot equivocate between sending and receiving in two locally indistinguishable states.
\begin{definition}[No Mixed Choice]
	\label{cond:nmc} 
	A protocol
	$\mathcal{S} = (S, \AlphSyncSubscript, T, s_0, F)$  
	satisfies No Mixed Choice (NMC) when for every $s_1 \xrightarrow{\msgFromToNS{\procA}{\procB}{\val}} s_2, s_1' \xrightarrow{\msgFromToNS{\procC}{\procA}{\val}} s_2' \in T$:
	$
	(\exists u \in \AlphSync_{\procA}^*.~s_0 \xRightarrow[\procA]{u}\Kleenestar s_1, s_1')
	\implies 
	\bot 
	$ \enspace.
\end{definition}
Our semantic characterization of protocol implementability is the conjunction of the above three conditions. 
In contrast to the syntactic analysis in \cite{DBLP:journals/pacmpl/00020HNY20}, our semantic approach is sound and complete.  \fs{FIX: citet?}
In contrast to the sound and complete approach in \cite{DBLP:conf/cav/LiSWZ23}, our implementability conditions do not rely on synthesizing an implementation upfront. \fs{FIX: citet?}
\begin{definition}[Coherence Conditions]
	A protocol
	satisfies Coherence Conditions (CC) when it satisfies Send Coherence, Receive Coherence and No Mixed Choice.
\end{definition}
The preciseness of \Characterization is stated as follows.
\begin{restatable}{theorem}{Preciseness of Coherence Conditions}
	\label{thm:soundness} 
	Let $\mathcal{S}$ be a protocol. 
	Then, $\mathcal{S}$ is implementable if and only if it satisfies CC. 
\end{restatable}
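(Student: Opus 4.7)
The plan is to prove both directions of the biconditional separately, using the two-sided structure of CC: the ``sender-side'' conditions SC and NMC and the ``receiver-side'' condition RC.

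For the forward direction (implementability $\Rightarrow$ CC), I would argue by contraposition, treating each condition in turn. The key observation underlying all three arguments is that if $s_0 \xRightarrow[\procA]{u}\Kleenestar s_1, s_1'$, then there exist two global runs $\rho_1, \rho_2$ of $\mathcal{S}$ whose projections onto $\AlphSync_\procA$ agree. By the semantics of asynchrony (channel compliance plus closure under indistinguishable reorderings), any CLTS implementation $\{T_\procA\}_{\procA \in \Procs}$ of $\mathcal{S}$ must drive $T_\procA$ into the same local state after reading $u \wproj_{\AlphAsync_\procA}$ in either run. From there, each condition failure yields an incompatible obligation on $T_\procA$: a failure of SC forces $T_\procA$ to both enable and disable some send $\snd{\procA}{\procB}{\val}$; a failure of NMC forces $T_\procA$ to both send and receive from the same local state; and a failure of RC lets us, via asynchronous reordering, schedule an extra send $\snd{\procA}{\procB}{\val}$ along the second run so that it lands at the head of the channel read by $T_\procB$, after which $T_\procB$ must either consume it (violating protocol fidelity with respect to $\mathcal{S}_{s_2'}$) or refuse it forever (violating deadlock freedom once all other participants have quiesced).

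For the backward direction (CC $\Rightarrow$ implementability), the plan is to build a canonical implementation from the global protocol directly. For each participant $\procA$, define the local LTS $T_\procA$ whose states are the nonempty sets $Q_\procA(u) = \{s \mid s_0 \xRightarrow[\procA]{u}\Kleenestar s\}$ for $u \in \AlphSync_\procA^*$, with a transition $Q_\procA(u) \xrightarrow{\alpha} Q_\procA(u \cdot \alpha)$ for each $\alpha \in \AlphAsync_\procA$ such that the latter set is nonempty, and with $Q_\procA(u)$ marked final iff it contains a state of $F$. Determinism is immediate; SC is what guarantees that from every locally indistinguishable state the chosen send is enabled, so sends in $T_\procA$ preserve protocol behavior; NMC is what guarantees no $Q_\procA(u)$ is torn between a pending send and a pending receive; and RC is what guarantees that whenever $T_\procB$ is ready to receive a particular message from $\procA$, no ``stale'' message with a different payload can already be sitting in $\ChanCont{\procA}{\procB}$, and conversely that no $T_\procB$ can accidentally consume a message destined for a locally indistinguishable alternative continuation. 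Then I would verify, by a joint induction on finite run prefixes and a limit argument for infinite runs, that $\lang(\CLTS{T}) = \interswaplang(\mathcal{S})$ and that no reachable configuration deadlocks.

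The inclusion $\interswaplang(\mathcal{S}) \subseteq \lang(\CLTS{T})$ is the straightforward direction, driven by construction: every global trace can be simulated step by step since each $T_\procA$ replays $\procA$'s local projection. The hard part, and the main obstacle, is the converse inclusion together with deadlock freedom in the asynchronous setting. One needs to take a maximal run of $\CLTS{T}$ and reconstruct a matching global run of $\mathcal{S}$ whose $\SyncToAsync$-image is indistinguishable from it, which requires carefully pairing each local send with a corresponding global transition without getting ``out of sync'' across participants. Here, RC is used to rule out exactly the scenario where a participant consumes a message that the global protocol never intended it to consume at that point, and NMC is used to locally schedule pending sends before pending receives so that channel contents stay consistent with some global execution. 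Handling infinite traces requires an additional König/compactness-style argument to piece together an infinite global witness from consistent finite approximations, which is the most delicate part of the proof.
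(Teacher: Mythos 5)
Your proposal is correct and follows essentially the same route as the paper: completeness by contraposition, turning each violated condition plus a pair of simultaneously reachable states into a concrete trace that any implementing CLTS must exhibit but that has no consistent global run (so it lies outside $\pref(\lang(\mathcal{S}))$, contradicting fidelity or deadlock freedom); and soundness by exhibiting a canonical implementation and proving, by induction on trace length with a send/receive case split, that every reachable trace keeps the set $I(w)$ of consistent maximal global runs non-empty, with a K\"onig-style compactness argument to assemble an infinite global witness for infinite traces. The one substantive difference is that you commit to a concrete subset construction $Q_\procA(u) = \{s \mid s_0 \xRightarrow[\procA]{u}\Kleenestar s\}$ as the canonical implementation, whereas the paper deliberately axiomatizes canonicity by two language conditions ($\lang(T_\procA)$ and its prefix set must coincide with the projection of $\lang(\mathcal{S})$ onto $\AlphAsync_\procA$) and proves soundness for \emph{any} CLTS satisfying them. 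This buys the paper two things: the argument is independent of whether the subset construction is effectively computable (it is not, in general, for symbolic infinite-state protocols), and it directly yields the corollary that implementability is equivalent to \emph{every} canonical implementation working, which underpins the synthesis discussion. Your concrete construction still suffices as a mathematical existence witness, but you should be careful with the acceptance condition: marking $Q_\procA(u)$ final exactly when it contains a state of $F$ does not obviously match the required property that $\lang(T_\procA)$ equal $\lang(\mathcal{S})\wproj_{\AlphAsync_\procA}$ restricted to finite words, since a finite local word can arise as the projection of an infinite global word only; this is precisely the kind of corner the paper's abstract definition avoids having to resolve inside the main proof.
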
 
In the next two sections, we illustrate the key steps for proving \cref{thm:soundness}. We refer the reader to \cref{app:characterization} for the complete proofs. 
\subsection{Soundness}
Soundness requires us to show that if a protocol satisfies \Characterization, then it is implementable. 
We begin by echoing the observation made in several prior works~\cite{DBLP:journals/tse/AlurEY03,  DBLP:conf/ecoop/Stutz23, DBLP:conf/cav/LiSWZ23} that for any global protocol, there exists a \emph{canonical} implementation consisting of one local implementation per participant. 
We formally define what it means for an implementation to be canonical in our setting below.
\begin{definition}[Canonical implementations]
	
	\label{def:local-language-property}
	We say a CLTS $\CLTS{T}$ is a \emph{canonical implementation} for a protocol $\mathcal{S} = (S, \AlphSyncSubscript, T, s_0, F)$
	if for every $\procA \in \Procs$, $T_\procA$ satisfies: \\
	\begin{inparaenum}[(i)]
		\item $\forall w \in \Alphabet_{\procA}^*.~w \in \lang(T_\procA) \Leftrightarrow w \in \lang(\mathcal{S}) \wproj_{\Alphabet_\procA}$, and
		\item $\pref(\lang(T_\procA)) = \pref(\lang(\mathcal{S}) \wproj_{\Alphabet_\procA})$.
	\end{inparaenum}
\end{definition}
We first prove that following fact about canonical implementations of protocols satisfying NMC, which states that the canonical implementations themselves do not exhibit mixed choice.
\begin{restatable}[No Mixed Choice]{lemma}{noMixedChoice}
	\label{cor:no-mixed-choice}
	Let $\mathcal{S}$ be a protocol satisfying NMC (\cref{cond:nmc}) and let $\CLTS{T}$ be a canonical implementation for $\mathcal{S}$. 
	Let $wx_1, wx_2 \in \pref(\lang(T_\procA))$ with $x_1 \neq x_2$ for some $\procA \in \Procs$.
	Then, $x_1 \in \Alphabet_!$ iff $x_2 \in \Alphabet_!$.
\end{restatable}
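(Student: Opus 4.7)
The proof plan is to argue by contradiction. Suppose, without loss of generality, that $x_1 = \snd{\procA}{\procB}{\val_1} \in \Alphabet_!$ and $x_2 = \rcv{\procC}{\procA}{\val_2} \in \Alphabet_?$. The goal is to lift this putative mixed-choice witness from the canonical implementation $\CLTS{T}$ back to two $\procA$-simultaneously-reachable states $s_1^*, s_2^*$ of the protocol $\mathcal{S}$, the first with an outgoing send by $\procA$ and the second with an outgoing receive by $\procA$, which then violates NMC (\cref{cond:nmc}).

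To perform the lift, I would first invoke clause~(ii) of \cref{def:local-language-property}, which yields $wx_i \in \pref(\lang(\mathcal{S}) \wproj_{\Alphabet_\procA})$, so there are words $v_i \in \interswaplang(\mathcal{S})$ with $wx_i \leq v_i \wproj_{\Alphabet_\procA}$ for $i \in \{1,2\}$. From each $v_i$ I would extract a finite synchronous execution $\tau_i$ of $\mathcal{S}$ from $s_0$ such that $wx_i \leq \SyncToAsync(\tau_i) \wproj_{\Alphabet_\procA}$: for finite $v_i$ this is immediate from the first clause of $\interswaplang$, while for infinite $v_i$ it follows by applying the prefix-extension clause to the finite prefix $wx_i$, which supplies a finite asynchronous prefix $u'$ of $\SyncToAsync$ of a maximal synchronous trace with $wx_i \leq u' \wproj_{\Alphabet_\procA}$ and hence the desired $\tau_i$. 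Within each $\tau_i$ I would single out the unique transition whose asynchronous $\procA$-contribution sits at position $|w|+1$ in the $\procA$-projection; let $s_i^*$ denote its source state and $\tau_i^\circ$ the strict prefix of $\tau_i$ preceding it. By construction, $s_1^*$ has an outgoing transition $\msgFromToNS{\procA}{\procB}{\val_1}$ and $s_2^*$ one labeled $\msgFromToNS{\procC}{\procA}{\val_2}$.

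It remains to observe that $u := \tau_1^\circ \wproj_{\AlphSync_\procA}$ and $\tau_2^\circ \wproj_{\AlphSync_\procA}$ coincide. Indeed, $\SyncToAsync$ maps each synchronous event in $\AlphSync_\procA$ to a two-symbol word in which exactly one symbol lies in $\Alphabet_\procA$ (the send if $\procA$ is the sender, the receive if $\procA$ is the receiver), inducing a bijection at the symbol level. Since $\SyncToAsync(\tau_i^\circ) \wproj_{\Alphabet_\procA} = w$ for both $i$, this bijection forces $\tau_1^\circ \wproj_{\AlphSync_\procA} = \tau_2^\circ \wproj_{\AlphSync_\procA}$. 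Consequently $s_0 \xRightarrow[\procA]{u}\Kleenestar s_1^*, s_2^*$, and the outgoing transitions of $s_1^*$ and $s_2^*$ directly contradict \cref{cond:nmc}.

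The main obstacle will be the bookkeeping in the middle step: carefully descending from the indistinguishability-based asynchronous semantics $\interswaplang(\mathcal{S})$ to a concrete finite synchronous prefix of $\mathcal{S}$, especially when $v_i$ is infinite and only the prefix-extension clause is available. Once those synchronous witnesses are extracted, identifying the critical transitions and invoking NMC is mechanical, and the symmetric case ($x_1 \in \Alphabet_?$, $x_2 \in \Alphabet_!$) is covered by the same argument with the roles of $s_1^*$ and $s_2^*$ swapped.
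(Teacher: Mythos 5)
Your proof is correct and follows essentially the same route as the paper's: both lift the two prefixes $wx_1, wx_2$ to runs of $\mathcal{S}$ via canonicity, split each run at the unique transition where $\procA$ has observed exactly $w$, and contradict NMC with the resulting pair of $\procA$-simultaneously-reachable source states. The only cosmetic difference is that the paper separately dispatches the case where the two source states coincide by appealing to sender-driven choice, whereas your uniform appeal to \cref{cond:nmc} already covers it, since that condition does not require the two states to be distinct.
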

We choose the canonical implementation as our existential witness to show that any protocol satisfying \Characterization is implementable. 
By the definition of implementability (\cref{def:lts-implementability}), soundness amounts to showing the following three conditions: \\
\begin{inparaenum}[(a)]
	\item 
	\label{claim:global-lang-incl-clts-lang}
	$\lang(\mathcal{S}) \subseteq \lang(\CLTS{T})$, 
	\item 
	\label{claim:clts-lang-incl-global-lang}
	$\lang(\CLTS{T}) \subseteq \lang(\mathcal{S})$, and 
	\item 
	\label{claim:clts-deadlock-free}
	$\CLTS{T}$ is deadlock-free.
\end{inparaenum}
Condition (a) states that any canonical implementation recognizes at least the global protocol behaviors. This fact can be shown for any LTS and canonical CLTS, and does not rely on assumptions about determinism or sender-drivenness, nor assumptions about the LTS satisfying \Characterization.
\begin{restatable}[Canonical implementation language contains protocol language] {lemma}{languageInclusionLeft}
	Let $\mathcal{S}$ be an LTS and let $\CLTS{T}$ be a canonical implementation for $\mathcal{S}$.
	Then, $\lang(\mathcal{S}) \subseteq \lang(\CLTS{T})$. 
\end{restatable}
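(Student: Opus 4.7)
The plan is to take an arbitrary $w \in \lang(\mathcal{S})$ and exhibit a maximal run of $\CLTS{T}$ whose trace is $w$. The argument proceeds in two parts: first, an inductive simulation of $\mathcal{S}$ by $\CLTS{T}$ that produces a run; second, verification that the resulting run is maximal whenever $w$ is maximal.

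For the simulation, I would prove by induction on finite prefixes $u \leq w$ the invariant that there is a configuration $(\vec{s}, \xi)$ of $\CLTS{T}$ reachable on $u$ from the initial configuration, where for each $\procA$ the local state $\vec{s}_\procA$ is the unique state reached in the deterministic $T_\procA$ on input $u \wproj_{\Alphabet_\procA}$, and $\xi(\procB,\procA)$ is the suffix of messages sent from $\procB$ to $\procA$ in $u$ that have not yet been matched by a receive in $u$. The base case is immediate. For the inductive step, extend $u$ by the next symbol $x$ of $w$; the key observation is that $(u\cdot x) \wproj_{\Alphabet_\procA} \in \pref(\lang(\mathcal{S}) \wproj_{\Alphabet_\procA}) = \pref(\lang(T_\procA))$ by clause (ii) of canonicity, whence $T_\procA$ admits a transition out of $\vec{s}_\procA$ on $x \wproj_{\Alphabet_\procA}$. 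If $x$ is a send, this transition is unconditionally enabled in the CLTS. If $x = \rcv{\procB}{\procA}{\val}$, channel-compliance of $w$ (guaranteed by the definition of the asynchronous language) ensures that $\val$ sits at the head of $\xi(\procB,\procA)$, so the receive is enabled. In both cases the invariant is reestablished, and for infinite $w$ the stepwise construction yields an infinite CLTS run directly.

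It remains to show maximality of the constructed run when $w$ is a maximal finite trace of $\mathcal{S}$. I would argue that every $T_\procA$ reaches a final state: by clause (i) of canonicity, $w \wproj_{\Alphabet_\procA} \in \lang(T_\procA)$, and since $\lang(T_\procA)$ is a set of maximal traces and $T_\procA$ is deterministic, the state reached by this finite local trace must be final. To show all channels are empty, I would unfold $w \in \lang(\mathcal{S})$ to its synchronous witness $w_0 \in \SyncToAsync(\lang(\mathcal{S}))$ with matching per-participant projections; since $w_0$ is obtained by splitting each synchronous event into adjacent send/receive pairs, every send in $w_0$ is matched by a receive, and agreement of projections per participant transfers this per-channel balance to $w$.

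The main obstacle is careful bookkeeping around the asynchronous semantics: membership in $\lang(\mathcal{S})$ only guarantees indistinguishability from a synchronous unshuffling rather than a literal interleaving, and channel-compliance of $w$ is precisely what bridges this gap in the receive step. Notably the argument appeals neither to sender-driven choice, nor to CC, nor to any structural property of $\mathcal{S}$ beyond the canonicity of $\CLTS{T}$, which is consistent with the lemma's generality over arbitrary LTSs.
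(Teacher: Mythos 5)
Your proposal is correct and follows essentially the same route as the paper: an induction on prefixes showing each next event is enabled (sends unconditionally, receives via channel-compliance of $w$), followed by the observation that for finite maximal $w$ canonicity forces all local states to be final and the matched send/receive structure forces all channels to be empty, with the infinite case falling out of the prefix argument. If anything, your justification that all channels are empty (by unfolding $w$ to its synchronous witness and transferring the per-channel balance through the agreeing projections) is more explicit than the paper's, which asserts this fact without elaboration.
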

Condition (b), on the other hand, states that any behavior recognized by the canonical implementation is a global protocol behavior, in other words, that the canonical CLTS does not add behaviors. This is only true for protocols that satisfy \Characterization. 

Furthermore, the acceptance condition for infinite words in $\lang(\mathcal{S})$ differs from that in $\CLTS{T}$: the latter accepts all infinite traces, whereas the former requires to show that an infinite word $w$ satisfies $w \preceq_\interswap^\omega w'$ for some other infinite word $w' \in \lang(\mathcal{S})$. 
Therefore, showing prefix set inclusion is not sufficient, and we must reason about the finite and infinite case separately. 
\begin{restatable}[Global protocol language contains canonical implementation language]{lemma}{languageInclusionRight}
	Let $\mathcal{S}$ be a protocol satisfying \Characterization and let $\CLTS{T}$ be a canonical implementation for $\mathcal{S}$ such that for all $w \in \AlphAsyncSubscript^*$, if $w$ is a trace of $\CLTS{T}$, then $I(w) \neq \emptyset$.

	Then, $\lang(\CLTS{T}) \subseteq \lang(\mathcal{S})$. 
\end{restatable}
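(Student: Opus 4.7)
I case-split on whether $w \in \lang(\CLTS{T})$ is finite or infinite and, in each case, exhibit the synchronous witness required by the asynchronous semantics of $\mathcal{S}$.

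\textbf{Finite case.} Suppose $w$ is finite. Then $w$ terminates in a final configuration of $\CLTS{T}$: every local state is final and every channel is empty. Channel emptiness already yields channel compliance of $w$ and guarantees that, on each channel, the sequences of sent and received message payloads coincide. By hypothesis, $I(w) \neq \emptyset$, so pick a maximal run $\rho$ of $\mathcal{S}$ with $w \wproj_{\AlphAsync_\procA} \leq \SyncToAsync(\trace(\rho)) \wproj_{\AlphAsync_\procA}$ for every $\procA$. The first substantive step is to strengthen this prefix relation to equality: if the inclusion were strict for some $\procA$, then $\rho$ would fire an event active for $\procA$ beyond what $w$ records. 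By canonicity, $w \wproj_{\AlphAsync_\procA}$ is a maximal trace of $T_\procA$, hence a maximal trace of $\lang(\mathcal{S}) \wproj_{\AlphAsync_\procA}$. I would then use No Mixed Choice together with Send Coherence, and the sink-finality condition on $\mathcal{S}$, to argue that any state of $\mathcal{S}$ simultaneously reachable with $\rho$'s current state for $\procA$ leaves $\procA$ locally inactive, contradicting the extra $\procA$-event in $\rho$. Given equality for every $\procA$, $w$ is a channel-compliant shuffle of $\SyncToAsync(\trace(\rho)) \in \SyncToAsync(\lang(\mathcal{S}))$, which places $w$ in $\lang(\mathcal{S})$ by the finite clause of the definition of $\interswaplang(\mathcal{S})$.

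\textbf{Infinite case.} Suppose $w$ is infinite. Every finite prefix $v_n \leq w$ is a trace of $\CLTS{T}$, so by hypothesis $I(v_n) \neq \emptyset$; fix a maximal run $\rho_n \in I(v_n)$. The goal is to exhibit a single infinite word $\widehat{w} \in \SyncToAsync(\lang(\mathcal{S}))$ such that for every prefix $v_n$ there exists a finite extension $u_n'$ with $v_n \cdot u_n'$ channel-compliant and $(v_n \cdot u_n') \wproj_{\AlphAsync_\procA}$ equal to the projection of some prefix of $\widehat{w}$ for every $\procA$. I would construct the underlying limit run $\widehat{\rho}$ by a compactness/K\"onig argument on the tree of finite prefixes of global runs of $\mathcal{S}$ that witness membership in $I(v_n)$ for some $n$. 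Sender-driven choice together with Send Coherence forces the choices made at each global branch to be consistent modulo the restricted sender, while Receive Coherence bounds the compatible next transitions. The missing extension $u_n'$ at each step consists precisely of the pending receives in $\CLTS{T}$ that match sends already committed by $\widehat{\rho}$.

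\textbf{Main obstacle.} The principal difficulty is the infinite case: synthesising a single consistent limit run $\widehat{\rho}$ from the individual witnesses $\rho_n$. Because $\MsgVals$ may be infinite, the global LTS can be infinitely branching and naive K\"onig's lemma fails; the argument must first quotient compatible successors using the coherence conditions to obtain an effectively finitely-branching skeleton on which compactness applies. A secondary subtlety, shared with the finite case, is that one cannot directly pick $\rho_n \in I(v_n)$ for all $n$ and take a diagonal, since the $\rho_n$ need not agree even on prefixes; the coherence conditions are what allow the diagonal to be \emph{chosen} consistently rather than to emerge automatically.
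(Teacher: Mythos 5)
Your high-level plan is the same as the paper's: the same finite/infinite split, the same goal of upgrading the prefix relation witnessed by $I(w)$ to equality in the finite case, and the same K\"onig-style extraction of a limit run followed by a prefix-by-prefix indistinguishability argument in the infinite case. The gaps are in the two places where you wave at the coherence conditions. In the finite case, your claim that NMC, SC and sink-finality force every state simultaneously reachable with $\rho$'s current state to leave $\procA$ ``locally inactive'' only works when the extra $\procA$-event in $\rho$ is a send (there SC transports the send to the final state witnessing $w \wproj_{\AlphAsync_\procA} \in \lang(\mathcal{S})\wproj_{\AlphAsync_\procA}$ and sink-finality gives the contradiction). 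If the extra event is a receive $\rcv{\procC}{\procA}{\val}$, no coherence condition forbids a final state and a receiving state from being simultaneously reachable for the receiver $\procA$ --- that configuration is perfectly implementable and does occur --- so there is no local contradiction to extract. The paper instead establishes that the reached CLTS configuration has \emph{no} enabled transition, and then picks the participant $\procB$ whose covered prefix of $\rho$ is \emph{shortest}: the transition right after that prefix is active for $\procB$, and if it is a receive, minimality of $\procB$ forces the matching send to already occur in $w$, so by the FIFO invariant the message sits in a channel of the final configuration, contradicting channel emptiness. This ordering-by-coverage step is the missing idea; it cannot be replaced by SC/NMC/sink-finality.

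In the infinite case you correctly identify the infinite-branching obstacle and correctly observe that one cannot naively diagonalize over the $\rho_n$, but you leave the repair (``quotient compatible successors'') as a promissory note. The paper's tree has as vertices only the \emph{minimal} run prefixes that cover all local views of each prefix $v_n$, and argues finite branching from determinism of $\mathcal{S}$ (at most one transition per concrete label occurring in $w$); it then needs a further argument, via monotonicity of $I$, that the extracted infinite path actually lies in $\bigcap_{u \leq w} I(u)$ --- a step absent from your sketch. Finally, your description of the padding word $u_n'$ as ``the pending receives that match sends already committed'' is too narrow: to match a prefix of the limit trace on \emph{every} participant's projection, $u_n'$ must also contain the pending \emph{sends} of lagging participants, appended in a specific order (sends before receives, receives ordered as in the target prefix) so that channel compliance is maintained before the indistinguishability lemma is invoked. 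So the route is the right one, but the finite receive sub-case and the two K\"onig-side constructions are genuine holes as written.
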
 
Towards these ends, we adapt the key intermediate lemma from \cite{DBLP:conf/cav/LiSWZ23} to our setting, and show the inductive invariant that every trace in the canonical implementation of a protocol satisfying \Characterization satisfies \emph{intersection set non-emptiness}. 
Note that although our intermediate lemma statements are similar to those in~\cite{DBLP:conf/cav/LiSWZ23} in structure, \cite{DBLP:conf/cav/LiSWZ23} reasons about a particular implementation, namely the subset construction obtained from the global type, whereas our proofs reason about any canonical implementation of a global protocol that satisfies \Characterization. \fs{FIX: citet?}
As a result, the proof arguments differ significantly. 

We adapt the relevant definitions to our setting below. 
\begin{definition}[LTS intersection sets]
	\label{def:lts-intersection-sets}
	Let $\mathcal{S}$
	
	be an LTS.
	Let $\procA$ be a participant and
	$w \in \AlphAsyncSubscript^*$ be a word. 
	We define the set of possible runs $\globcomplocal{\mathcal{S}}{\procA}{w}$
	as all maximal runs of $\mathcal{S}$ that are consistent with $\procA$'s local view of $w$:
	\[
	\globcomplocal{\mathcal{S}}{\procA}{w}
	\is
	\set{
		\run
		\text{ is a maximal run of }
		\mathcal{S}
		\mid
		w \wproj_{\Alphabet_\procA} \preforder \SyncToAsync(\trace(\run)) \wproj _{\Alphabet_\procA}
	}
	\enspace .
	\]
	We denote the intersection of the possible run sets for all participants as
	$
	I^{\mathcal{S}}(w) 
	\is 
	\Inters_{\procA \in \Procs} \globcomplocal{\mathcal{S}}{\procA}{w}
	$.
\end{definition}
\begin{definition}[Unique splitting of a possible run]
	Let $\mathcal{S}$ be an LTS, $\procA$ a participant, and $w \in \AlphAsyncSubscript^*$ a word. 
	Let $\run$ be a run in $\globcomplocal{\mathcal{S}}{\procA}{w}$. 
	We define the longest prefix of $\run$ matching $w$:
	\[
	\alpha'
	\is
	\max
	\set{
		\run'
		\mid
		\run' \leq \run ~\wedge~
		\SyncToAsync(\trace(\run')) \wproj _{\Alphabet_\procA} \preforder w \wproj_{\Alphabet_\procA}
	}
	\enspace .
	\]
	If $\alpha' \neq \run$, we can split $\run$ into
	$
	\run = \alpha \cdot s \xrightarrow{l} s' \cdot \beta
	$
	where $\alpha' = \alpha \cdot s$. 
	, which we call the unique splitting of~$\run$ for $\procA$ matching $w$.
	Uniqueness follows from the maximality of $\alpha'$. 
\end{definition}

For example, the unique splitting of 
$\run = 
s_1 \xrightarrow{\msgFromToNS{\procA}{\procB}{\msgM}} s_2 \xrightarrow{\msgFromToNS{\procC}{\procB}{\msgB_1}} s_3
\xrightarrow{\msgFromToNS{\procC}{\procB}{\msgB_2}} s_4
\xrightarrow{\msgFromToNS{\procB}{\procA}{\msgO}} s_5$ 
for $\procA$ matching 
$w = \snd{\procC}{\procB}{\msgB_1}. \,\snd{\procA}{\procB}{\msgM}$ is 
$\alpha \cdot s_3 \xrightarrow{\msgFromToNS{\procC}{\procB}{\msgB_2}} s_4 \cdot \beta$, 
where $\alpha =  s_1 \xrightarrow{\msgFromToNS{\procA}{\procB}{\msgM}} s_2 \xrightarrow{\msgFromToNS{\procC}{\procB}{\msgB_1}} s_3$ and $\beta = s_4
\xrightarrow{\msgFromToNS{\procB}{\procA}{\msgO}} s_5$. 
Our intersection non-emptiness inductive invariant is stated below. 
The proof proceeds by induction on the length of a prefix $w$ of the canonical implementation, and case splits based on whether $w$ is extended by a send or receive action. 
\cref{lm:sndPrefixPreservation} and \cref{lm:rcvIntersectionSetEquality} provide a characterization for each case respectively. 
\begin{restatable}[Intersection set non-emptiness]{lemma}{INonEmpty}
	\label{lm:intersection-non-emptiness}
	Let $\mathcal{S}$ be a protocol satisfying \Characterization, and let $\CLTS{T}$ be a canonical implementation for $\mathcal{S}$. 
	Then, for every trace $w \in \AlphAsyncSubscript^*$ of $\CLTS{T}$, it holds that $I(w) \neq \emptyset$. 
\end{restatable}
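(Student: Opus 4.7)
The plan is to proceed by induction on the length of the trace $w$ of $\CLTS{T}$. In the base case $w = \emptystring$, the projection $\emptystring \wproj_{\Alphabet_\procA}$ is empty and hence a prefix of every word, so $\globcomplocal{\mathcal{S}}{\procA}{\emptystring}$ equals the set of all maximal runs of $\mathcal{S}$ for each $\procA$. Consequently $I(\emptystring)$ coincides with the set of all maximal runs of $\mathcal{S}$, which is non-empty by the deadlock-freedom condition of a GCLTS.

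For the inductive step, assume $wx$ is a trace of $\CLTS{T}$ and fix a witness $\run \in I(w)$. I would split on whether $x \in \Alphabet_!$ or $x \in \Alphabet_?$. In the send case $x = \snd{\procA}{\procB}{v}$, the only participant whose local view changes is $\procA$, so $\run$ already witnesses membership in $\globcomplocal{\mathcal{S}}{\procC}{wx}$ for every $\procC \neq \procA$. For $\procA$, take the unique splitting $\run = \alpha \cdot s \xrightarrow{l} s' \cdot \beta$ for $\procA$ matching $w$. By canonicity of $\CLTS{T}$, the word $w \wproj_{\Alphabet_\procA} \cdot x$ is extended by some concrete run of $\mathcal{S}$, so the state reached in that run after $\procA$-local trace $w \wproj_{\Alphabet_\procA}$ is simultaneously reachable with $s$. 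Send Coherence then gives a local sequence out of $s$ that performs $x$, and extending this partial run to a maximal one via deadlock-freedom yields the desired element of $I(wx)$.

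In the receive case $x = \rcv{\procB}{\procA}{v}$, the same participant-specific adjustment is required only for $\procA$. Since $\CLTS{T}$ is channel-compliant, the matching event $\snd{\procB}{\procA}{v}$ appears already in $w$, hence as a $\SyncToAsync$-induced send event along $\run$. Applied at the locally indistinguishable $\procA$-state reached by $\run$, Receive Coherence excludes the possibility that the protocol offers any other receive from $\procB$ at that point, while the No Mixed Choice corollary for canonical implementations (\cref{cor:no-mixed-choice}) excludes the possibility that $\procA$ is expected to send instead. These two facts force the splitting of $\run$ to be extensible through the matching receive of $v$, and a final application of deadlock-freedom produces a maximal run witnessing $I(wx) \neq \emptyset$.

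I expect the receive case to be the principal obstacle. Unlike the send case, one cannot merely produce \emph{some} continuation enabling a $\procA$-local transition; one must argue that the specific value $v$ currently at the head of $\procB$'s channel to $\procA$ in $\CLTS{T}$ is exactly what $\procA$ must receive along every continuation of $\run$ consistent with $w \wproj_{\Alphabet_\procA}$. This requires a careful interleaving of FIFO semantics, simultaneous reachability of $\procA$-states between $\run$ and the canonical trace, and a joint application of Receive Coherence and the No Mixed Choice corollary, which together constitute the bulk of the formal work in the auxiliary lemma for this case.
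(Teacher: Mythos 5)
Your overall skeleton---induction on $|w|$ with a send/receive case split, Send Coherence driving the send case and Receive Coherence together with \cref{cor:no-mixed-choice} driving the receive case---matches the paper, which factors the two cases into \cref{lm:sndPrefixPreservation} and \cref{lm:rcvIntersectionSetEquality}. But there is a genuine gap in your send case, and it sits exactly where you claim the argument is routine. After you use Send Coherence to obtain a transition out of the splitting state $s$ of $\run$ labeled $\msgFromToNS{\procA}{\procB}{v}$, you assert that ``extending this partial run to a maximal one via deadlock-freedom yields the desired element of $I(wx)$.'' That is not enough: the new run must lie in $\globcomplocal{\mathcal{S}}{\procC}{w}$ for \emph{every} participant $\procC$, and the other participants may already have performed, in $w$, actions that occur in $\run$ strictly \emph{after} the prefix $\alpha \cdot s$---i.e., inside the suffix you are discarding and replacing. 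An arbitrary maximal continuation from the new post-state can disagree with those participants' local views of $w$, in which case it is not in $I(w)$ and hence not in $I(wx)$. The paper's proof of \cref{lm:sndPrefixPreservation} spends most of its effort on precisely this point: it forms a candidate run with an arbitrary maximal suffix, identifies the participant with the earliest disagreement against $w$, proves via an inner induction (using \cref{lm:rcvIntersectionSetEquality} and \cref{cor:no-mixed-choice}) that the disagreeing event must be a send, reapplies Send Coherence to repair the suffix at that point, and iterates with a termination measure bounded by $|w|$. Without some such repair argument your induction does not close.

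Your receive case is closer to the mark, and you correctly flag where the technical work lies (showing that Receive Coherence would be violated if the next $\procA$-event of $\run$ were a receive from a different sender, which in the paper requires constructing a channel-compliant witness in $\pref(\lang(\mathcal{S}_{s_{post}}))$ by deleting already-consumed symbols from $w$). One correction: the conclusion of that case is the equality $I(wx) = I(w)$---the very same run $\run$ already witnesses non-emptiness---so no extension of any run and no appeal to deadlock-freedom is needed there; your closing step of ``producing a maximal run'' is superfluous since $\run$ is already maximal.
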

\begin{restatable}[Receive events do not shrink intersection sets]{lemma}{rcvIntersectionEquality}
	\label{lm:rcvIntersectionSetEquality}
	Let $\mathcal{S}$ be a protocol satisfying \Characterization, and let $\CLTS{T}$ be a canonical implementation for $\mathcal{S}$.  
	Let $wx$ be a trace of $\CLTS{T}$ such that $x \in \Alphabet_?$. 
	Then, $I(w) = I(wx)$. 
\end{restatable}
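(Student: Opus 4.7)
The plan is to prove $I(w) = I(wx)$ by establishing the two inclusions separately. The inclusion $I(wx) \subseteq I(w)$ is immediate: since $w$ is a prefix of $wx$, every participant's view of $w$ is a prefix of their view of $wx$, so any maximal run of $\mathcal{S}$ consistent with the latter is consistent with the former. This direction uses neither the \Characterization assumption nor any specific property of $\CLTS{T}$.

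For the nontrivial inclusion $I(w) \subseteq I(wx)$, let $\procA$ be the receiver of $x = \rcv{\procB}{\procA}{\val}$, fix $\run \in I(w)$, and write $u = w \wproj_{\Alphabet_\procA}$ and $v = \SyncToAsync(\trace(\run)) \wproj_{\Alphabet_\procA}$. For every $\procC \neq \procA$ we have $(wx) \wproj_{\Alphabet_\procC} = w \wproj_{\Alphabet_\procC}$, so $\run \in \globcomplocal{\mathcal{S}}{\procC}{wx}$ follows at once from $\run \in I(w)$. The task therefore reduces to showing that $ux$ is a prefix of $v$, i.e.\ that the event at position $|u|$ in $v$ exists and equals $x$.

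The value $\val$ is first pinned down by channel compliance. Because $wx$ is produced by the asynchronous CLTS $\CLTS{T}$, it is channel-compliant, so if $m$ denotes the number of $\procB$-to-$\procA$ receives in $w$, the $(m{+}1)$-th $\procB$-to-$\procA$ send appearing in $w$ carries value $\val$. Since $\run \in \globcomplocal{\mathcal{S}}{\procB}{w}$, $\procB$'s view of $w$ is a prefix of $\procB$'s view of $\run$, and because the send and receive of a synchronous event coincide on a global transition, $\procA$'s $(m{+}1)$-th receive from $\procB$ in $v$ also carries value $\val$.

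It remains to show this receive is the \emph{next} $\procA$-event in $v$ after $u$. Let $y$ be the event at position $|u|$ in $v$, and case-split with the help of a companion maximal run $\run'$ of $\mathcal{S}$ whose $\procA$-view begins with $ux$; such a $\run'$ exists because $wx$ is a trace of $\CLTS{T}$ and $T_\procA$ is canonical. The global states in $\run$ and $\run'$ at the point where $\procA$'s next transition disagrees are $\procA$-indistinguishable via $u$. If $y \in \Alphabet_!$, NMC applied to this pair of transitions yields a contradiction. If $y$ is a receive from $\procB$, the counting argument above forces $y = x$. The remaining case, $y = \rcv{\procC}{\procA}{\val'}$ with $\procC \neq \procB$, is the main obstacle: here I invoke RC, using the availability of $\val$ from $\procB$ to $\procA$ in the continuation of $\run$ past the alternative transition to contradict RC's clause on what may follow the offending receive. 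The degenerate sub-case $|v| = |u|$ (where $\procA$ has no further events in $\run$) is ruled out analogously by confronting the persistence of the pending $\val$ in $\run$'s continuation with $\run'$ under the coherence conditions.
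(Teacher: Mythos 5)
Your proof is correct and follows essentially the same route as the paper's: the nontrivial inclusion reduces to participant $\procA$'s view, channel compliance pins the pending message $\val$ as $\procA$'s next receive from $\procB$ in any run of $I(w)$, and the three cases for the next $\procA$-event are dispatched by NMC, the counting argument, and RC respectively, with the companion run obtained from canonicity of $T_\procA$. The only step you leave compressed is the RC contradiction, where the paper must explicitly build a channel-compliant residual word with empty $\procA$-projection out of $w$ and apply \cref{lm:cc-intersection-nonemptiness-implies-prefix} to place it in $\pref(\lang(\mathcal{S}_{s_{post}}))$ — but the idea you state is exactly that construction.
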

\begin{restatable}[Send events preserve run prefixes]{lemma}{sndPreservation}
	\label{lm:sndPrefixPreservation}
	Let $\mathcal{S}$ be a protocol satisfying \Characterization and $\CLTS{T}$ be a canonical implementation for $\mathcal{S}$.  
	Let $wx$ be a trace of
	$\CLTS{T}$ such that $x \in \Alphabet_{\procA,!}$
	
	for some $\procA \in \Procs$.
	Let $\rho$ be a run in $I(w)$, and $\alpha \cdot s_{pre} \xrightarrow{l} s_{post} \cdot \beta$ be the unique splitting of $\rho$ for $\procA$ with respect to $w$.
	Then, there exists a run $\rho'$ in $I(wx)$ such that $\alpha \cdot s_{pre} \leq \rho'$.
\end{restatable}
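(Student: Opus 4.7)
The plan is to replace the tail of $\rho$ following $\alpha \cdot s_{pre}$ by one that begins with the required $x$-send, using the coherence conditions to show such a replacement exists. First, I would unpack the consequences of the unique splitting: since $\rho \in I(w)$ gives $\SyncToAsync(\trace(\rho)) \wproj_{\Alphabet_\procA} \geq w \wproj_{\Alphabet_\procA}$ and $\alpha \cdot s_{pre}$ is maximal with $\procA$-projection below $w \wproj_{\Alphabet_\procA}$, one obtains (i) $\SyncToAsync(\trace(\alpha \cdot s_{pre})) \wproj_{\Alphabet_\procA} = w \wproj_{\Alphabet_\procA}$ and (ii) $l$ involves $\procA$, since any non-$\procA$ $l$ would extend the matching prefix. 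Let $y$ be the $\procA$-projection of $l$; if $y = x$, then $\rho \in I(wx)$ already and we take $\rho' \is \rho$.

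\textbf{Finding an $x$-successor of $s_{pre}$.} If $y \neq x$, the canonical-implementation property applied to the trace $wx$ of $\CLTS{T}$ gives $w \wproj_{\Alphabet_\procA} \cdot x \in \pref(\lang(\mathcal{S}) \wproj_{\Alphabet_\procA})$. From a witness run I would extract a state $\hat s$ of $\mathcal{S}$ reached via the same synchronous $\procA$-trace as $s_{pre}$, carrying an outgoing transition $\hat s \xrightarrow{\tilde x}$, where $\tilde x \is \msgFromTo{\procA}{\procB}{\val}$ is the synchronous event corresponding to $x = \snd{\procA}{\procB}{\val}$. Then $s_{pre}$ and $\hat s$ are simultaneously reachable for $\procA$. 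If $l$ were a receive by $\procA$, No Mixed Choice applied to $\tilde x$ and $l$ would be a contradiction; so $l$ is a send by $\procA$. Sender-driven choice then forces every outgoing transition of $s_{pre}$ to be a send by $\procA$, and Send Coherence, applied to $\hat s \xrightarrow{\tilde x}$ and $s_{pre}$, collapses its guaranteed $\procA$-derivation into a direct transition $s_{pre} \xrightarrow{\tilde x} s_{pre}''$ in $\mathcal{S}$.

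\textbf{Extending to a maximal run and the main obstacle.} I would then set $\rho' \is \alpha \cdot s_{pre} \xrightarrow{\tilde x} s_{pre}'' \cdot \gamma_2$ with $\gamma_2$ a maximal suffix from $s_{pre}''$ provided by deadlock-freedom of $\mathcal{S}$. The $\procA$-projection condition $\SyncToAsync(\trace(\rho')) \wproj_{\Alphabet_\procA} \geq (wx) \wproj_{\Alphabet_\procA}$ is immediate. The main obstacle is verifying $\SyncToAsync(\trace(\rho')) \wproj_{\Alphabet_\procC} \geq w \wproj_{\Alphabet_\procC}$ for each $\procC \neq \procA$: the discarded suffix $l \cdot \beta$ of the original $\rho$ may have carried $\procC$-events needed for this bound, and an arbitrary $\gamma_2$ might not compensate. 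I expect to discharge this by choosing $\gamma_2$ non-arbitrarily, iteratively scheduling missing $\procC$-events via repeated applications of the CC conditions and sender-driven choice while preserving the already-inserted $x$-send, and invoking deadlock-freedom to keep the construction going until either all outstanding $\procC$-obligations are met or a final state is reached. This iterative realignment is the technical core of the proof, and it is precisely why the statement strengthens non-emptiness of $I(wx)$ to the prefix-preservation form, supplying the invariant needed for the induction underlying Lemma~\ref{lm:intersection-non-emptiness}.
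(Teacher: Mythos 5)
Your setup and your construction of the direct transition $s_{pre} \xrightarrow{\tilde x} s_{pre}''$ match the paper's argument closely: both use canonicity of $\CLTS{T}$ to obtain a state simultaneously reachable with $s_{pre}$ that carries an outgoing $\tilde{x}$-transition, No Mixed Choice to rule out $l$ being a receive, and sender-driven choice to collapse the $\varepsilon$-derivation guaranteed by Send Coherence into a single transition out of $s_{pre}$. The gap is precisely in the step you defer. First, your proof has no inductive structure, but the realignment cannot be carried out without one: the paper proves the lemma by induction on $|w|$, and the induction hypothesis (prefix preservation for every shorter prefix $w''\leq w$ extended by a send) is invoked, together with Lemma~\ref{lm:rcvIntersectionSetEquality} for receive extensions, inside a nested induction whose purpose is to establish that the \emph{earliest} position at which the candidate run's trace diverges from $w\wproj_{\Alphabet_{\procC}}$ for some participant $\bar{\procC}$ must be a \emph{send} event of $\bar{\procC}$. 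Only once divergence is known to happen at a send can canonicity and Send Coherence be applied a second time to redirect the run onto the branch consistent with $w$; a termination measure (the agreeing portion of $w$ strictly grows, so at most $|w|$ replacements occur) then closes the iteration.

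Second, your proposed mechanism --- ``iteratively scheduling missing $\procC$-events'' and ``invoking deadlock-freedom to keep the construction going'' --- misidentifies the obstruction. The problem is not that an arbitrary maximal suffix $\gamma_2$ is too short to cover $\procC$'s obligations; it is that $\gamma_2$ may take the \emph{wrong branches}, producing $\procC$-events that contradict what $w$ has already committed $\procC$ to observing. Deadlock-freedom only guarantees that some maximal continuation exists; the actual repair is suffix replacement at send-divergence points via Send Coherence, justified by the send-divergence claim above. Without the induction on $|w|$ and that claim, the realignment you gesture at cannot be completed, so the proposal as written does not establish the lemma.
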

Finally, we show that protocols that satisfy \Characterization and intersection set non-emptiness have deadlock-free canonical implementations. The proof follows immediately from the following lemma and the fact that CLTS are deterministic, and is thus omitted. 
\begin{restatable}[Channel compliance and intersection set non-emptiness implies prefix]{lemma}{ccIPrefix}
	\label{lm:cc-intersection-nonemptiness-implies-prefix}
	Let $\mathcal{S} = (S, \AlphSyncSubscript, T, s_0, F)$ be a protocol and let $w \in \AlphAsyncSubscript^*$ be a word such that

	(i) $w$ is channel-compliant, and 
	(ii) $I(w) \neq \emptyset$.
	Then, $w \in \pref(\lang(\mathcal{S}))$.
\end{restatable}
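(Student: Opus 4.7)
My approach is to extract a maximal run from $I(w)$ and use it to construct a suitable extension $u$ of $w$ with $w \cdot u \in \lang(\mathcal{S})$. By hypothesis (ii), fix some $\rho \in I(w)$ and let $v \is \SyncToAsync(\trace(\rho))$. Then $v$ is the asynchronous expansion of a trace in $\lang(\mathcal{S})$, so $v \in \SyncToAsync(\lang(\mathcal{S}))$, and by $\rho \in I(w)$ we have $w \wproj_{\AlphAsync_\procA} \preforder v \wproj_{\AlphAsync_\procA}$ for every participant $\procA \in \Procs$. For each $\procA$, let $u_\procA$ be the suffix such that $v \wproj_{\AlphAsync_\procA} = w \wproj_{\AlphAsync_\procA} \cdot u_\procA$.

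Next, I would build an interleaving $u \in \AlphAsyncSubscript^\infty$ of the residuals $\set{u_\procA}_{\procA \in \Procs}$ greedily: identify the set $R$ of positions in $v$ whose event has not yet been consumed by $w$ (formally, position $i$ lies in $R$ iff the index of $v[i]$ within its participant's projection exceeds the length of that projection in $w$), and schedule those events in the order they occur in $v$. By construction, $u \wproj_{\AlphAsync_\procA} = u_\procA$ for every $\procA$, and $u$ is finite or infinite according to whether $v$ is.

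The crux is verifying channel-compliance of $w \cdot u$. I would do this by induction on the greedy schedule, maintaining the invariant that after scheduling the first $k$ residual events each channel's contents match those produced by executing the corresponding prefix of $v$, relative to the extra balance of sends versus receives contributed by the $w$-consumed portion. The invariant rests on two facts: $w$ is channel-compliant by hypothesis (i), and $v$ itself is channel-compliant since $\SyncToAsync$ places each send immediately before its matching receive. A counting argument on per-channel send/receive events in prefixes of $v$ versus $w$ then shows that every scheduled receive finds its matching send already at the head of the relevant channel.

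Finally, I conclude membership in $\lang(\mathcal{S})$: in the finite case, $w \cdot u$ is channel-compliant and satisfies $(w \cdot u) \wproj_{\AlphAsync_\procA} = v \wproj_{\AlphAsync_\procA}$ for every $\procA$, matching the finite clause of $\interswaplang(\mathcal{S})$; in the infinite case, every finite prefix of $w \cdot u$ can be extended (by continuing the same scheduling) to a channel-compliant word agreeing per-participant with a finite prefix of $v$, matching the infinite clause. Either way $w \in \pref(\lang(\mathcal{S}))$. The main obstacle is the channel-compliance verification in the third step; while the intuition is clear (since $w$ is already a valid FIFO-compatible reordering of some prefix of $v$'s events, the remaining residuals may then be scheduled in $v$'s natural order), the formal bookkeeping of channel contents across the greedy schedule must be handled with care.
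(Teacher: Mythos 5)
Your proposal is correct and follows essentially the same route as the paper's proof: pick a run from $I(w)$, complete $w$ with the per-participant residuals scheduled compatibly with the run's asynchronous trace, verify channel compliance of the completion using channel compliance of $w$ together with the fact that each $w \wproj_{\AlphAsync_\procA}$ is a prefix of the run's projection, and conclude via the finite and infinite clauses of the asynchronous language definition. The only (immaterial) difference is the scheduling policy for the residual events: you take them in pure trace order, whereas the paper first flushes any pending send and orders the pending receives by their position in the trace; both yield channel-compliant completions.
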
 
\begin{lemma}[Canonical implementation deadlock freedom]
	Let $\mathcal{S} = (S, \AlphSyncSubscript, T, s_0, F)$ be a protocol satisfying \Characterization and let $\CLTS{T}$ be a canonical implementation for $\mathcal{S}$ such that for all $w \in \AlphAsyncSubscript^*$, if $w$ is a trace of $\CLTS{T}$, then $I(w) \neq \emptyset$. 
	Then, $\CLTS{T}$ is deadlock-free. 
\end{lemma}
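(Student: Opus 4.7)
The plan is to argue by contradiction. Suppose $(\vec{s}, \xi)$ is a reachable configuration of $\CLTS{T}$ via some trace $w \in \AlphAsyncSubscript^*$ that is neither final nor has any outgoing transition. Two facts about $w$ come for free: first, $w$ is channel-compliant, since every CLTS transition respects FIFO matching of sends and receives; second, $I(w) \neq \emptyset$ by the standing hypothesis. Applying \cref{lm:cc-intersection-nonemptiness-implies-prefix}, we obtain $w \in \pref(\lang(\mathcal{S}))$. From here I would split into two cases according to whether $w$ itself lies in $\lang(\mathcal{S})$ or is a strict prefix of some longer trace in $\lang(\mathcal{S})$.

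In the first case, I would argue that $(\vec{s}, \xi)$ is in fact a final configuration, contradicting the deadlock assumption. By canonicity, $w \wproj_{\Alphabet_\procA} \in \lang(T_\procA)$ for every $\procA \in \Procs$, so by determinism of $T_\procA$ the local state $\vec{s}_\procA$ reached on this projection must belong to $F_\procA$. For the channels, I would use the asynchronous semantics: any $w \in \lang(\mathcal{S})$ is obtained from a synchronous trace via $\SyncToAsync$ together with indistinguishability-preserving reorderings, and both of these operations preserve the invariant that every send is matched by its corresponding receive. Hence $\xi$ maps every channel to $\emptystring$, and the configuration is final.

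In the second case, there exists $x \in \AlphAsyncSubscript$ with $wx \in \pref(\lang(\mathcal{S}))$, and I would show that $x$ is enabled from $(\vec{s}, \xi)$. Letting $\procA$ be the active participant in $x$, canonicity gives $(wx) \wproj_{\Alphabet_\procA} \in \pref(\lang(T_\procA))$, and since $T_\procA$ is deterministic, $\vec{s}_\procA$ must have an outgoing transition labeled $x$. If $x$ is a send event, it is always enabled in a CLTS; if $x$ is a receive, the channel-compliance of the extended word $wx$ guarantees that the expected message sits at the head of the relevant channel, so the receive is likewise enabled. Either way, $(\vec{s}, \xi)$ has an outgoing transition on $x$, contradicting the deadlock assumption.

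The main obstacle is keeping the case analysis clean, particularly for the ``finite maximal trace'' case where the absence of outgoing transitions must be used together with canonicity and determinism to force both local finality and empty channels simultaneously. The infinite case does not arise because a deadlock is by definition a finite-trace phenomenon. Once \cref{lm:cc-intersection-nonemptiness-implies-prefix} is available, the rest of the argument is essentially bookkeeping with the canonical-implementation equations and determinism of each $T_\procA$.
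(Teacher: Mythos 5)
Your proof is correct and follows exactly the route the paper intends: the paper omits this proof, noting only that it ``follows immediately'' from the lemma that channel compliance plus intersection-set non-emptiness implies membership in $\pref(\lang(\mathcal{S}))$ together with determinism of the CLTS, and your argument is precisely that reduction --- derive $w \in \pref(\lang(\mathcal{S}))$ for the trace of a putative deadlock configuration, then use canonicity and determinism of each $T_\procA$ (plus the CLTS channel-content invariant) to show the configuration is either final or has an enabled transition. No gaps.
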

Soundness thus follows from the three conditions above. 
\begin{restatable}[Soundness of \Characterization]{lemma}{soundnessThm}
	Let $\mathcal{S}$
	
	be a protocol.
	If $\mathcal{S}$ satisfies \Characterization, then $\mathcal{S}$ is implementable. 
\end{restatable}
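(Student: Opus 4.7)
The plan is to use the canonical implementation $\CLTS{T}$ (Definition~\ref{def:local-language-property}) as the existential witness for implementability, and then to discharge the three obligations of Definition~\ref{def:lts-implementability} using the intermediate lemmas already in place. By the definition of implementability, I must show: (a) $\lang(\mathcal{S}) \subseteq \lang(\CLTS{T})$, (b) $\lang(\CLTS{T}) \subseteq \lang(\mathcal{S})$, and (c) $\CLTS{T}$ is deadlock-free. Existence of a canonical implementation is unproblematic: for each $\procA \in \Procs$, take $T_\procA$ to be the minimal deterministic automaton recognizing $\lang(\mathcal{S})\wproj_{\AlphAsync_\procA}$ with final states chosen so that both conditions of Definition~\ref{def:local-language-property} are met.

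The linchpin of the argument is the intersection set non-emptiness invariant (Lemma~\ref{lm:intersection-non-emptiness}), which asserts that for every trace $w$ of $\CLTS{T}$, $I(w) \neq \emptyset$. Once this invariant is established, obligations (b) and (c) follow immediately: for (b), I invoke the stated lemma that canonical implementation traces satisfying $I(w)\neq\emptyset$ are contained in $\lang(\mathcal{S})$, handling finite and infinite traces separately (the infinite case uses the $\preceq_\interswap^\omega$-closure of $\lang(\mathcal{S})$ by extending any finite prefix via the intersection set witness); for (c), Lemma~\ref{lm:cc-intersection-nonemptiness-implies-prefix} together with determinism of each $T_\procA$ and channel-compliance of traces of any CLTS shows that every reachable non-final configuration of $\CLTS{T}$ has an outgoing transition, by picking any run in $I(w)$ and following the first event of its remaining suffix that is enabled in the current configuration. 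Obligation (a) is the easy direction established by the stated lemma that holds without appeal to CC.

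The main obstacle is Lemma~\ref{lm:intersection-non-emptiness} itself, proved by induction on the length of the trace $w$ of $\CLTS{T}$. The base case $w = \emptystring$ is trivial, since every maximal run of $\mathcal{S}$ lies in $I(\emptystring)$ and at least one such run exists by deadlock freedom of $\mathcal{S}$ (Condition~\ref{cond-gclts:3} of Definition~\ref{def:gclts}). For the inductive step, I split on whether the appended action is a send or receive. The receive case is discharged directly by Lemma~\ref{lm:rcvIntersectionSetEquality}, whose proof will lean on Receive Coherence and No Mixed Choice to show that no run in $I(w)$ is excluded when we extend by a receive $x \in \Alphabet_?$, since any such exclusion would identify two locally indistinguishable states for the receiver with incompatible message availability. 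The send case is discharged by Lemma~\ref{lm:sndPrefixPreservation}, which requires Send Coherence to guarantee that for every run $\rho \in I(w)$ and its unique splitting for the sender $\procA$ matching $w$, some extension re-synchronizing the send $x$ exists in $\mathcal{S}$. Here the subtlety is that the split point need not itself enable $x$; one exploits SC along the equivalence class of states simultaneously reachable for $\procA$ to rearrange $\rho$ into a new run that still extends $\alpha\cdot s_{pre}$ and executes $x$ at some later point. Once Lemma~\ref{lm:intersection-non-emptiness} is in hand, Lemma~\ref{cor:no-mixed-choice} also follows from NMC and canonicity, ensuring that the chosen $\CLTS{T}$ never equivocates between sending and receiving in a single local state.

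Putting the pieces together: from CC, we obtain intersection set non-emptiness along all traces of $\CLTS{T}$; this feeds the language-inclusion lemma on the right, combines with the free inclusion lemma on the left to yield $\lang(\CLTS{T})=\lang(\mathcal{S})$, and combines with channel-compliance to yield deadlock freedom. Hence $\CLTS{T}$ implements $\mathcal{S}$, so $\mathcal{S}$ is implementable.
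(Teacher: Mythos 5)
Your proposal is correct and follows essentially the same route as the paper: the canonical implementation as witness, the three obligations of protocol fidelity (both inclusions) and deadlock freedom, and the intersection set non-emptiness invariant proved by induction with the send/receive case split discharged by Send Coherence and Receive Coherence respectively. The only minor presentational difference is that the no-mixed-choice property of canonical implementations is established in the paper directly from NMC and sender-drivenness, independently of (and prior to) the intersection non-emptiness invariant, since it is used inside the proofs of both case lemmas.
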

\subsection{Completeness}
Completeness requires us to show that if a protocol is implementable, then it satisfies \Characterization. 
We prove completeness by modus tollens, and assume that a protocol $\mathcal{S}$ does not satisfy \Characterization. 
We negate SC, RC and NMC in turn: from the negation of SC we obtain a simultaneously reachable pair of states in~$\mathcal{S}$ such that a send event that is enabled in one is never enabled from the other. From the negation of RC there exists a simultaneously reachable pair of states in $\mathcal{S}$ such that a receive event that is enabled in one is also enabled in the other. From the negation of NMC we obtain a simultaneously reachable pair of transitions where a participant is the sender in one and the receiver in the other. 
We assume an arbitrary CLTS that implements $\mathcal{S}$, and use these witnesses to show that this CLTS must recognize a trace that is not a prefix in $\lang(\mathcal{S})$, thereby either violating protocol fidelity or deadlock freedom.
\begin{restatable}[Completeness]{lemma}{completenessThm}
	Let $\mathcal{S}$ be a protocol. 
	If $\mathcal{S}$ is implementable, then $\mathcal{S}$ satisfies \Characterization.
\end{restatable}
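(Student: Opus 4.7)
The plan is to prove the contrapositive: assuming $\mathcal{S}$ violates at least one of SC, RC, or NMC, I will exhibit, for every CLTS $\CLTS{T}$, a trace in $\lang(\CLTS{T})$ that lies outside $\pref(\lang(\mathcal{S}))$. Since $\CLTS{T}$ is required to be deadlock-free, such a trace can be extended to a maximal trace of $\CLTS{T}$, yielding a violation of $\lang(\CLTS{T}) \subseteq \lang(\mathcal{S})$. The proof splits into three cases, one per coherence condition; in each the negated condition supplies a pair of simultaneously reachable states $s_1, s_1'$ that are indistinguishable to some participant but prescribe conflicting local behaviour.

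In the $\neg$SC case I have $s_1 \xrightarrow{\msgFromToNS{\procA}{\procB}{\val}} s_2 \in T$, a state $s_1'$ with $s_0 \xRightarrow[\procA]{u} \Kleenestar s_1, s_1'$, and no $s_2'$ with $s_1' \xRightarrow[\procA]{\msgFromToNS{\procA}{\procB}{\val}} \Kleenestar s_2'$. Since $\lang(\CLTS{T}) = \lang(\mathcal{S})$ and $T_\procA$ is deterministic, $T_\procA$ reaches a unique state after reading $u\wproj_{\AlphAsync_\procA}$, and this state admits $\snd{\procA}{\procB}{\val}$ as an outgoing transition, certified by any asynchronous realisation of the $s_1$-run. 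Fixing a synchronous run $w_1'$ of $\mathcal{S}$ reaching $s_1'$ with $\procA$-projection $u$, I construct the bad trace by taking a channel-compliant shuffle of $\SyncToAsync(w_1')$ and appending $\snd{\procA}{\procB}{\val}$. The $\neg$RC case is dual: its witnesses pair transitions $s_1 \xrightarrow{\msgFromToNS{\procA}{\procB}{\val}} s_2$ and $s_1' \xrightarrow{\msgFromToNS{\procC}{\procB}{\val}} s_2'$ with $\procC \neq \procA$, together with a $\procB$-silent continuation $w$ from $s_2'$ depositing an extra $\val$ from $\procA$ at the head of $\procB$'s queue. In the CLTS I execute a shuffle of the $s_1'$-branch up to the point where $\procB$'s local trace is still $u\wproj_{\AlphAsync_\procB}$, replay $w$ to fill the channel, and then let $\procB$ consume $\val$ via $\rcv{\procA}{\procB}{\val}$---a transition enabled in $T_\procB$ by the $s_1$-run. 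The $\neg$NMC case is the most direct: its witnesses force $T_\procA$ to offer both a send and a receive at one local state, and a natural scheduling produces a configuration with no legal continuation.

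The main obstacle will be justifying that each fabricated trace truly falls outside $\pref(\lang(\mathcal{S}))$. Because $\lang(\mathcal{S})$ is closed under channel-compliant indistinguishability, I must rule out every synchronous run of $\mathcal{S}$ whose $\SyncToAsync$-image, after reshuffling and extension, could absorb the trace. The pivotal observation is a uniqueness-of-state claim: if the projection onto every participant other than the ``cheating'' one agrees with that of a synchronous run reaching $s_1'$ (respectively $s_2'$), any synchronous witness of $\mathcal{S}$ must itself reach $s_1'$ (resp.\ $s_2'$) at the corresponding position, whereupon the negated coherence condition directly forbids the subsequent local step. Equivalently, one can phrase the contradiction via \cref{def:lts-intersection-sets} by showing $I(w) = \emptyset$ for the constructed $w$ and invoking the converse of \cref{lm:cc-intersection-nonemptiness-implies-prefix}. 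The most delicate technical points will be the careful bookkeeping of channel compliance during the shuffles, and extending the argument to infinite traces, where the $\omega$-acceptance clause of the asynchronous semantics demands that no indistinguishable infinite continuation in $\lang(\mathcal{S})$ can rescue the bad prefix.
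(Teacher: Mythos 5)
Your proposal is correct and follows essentially the same route as the paper's proof: argue the contrapositive, and for each violated condition build a witness trace of an arbitrary implementing CLTS by scheduling a run to the ``other'' simultaneously reachable state and appending the offending event, then show the witness lies outside $\pref(\lang(\mathcal{S}))$ because determinism plus sender-driven choice pin down the unique consistent run (equivalently, $I(v_0)=\emptyset$). The only point to sharpen is the NMC case: the contradiction there is a protocol-fidelity violation (the always-enabled send fires along the receive branch), not a configuration with no outgoing step, since send transitions are never disabled in a CLTS.
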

An immediate consequence of the soundness and completeness of \Characterization is the following fact about the special case of binary protocols, when $\card{\Procs} = 2$: 
\begin{lemma}
	Every binary protocol is implementable. 
\end{lemma}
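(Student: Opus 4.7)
The plan is to derive this as a corollary of \cref{thm:soundness}: show that every binary protocol automatically satisfies CC, hence is implementable. Fix $\Procs = \set{\procA, \procB}$ and a binary protocol $\mathcal{S} = (S, \AlphSyncSubscript, T, s_0, F)$.

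The key observation is that in a binary setting, every synchronous event $\msgFromTo{\procC}{\procD}{\val}$ has $\set{\procC, \procD} = \set{\procA, \procB}$, so $\AlphSync_\procA = \AlphSync_\procB = \AlphSyncSubscript$. Consequently, for any $w \in \AlphSyncSubscript^*$ and any participant $\procP$, the homomorphism $w \wproj_{\AlphSync_\procP}$ is the identity. Combined with the determinism guaranteed by the implication $l_1 = l_2 \implies s_1 = s_2$ in sender-driven choice (\cref{def:gclts}, Condition~\ref{cond-gclts:2}), a straightforward induction on $\card{u}$ shows that $s_0 \xRightarrow[\procP]{u} \Kleenestar s_1$ and $s_0 \xRightarrow[\procP]{u} \Kleenestar s_1'$ imply $s_1 = s_1'$. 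In other words, simultaneous reachability collapses to equality.

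With this observation in hand, each coherence condition is essentially vacuous. For Send Coherence (\cref{cond:scc}), we may take $s_1' = s_1$ and the outgoing transition $s_1 \xrightarrow{\msgFromToNS{\procA}{\procB}{\val}} s_2$ itself witnesses the required extension. For Receive Coherence (\cref{cond:rcc}), we again have $s_1' = s_1$, and the premise demands two transitions $s_1 \xrightarrow{\msgFromToNS{\procA}{\procB}{\val}} s_2$ and $s_1 \xrightarrow{\msgFromToNS{\procC}{\procB}{\val}} s_2'$ with $\procC \neq \procA$; but in the binary setting the only possible sender to $\procB$ is $\procA$, forcing $\procC = \procA$, so the premise is unsatisfiable. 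For No Mixed Choice (\cref{cond:nmc}), the premise requires $s_1 \xrightarrow{\msgFromToNS{\procA}{\procB}{\val}} s_2$ and $s_1 \xrightarrow{\msgFromToNS{\procC}{\procA}{\val}} s_2'$ (again exploiting $s_1' = s_1$); by sender-driven choice the two senders coincide, so $\procC = \procA$, but then the second transition is $\msgFromTo{\procA}{\procA}{\val}$, which is excluded because channels connect distinct participants. Hence NMC holds vacuously as well.

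Putting these three pieces together yields that $\mathcal{S}$ satisfies CC, and soundness (\cref{thm:soundness}) then delivers implementability. The main (and only) subtlety is the collapse of simultaneous reachability to state equality in the binary case; once that is established, each condition reduces to a short case analysis. There is no real obstacle, since the proof is essentially a bookkeeping exercise on the definitions.
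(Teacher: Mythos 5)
Your proof is correct and follows essentially the same route as the paper: both arguments observe that in the binary case every synchronous event involves both participants, so participant-based reachability coincides with ordinary reachability, and determinism then collapses simultaneous reachability to state equality, after which CC holds (the paper dismisses all three conditions as vacuous, while you more carefully note that SC is not vacuous but is witnessed by the given transition itself). No gaps; your version is just a more detailed write-up of the paper's two-sentence sketch.
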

In the binary case, participant-based reachability is equivalent to standard reachability, because both participants are involved in every synchronous communication. Because protocols are deterministic, there exist no two distinct states in a binary protocol that are simultaneously reachable for either participant, and thus \Characterization holds vacuously. 

\subsection{Synthesis} 

When proving soundness, we chose the canonical implementation as our witness to implementability. 
In other words, if a protocol satisfies \Characterization, then the canonical implementation implements it. 
When proving completeness, we showed that \emph{any} implementation would cause a violation to protocol fidelity or deadlock-freedom. 
In other words, if a protocol violates \Characterization, then no implementation exists. 
Having established that \Characterization precisely characterizes implementable protocols, we combine these observations to yield the following corollary: 
\begin{corollary}[Canonical implementation is all you need]
	A protocol is implementable if and only if the canonical implementation implements it. 
\end{corollary}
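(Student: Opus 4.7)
The plan is to prove the two directions separately, reusing the soundness and completeness arguments already developed for \Characterization together with \cref{thm:soundness}.

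The reverse direction ($\Leftarrow$) is immediate from the definition of implementability: if the canonical implementation $\CLTS{T}$ implements $\mathcal{S}$ (witnessing protocol fidelity and deadlock freedom), then by \cref{def:lts-implementability} the protocol $\mathcal{S}$ is implementable with $\CLTS{T}$ itself as the witness.

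For the forward direction ($\Rightarrow$), I would argue in two steps. First, assume that $\mathcal{S}$ is implementable by some (arbitrary) CLTS. Apply the completeness lemma to conclude that $\mathcal{S}$ satisfies \Characterization. Second, invoke the soundness lemma: its proof does not pick an arbitrary witness but specifically constructs the canonical implementation $\CLTS{T}$ of $\mathcal{S}$ and shows, assuming \Characterization holds, that $\CLTS{T}$ satisfies both protocol fidelity (\cref{claim:global-lang-incl-clts-lang} and \cref{claim:clts-lang-incl-global-lang} in the soundness argument) and deadlock freedom (\cref{claim:clts-deadlock-free}). Hence $\CLTS{T}$ implements $\mathcal{S}$, as required.

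There is no genuine obstacle here beyond bookkeeping: the corollary is essentially a repackaging of \cref{thm:soundness}. The only subtle point worth flagging explicitly in the write-up is that the soundness direction in our characterization is constructive in the canonical implementation, as opposed to merely asserting the existence of \emph{some} implementation; this is what licenses strengthening ``$\mathcal{S}$ is implementable'' to ``the canonical implementation implements $\mathcal{S}$''. It is therefore worth noting this constructivity explicitly when citing the soundness lemma, so that the reader sees why the witness produced by soundness coincides with the canonical implementation and not some other CLTS. With that observation in place the two directions combine immediately to give the corollary.
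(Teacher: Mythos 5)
Your proposal is correct and follows exactly the paper's own reasoning: the paper likewise combines completeness (implementable $\Rightarrow$ \Characterization) with the fact that the soundness proof instantiates its existential witness with the canonical implementation, plus the trivial converse. The observation you flag about the constructivity of the soundness direction is precisely the point the paper makes when deriving the corollary.
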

For an implementable protocol, this fact serves as a criterion for synthesizing implementations: any implementation that is canonical will suffice. 
For the general class of protocols, synthesis is undecidable. 
However, for many expressive fragments of protocols that still feature infinite data, \eg corresponding to symbolic finite automata~\cite{DBLP:conf/cav/DAntoniV17, DBLP:journals/corr/abs-2303-00924} and certain classes of timed and register automata~\cite{DBLP:journals/fmsd/BertrandSJK15, DBLP:journals/lmcs/ClementeLP22}, one can simply use off-the-shelf determinization algorithms to compute canonical implementations~\cite{DBLP:conf/icst/VeanesHT10, DBLP:conf/tacas/VeanesB12,  DBLP:journals/jlp/BertrandBBC18}. 
\section{Checking Implementability}
\label{sec:symbolic}
Having established that \Characterization is precise for protocol implementability, we next present sound and relatively complete algorithms to check \Characterization for several protocol classes. We start with the most general case of symbolic protocols before considering decidable classes of finite-state~protocols.
\subsection{Symbolic Protocols}
In the remainder of the section, we fix a symbolic protocol $\SymProt = (S, R, \Delta, s_{0}, \rho_0, F)$. 
We assume that the concretization of $\SymProt$ is a GCLTS (\cref{def:gclts}). 
Additionally, we define two copies of the symbolic protocol, denoted $\SymProt_1$ and $\SymProt_2$ that we will use in describing our symbolic implementability check. 
Each copy $\SymProt_i = (R_i, S, \Delta_i, \rho_i, s_0, F)$ with $i \in \set{1,2}$ is obtained from~$\SymProt$ by renaming each register $r$ to a fresh register $r_i$, each unique communication variable $x$ to $x_i$, and substituting the new register and communication variables into the transition constraints and initial register assignment accordingly; the control states remain the~same.

Because symbolic protocols describe concrete protocols with infinitely many states and transitions, implementability cannot be checked explicitly using our \Characterization characterization for protocols, \ie by iterating over all states and transitions.
Instead, we present symbolic conditions that are valid on the symbolic protocol if and only if its concrete protocol is~implementable. 
\begin{theorem}[Symbolic Implementability]
	$\SymProt$ is implementable if and only if it satisfies Symbolic Send Coherence, Symbolic Receive Coherence, and Symbolic No Mixed Choice.  
\end{theorem}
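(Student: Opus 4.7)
The plan is to reduce the theorem directly to \cref{thm:soundness} (preciseness of \Characterization) via the concretization. By definition, $\SymProt$ is implementable iff $\mathcal{S}_\SymProt$ is, which by \cref{thm:soundness} holds iff $\mathcal{S}_\SymProt$ satisfies \Characterization. So it suffices to prove three independent equivalences: $\SymProt$ satisfies Symbolic Send Coherence iff $\mathcal{S}_\SymProt$ satisfies SC, and analogously for RC and NMC. Each of these is a matter of expressing the quantifiers ranging over concrete states $(s, \rho) \in S \times (R \to \MsgVals)$ in \cref{cond:scc,cond:rcc,cond:nmc} as formulas over two copies $\SymProt_1, \SymProt_2$ of $\SymProt$ together with constraint formulas relating their registers $R_1, R_2$ and the communicated values.

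The central ingredient is a symbolic characterization of participant-based simultaneous reachability. I would define, for each $\procA \in \Procs$, a least-fixpoint binary predicate $\operatorname{SimReach}_\procA$ over $S \times S$ whose transfer function is read off from the product $\SymProt_1 \times \SymProt_2$: the two copies synchronise on symbolic transitions in $\AlphSync_\procA$ (requiring identical sender, receiver, and communicated value, conjoining the two transition formulas) and step independently on all other transitions. The accumulated register constraint at a pair $(s_1, s_1')$ is satisfied by valuations $(\rho_1, \rho_1')$ precisely when the concrete states $(s_1, \rho_1)$ and $(s_1', \rho_1')$ are simultaneously reachable for $\procA$ in $\mathcal{S}_\SymProt$. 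This correspondence is proved by induction on the length of the common $\AlphSync_\procA$-projection using the concretization rule.

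With this in hand, Symbolic SC becomes: whenever $\operatorname{SimReach}_\procA(s_1, s_1')$ holds under $(\rho_1, \rho_1')$ and some symbolic send transition out of $s_1$ admits a witness $(\val, \rho_2)$ for its constraint, then some outgoing symbolic send from $s_1'$ with the same sender, receiver, and value $\val$ is satisfiable from $\rho_1'$. Since each concrete transition arises from exactly one symbolic one via the concretization rule, the equivalence with SC on $\mathcal{S}_\SymProt$ is immediate after unfolding definitions. SNMC is even simpler: it forbids $\operatorname{SimReach}_\procA(s_1, s_1')$ from witnessing one control state with an outgoing $\procA$-send and the other with an outgoing $\procA$-receive. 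The main obstacle is SRC, whose concrete form quantifies universally over all prefixes of $\lang(\mathcal{S}_{s_2'})$ to rule out a forbidden receive event from $\procA$ to $\procB$ carrying $\val$ at a specific channel offset. Here I would encode the forbidden pattern by a greatest-fixpoint predicate over control states that tracks, symbolically, the number of pending $\procA \to \procB$ messages and the value that would appear at the head; the predicate holds at $(s_2', \rho_2')$ exactly when every symbolic continuation avoids the bad receive. Proving soundness and completeness of this encoding requires a coinductive unfolding of the trace-prefix quantifier through channel-compliance, but once verified it closes the last of the three equivalences and the theorem follows.
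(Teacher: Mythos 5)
Your overall strategy --- concretize, invoke \cref{thm:soundness}, and prove each symbolic condition equivalent to its concrete counterpart via a least-fixpoint product predicate for simultaneous reachability --- is the intended route, and your simultaneous-reachability predicate is exactly the paper's $\prodreach_\procA$ (\cref{def:prod-reach}); the No Mixed Choice case is fine. The first genuine gap is in your Symbolic Send Coherence. Concrete SC (\cref{cond:scc}) requires $s_1' \xRightarrow[\procA]{\msgFromToNS{\procA}{\procB}{\val}}\Kleenestar s_2'$, i.e.\ the matching send need only be reached from $s_1'$ after a possibly unbounded sequence of transitions not involving $\procA$, whereas you require an \emph{immediate} outgoing send of $s_1'$ whose constraint is satisfiable from $\rho_1'$. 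This breaks the ``only if'' direction: the protocol of \cref{fig:send-coherence-reachability-important-1} is implementable, yet $q_3$ has no immediate $\procC$-send matching the one at $q_1$ --- the match exists only after the loop at $q_3$ terminates. This $\varepsilon$-reachability is a liveness property needing a termination argument, which is why the paper introduces the greatest-fixpoint predicate $\unreach^\emptystring_{\procA,\procB}$ (\cref{def:epsilon-unreach-pair}) rather than inspecting the outgoing transitions of $s_1'$ locally.

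The second gap is in your Receive Coherence encoding. The concrete condition quantifies over prefixes $w$ of the \emph{asynchronous} language $\lang(\mathcal{S}_{s_2'})$ with $w\wproj_{\Alphabet_\procB} = \emptystring$, i.e.\ over reorderings in which all of $\procB$'s events, and every event causally downstream of them, are postponed. A coinductive predicate over synchronous continuations that only tracks pending $\procA\to\procB$ channel contents does not capture this: the crux is whether $\msgFromToNS{\procA}{\procB}{\val}$ can become pending \emph{without} $\procA$'s send causally depending on $\procB$ acting first. The paper's $\avail_{\procA,\procB,\blockedset}$ (\cref{def:sym-availability}) encodes precisely this via the monotonically growing blocked set $\blockedset$, where a participant becomes blocked by receiving from a blocked participant; without some equivalent causality bookkeeping, your predicate will misclassify protocols in which the forbidden receive is ruled out only because the competing send is causally gated on $\procB$. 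Repairing both points essentially forces you onto the paper's $\unreach^\emptystring$ and $\avail$ constructions.
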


We now present these symbolic conditions, starting with Symbolic Send Coherence.

Send Coherence first requires us to characterize pairs of states in a protocol that are simultaneously reachable for each participant on some prefix in its local language.
In the symbolic setting, this amounts to the following: given a participant and a pair of control states $(s_1,s_2)$ in the symbolic protocol, characterize the register assignments for pairs of concrete states $(s_1,\rho_1)$, $(s_2,\rho_2)$ that are in the respective control states.
The predicate $\prodreach_{\procA}(s_1,\boldsymbol{r_1},s_2,\boldsymbol{r_2})$ describes this for each $\procA \in \Procs$ where $\boldsymbol{r_i}$ are vectors of the registers in $R_i$ obtained by ordering them according to some fixed total order. We define this predicate as a least fixpoint as follows.
\begin{definition}[Simultaneous reachability in product symbolic protocol]
	\label{def:prod-reach}
	Let $\procA \in \Procs$ be a participant and let $s_1, s_1', s_2, s_2' \in S$.
	Then, \fs{Added closing parentheses below. I'm pretty sure that's correct but remove if not please.}
	{ \small
	\begin{align*}
		&
		\prodreach_{\procA}(s_1', \boldsymbol{r_1'}, s_2', \boldsymbol{r_2'}) \is_\mu 
		(\,
		s_1' = s_0 \land s_2' = s_0 \land \boldsymbol{r_1'} = \rho_0 \land \boldsymbol{r_2'} = \rho_0
		\,)
		\\
		& \qquad
		\lor 
		(
			\bigvee_{\substack{
				(s_1, \,\msgFromToNS{\procC}{\procD}{x_1 \set{\varphi_1}}, \,s_1') \in \Delta_1 \\
				(s_2, \,\msgFromToNS{\procC}{\procD}{x_2 \set{\varphi_2}}, \,s_2') \in \Delta_2 \\
				\procA = \procC \lor \procA = \procD}}
		\hspace{-2ex}\exists x_1 x_2 \boldsymbol{r_1} \boldsymbol{r_2}.\, \prodreach_{\procA}(s_1, \boldsymbol{r_1}, s_2, \boldsymbol{r_2})
		\land 
		\varphi_1
		\land 
		\varphi_2
		\land 
		x_1 = x_2
		\,)
				\\
				& \qquad
				\lor 
				(
				\bigvee_{\substack{
						(s_1, \,\msgFromToNS{\procC}{\procD}{x_1 \set{\varphi_1}}, \,s_1') \in \Delta_1  
						\, \land \,
						\procA \neq \procC \land \procA \neq \procD}}
				\hspace{-4ex}\exists x_1 \boldsymbol{r_1}.\, \prodreach_{\procA}(s_1, \boldsymbol{r_1}, s_2', \boldsymbol{r_2'})
				\land 
				\varphi_1
				\,)
				\\
				& \qquad
				\lor 
				(
				\bigvee_{\substack{
						(s_2, \,\msgFromToNS{\procC}{\procD}{x_2 \set{\varphi_2}}, \,s_2') \in \Delta_2 
						\, \land \,
						\procA \neq \procC \land \procA \neq \procD}}
				\hspace{-4ex}\exists x_2 \boldsymbol{r_2}.\, \prodreach_{\procA}(s_1', \boldsymbol{r_1'}, s_2, \boldsymbol{r_2})
				\land 
				\varphi_2
				\,)
				\enspace.
	\end{align*}
	}
\end{definition}
The second top-level disjunct in the definition after the base case handles the cases where $\SymProt_1$ and $\SymProt_2$ synchronize on a common action involving $\procA$. The remaining two disjuncts correspond to the cases where either $\SymProt_1$ or $\SymProt_2$ follows an $\varepsilon$ transition.
Given a pair of simultaneously reachable states $(s_1,\rho_1)$, $(s_2,\rho_2)$ in $\procA$, Send Coherence now checks whether all values $x_1$ that can be sent to some $\procB$ in $(s_1,\rho_1)$ can also be sent from $(s_2,\rho_2)$, modulo following $\varepsilon$ transitions to reach the actual state where $\procA$ can send to $\procB$. We thus need to express $\varepsilon$-reachability. We formalize the dual: the predicate $\unreach^\emptystring_{\procA,\procB}(s_2, \boldsymbol{r_2}, x_1)$ expresses that $\procA$ \emph{cannot} reach any state where it may send $x_1$ to $\procB$, by following $\varepsilon$ transitions from symbolic state $(s_2,\boldsymbol{r_2})$. This is formulated as a greatest fixpoint as follows: 
\begin{definition}[$\varepsilon$-unreachability of $\procA\!$ sending $x$ to $\procB$]
	\label{def:epsilon-unreach-pair}
	For $\procA, \procB \in \Procs$ and $s \in S$, let
	{ \small
	\begin{align*}
		&\unreach^\emptystring_{\procA,\procB}(s, \boldsymbol{r}, x) \is_\nu 
		&
		( \bigwedge_{\substack{
				(s, \,\msgFromToNS{\procA}{\procB}{y \set{\varphi}}, \,s') \in \Delta}}
		\hspace{-3ex} \neg \varphi[x/y]\,) 
		&\land ( \bigwedge_{\substack{
				(s, \,\msgFromToNS{\procC}{\procE}{y \set{\varphi}}, \,s') \in \Delta \\
				\procA \neq \procC \land \procA \neq \procE}}
		\hspace{-3ex} \forall y~\boldsymbol{r'}.\, \varphi \Rightarrow \unreach^\emptystring_{\procA,\procB}(s', \boldsymbol{r'}, x) \,) \enspace.
	\end{align*}
	}
\end{definition}
The first conjunct checks that whenever $\procA$ reaches a state with an outgoing send transition to $\procB$, it cannot send the value $x$ because the transition constraint $\varphi$ is not satisfied. The second conjunct checks that every outgoing $\varepsilon$ transition is either disabled ($\neg \varphi$ holds) or following the transition does not reach an appropriate send state.

We combine the auxiliary predicates into our Symbolic Send Coherence condition. 
\begin{definition}[Symbolic Send Coherence] 
	\label{cond:sym-send-coherence}
	A symbolic protocol $\SymProt$  satisfies Symbolic Send Coherence when for each transition $s_1 \xrightarrow{\msgFromToNS{\procA}{\procB}{x_1 \set{\varphi_1}}} s_1' \in \Delta_1$ and state  $s_2 \in S$, the following is valid:
	{ \small
	\[
		\prodreach_\procA(s_1, \boldsymbol{r_1}, s_2, \boldsymbol{r_2})~\land~\varphi_1
		\land 
		\unreach^\emptystring_{\procA,\procB}(s_2, \boldsymbol{r_2}, x_1)
		\implies 
		\bot \enspace.
	\]
	}
\end{definition}
A keen reader may have noticed that because the symbolic characterization of Send Coherence involves a greatest fixpoint, it is a liveness property. Thus, proving Send Coherence, in general, involves a termination argument. To see this, consider the two protocols shown in \cref{fig:send-coherence-reachability-important-1,fig:send-coherence-reachability-important-2}. Consider the pair of states $(q_1,[c\mapsto0])$ and $(q_3,[c\mapsto0])$ which are simultaneously reachable for $\procC$ in both protocols. The send transition for $\procC$ enabled in $q_1$ needs to be matched with a corresponding send transition in an $\emptystring$-reachable state from $q_3$. The only candidate states for this match in both protocols are those at control state $q_4$. These states are reachable from $q_3$ if and only if the loop in $q_3$ terminates, which it does in \cref{fig:send-coherence-reachability-important-1} but not in \cref{fig:send-coherence-reachability-important-2}. 
\begin{figure}
  \begin{minipage}[b]{.45\textwidth}
\resizebox{0.95\textwidth}{!}{
\begin{tikzpicture}[sem, node distance=2em and 1em]
    \node[state, initial above, initial text = ] (q0) {$q_{0}$};
    \node[state, below left = of q0] (q1) {$q_{1}$};
    \node[state, accepting, below = of q1] (q2) {$q_{2}$};
    \node[state, below right = of q0] (q3) {$q_{3}$};
    \node[state, below = of q3] (q4) {$q_{4}$};
    \node[state, accepting, below = of q4] (q5) {$q_{5}$};
    \path (q0) edge node[left,yshift=.9ex] {$\msgFromTo{\procA}{\procB}{b
    \left\{ \hspace{-1.5ex}
    \begin{array}{c}
    b=0 \\
    \land \, c' = 0
    \end{array}
    \hspace{-1.5ex} \right\}
    }$} (q1);
    \path (q1) edge node[left] {$\msgFromTo{\procC}{\procA}{x \cond{\top}}$} (q2);
    \path (q0) edge node[right,yshift=.9ex] {$\msgFromTo{\procA}{\procB}{b
    \left\{ \hspace{-1.5ex}
    \begin{array}{c}
    b > 0 \\
    \land \, c' = 0
    \end{array}
    \hspace{-1.5ex} \right\}
    }$} (q3);
    \path (q3) edge[loop right] node[right] {$\msgFromTo{\procA}{\procB}{c
    \left\{ \hspace{-1.5ex}
    \begin{array}{c}
    c < b \\
    \land \, c' = c+1
    \end{array}
    \hspace{-1.5ex} \right\}
    }$
    } (q3);
    \path (q3) edge node[right] {$\msgFromTo{\procA}{\procB}{\mathsf{exit} \cond{c \geq b}}$} (q4);
    \path (q4) edge node[right] {$\msgFromTo{\procC}{\procA}{x \cond{\top}}$} (q5);
\end{tikzpicture}
  }
 \caption{
 Example where states $q_1$ and $q_3$ satisfy \\ Send Coherence for $\procC$.
 \label{fig:send-coherence-reachability-important-1}
 }
\end{minipage}
\begin{minipage}[b]{0.45\textwidth}
\resizebox{0.95\textwidth}{!}{
\begin{tikzpicture}[sem, node distance=2em and 1em]
    \node[state, initial above, initial text = ] (q0) {$q_{0}$};
    \node[state, below left = of q0] (q1) {$q_{1}$};
    \node[state, accepting, below = of q1] (q2) {$q_{2}$};
    \node[state, below right = of q0] (q3) {$q_{3}$};
    \node[state, below = of q3] (q4) {$q_{4}$};
    \node[state, accepting, below = of q4] (q5) {$q_{5}$};
    \path (q0) edge node[left,yshift=.9ex] {$\msgFromTo{\procA}{\procB}{b
    \left\{ \hspace{-1.5ex}
    \begin{array}{c}
    b=0 \\
    \land \, c' = 0
    \end{array}
    \hspace{-1.5ex} \right\}
    }$} (q1);
    \path (q1) edge node[left] {$\msgFromTo{\procC}{\procA}{x \cond{\top}}$} (q2);
    \path (q0) edge node[right,yshift=.9ex] {$\msgFromTo{\procA}{\procB}{b
    \left\{ \hspace{-1.5ex}
    \begin{array}{c}
    b > 0 \\
    \land \, c' = 0
    \end{array}
    \hspace{-1.5ex} \right\}
    }$} (q3);
    \path (q3) edge[loop right] node[right] {$\msgFromTo{\procA}{\procB}{c
    \left\{ \hspace{-1.5ex}
    \begin{array}{c}
    c \geq 0 \\
    \land \, c' = c+1
    \end{array}
    \hspace{-1.5ex} \right\}
    }$
    } (q3);
    \path (q3) edge node[right] {$\msgFromTo{\procA}{\procB}{\mathsf{exit} \cond{c < 0}}$} (q4);
    \path (q4) edge node[right] {$\msgFromTo{\procC}{\procA}{x \cond{\top}}$} (q5);
\end{tikzpicture}
  }
 \caption{
 Example where states $q_1$ and $q_3$ violate \\ Send Coherence for $\procC$.
 \label{fig:send-coherence-reachability-important-2}
 }
\end{minipage}
\end{figure}

Receive Coherence is conditioned on two simultaneously reachable states  $(s_1,\boldsymbol{r_1})$ and $(s_2,\boldsymbol{r_2})$ for a participant $\procB$. It checks that if $\procB$ can receive $x$ from $\procA$ in the first state, $\procB$ cannot also receive $x$ as the first message from $\procA$ in the second state, in which it can also receive from a different participant $\procC$, unless $\procA$ sending $x$ causally depends on $\procB$ first receiving from $\procC$. We thus need to define a predicate that captures whether $x_1$ may be available as the first message from $\procB$ to $\procA$, while tracking causal dependencies.
We introduce a family of predicates $\avail_{\procA, \procB, \blockedset}(x_1, s_2, \boldsymbol{r_2})$ for this purpose. Here, $\blockedset$ is used to track the causal dependencies. $\blockedset$ tracks the set of participants that are blocked from sending a message because their send action causally depends on $\procB$ first receiving from~$\procC$. The predicates are defined as the least fixpoint of the following mutually recursive definition. \fs{Added $\lor$ and parentheses in next definition.}
\begin{definition}[Symbolic Availability]
	\label{def:sym-availability}
	\begin{align*}
	\small
		\avail_{\procA, \procB, \blockedset}(x_1, s, \boldsymbol{r})
		\is_\mu 
		&\phantom{\lor j}(
		\bigvee_{\substack{(s, \,\msgFromToNS{\procC}{\procE}{x \set{\varphi}}, \,s') \in \Delta \\ \procC \in \blockedset \\ \procC \neq \procA \lor \procE \neq \procB}}
		\hspace{-2ex}
		\exists x~\boldsymbol{r'}.\, \avail_{\procA, \procB, \blockedset \cup \set{\procE}} (x_1, s', \boldsymbol{r'})
		\land 
		\varphi
		\, )  \\
		& \lor(
		\bigvee_{\substack{(s, \,\msgFromToNS{\procC}{\procE}{x \set{\varphi}}, \,s') \in \Delta \\ \procC \notin \blockedset \\ \procC \neq \procA \lor \procE \neq \procB}}
		\hspace{-2ex}
		\exists x~\boldsymbol{r'}.\, \avail_{\procA, \procB, \blockedset}(x_1, s', \boldsymbol{r'})
		\land 
		\varphi
		\, ) 
		\lor(
		\bigvee_{\substack{(s, \,\msgFromToNS{\procA}{\procB}{x \set{\varphi}}, \,s') \in \Delta \\ \procA \notin \blockedset}}
		\hspace{-2ex}
		 \varphi[x_1/x]
		 \, ) \enspace.
	\end{align*}
\end{definition}
The last disjunct in the definition handles the cases where the message $x_1$ from $\procA$ is immediately available to be received by $\procB$ in symbolic state $(s,\boldsymbol{r})$ and $\procA$ has not been blocked from sending. The other two disjuncts handle the cases when $x_1$ becomes available after some other message exchange between $\procC$ and $\procE$. Here, if $\procC$ is blocked, then $\procE$ also becomes blocked since it depends on $\procC$ sending before it can receive (the first disjunct). Otherwise, no participant is added to the blocked set (the second disjunct).

With the available message predicate in place, we can now define Symbolic Receive Coherence.
\begin{definition}[Symbolic Receive Coherence] 
	\label{cond:sym-receive-coherence}
	A symbolic protocol $\SymProt$ satisfies Symbolic Receive Coherence when for every pair of transitions $s_1 \xrightarrow{\msgFromToNS{\procA}{\procB}{x_1 \set{\varphi_1}}} s_1' \in \Delta_1
	$ and $
	s_2 \xrightarrow{\msgFromToNS{\procC}{\procB}{x_2 \set{\varphi_2}}} s_2' \in \Delta_2$ with $\procA \neq \procC$:
	\[
	\prodreach_\procB(s_1, \boldsymbol{r_1}, s_2, \boldsymbol{r_2})~\land~\varphi_1~\land~\varphi_2
	~\land~
	\avail_{\procA, \procB, \set{\procB}}(x_1, s_2', \boldsymbol{r_2'})
	\implies 
	\bot \enspace.
	\]
\end{definition}
Finally, No Mixed Choice is conditioned on two simultaneously reachable states $(s_1,\boldsymbol{r_1})$ and $(s_2,\boldsymbol{r_2})$ with outgoing send and receive transitions for a participant $\procA$. 
\begin{definition}[Symbolic No Mixed Choice] 
	\label{cond:sym-no-mixed-choice}
	A symbolic protocol $\SymProt$  satisfies Symbolic No Mixed Choice when for every pair of transitions $s_1 \xrightarrow{\msgFromToNS{\procA}{\procB}{x_1 \set{\varphi_1}}} s_1' \in \Delta_1
	$ and $
	s_2 \xrightarrow{\msgFromToNS{\procC}{\procA}{x_2 \set{\varphi_2}}} s_2' \in \Delta_2$:
	\[
	\prodreach_\procA(s_1, \boldsymbol{r_1}, s_2, \boldsymbol{r_2})~\land~\varphi_1~\land~\varphi_2
	\implies 
	\bot \enspace.
	\]
\end{definition}
We conclude this section with a discussion of how to check GCLTS assumptions, namely sink finality, sender-driven choice, and deadlock-freedom, on a symbolic protocol. 
Sink finality can be checked directly by examining the syntax of the symbolic protocol. 
Sender-driven choice without determinism can likewise be checked directly on the states of the symbolic protocol. 
Determinism and deadlock-freedom are undecidable in general but can both be reduced to reachability. 
Thus, both our Symbolic Coherence Conditions and GCLTS assumptions can be discharged using off-the-shelf $\mu$CLP solvers. We leave such an implementation to future work. 

We next apply our framework to decidable fragments of symbolic protocols, some of which have previously been studied in the literature.
\subsection{Finite Protocols}
We first consider finite protocols. Let $\mathcal{S} = (S, \AlphSyncSubscript,  T, s_0, F)$ be a protocol with finite $S$ and $T$. Because $S$ and $T$ are finite, we can transform \Characterization into an imperative algorithm (see \cref{alg:characterization-for-global-types}) and use it to check implementability directly. For checking Receive Coherence, we need to decide the predicate $\avail_{\procA,\procB,\{\procB\}}(\val,s)$, which is defined like the symbolic availability predicate $\avail_{\procA,\procB,\{\procB\}}(x,s,\boldsymbol{r})$, except on protocols instead of symbolic protocols. 
\begin{algorithm}[t]
	\caption{Check \Characterization for finite protocols	\label{alg:characterization-finite}\label{alg:characterization-for-global-types}}
	{ \footnotesize
	\begin{algorithmic}
		\CommentLine{Let LTS $\mathcal{S} = (S, \AlphSyncSubscript, T, s_0, F)$}
		\CommentLine{Checking Send Coherence}
		\For{$s_1 \xrightarrow{\msgFromToNS{\procA}{\procB}{\val}} s_2 \in T$}
			\For{$s \neq s_1 \in S$} 
			\If{$\lang(S, \Gamma_\procA \dunion \set{\emptystring}, T_\procA, s_0, \set{s}) \inters \lang(S, \Gamma_\procA \dunion \set{\emptystring}, T_\procA, s_0, \set{s_1}) \neq \emptyset$}
				\State $b \gets \bot$
				\For{$s_3 \xrightarrow{\msgFromToNS{\procA}{\procB}{\val}} s_4 \in T$}
          
          $b \gets b \lor \left( s \xRightarrow[\procA]{\emptystring}\Kleenestar s_3 \right) $
				\EndFor
				\If{$\neg b$} $\textbf{return}~\bot$
				\EndIf
			\EndIf
			\EndFor
		\EndFor
		
		\CommentLine{Checking Receive Coherence}
		\For{$s_1 \xrightarrow{\msgFromToNS{\procA}{\procB}{\val}} s_2,  s_3 \xrightarrow{\msgFromToNS{\procC}{\procB}{\val}} s_4 \in T, s_1 \neq s_2, \procA \neq \procC$}
			\If{$\lang(S, \Gamma_\procB \dunion \set{\emptystring}, T_\procB, s_0, \set{s_1}) \inters \lang(S, \Gamma_\procB \dunion \set{\emptystring}, T_\procB, s_0, \set{s_3}) \neq \emptyset$}
        \If{$\avail_{\procA,\procB,\set{\procB}}(\val, s_4)$} $\textbf{return}~\bot$
				\EndIf
			\EndIf
		\EndFor
		
		\CommentLine{Checking No Mixed Choice}
		\For{$s_1 \xrightarrow{\msgFromToNS{\procA}{\procB}{\val}} s_2,  s_3 \xrightarrow{\msgFromToNS{\procC}{\procA}{\val}} s_4 \in T, s_1 \neq s_2$}
		\If{$\lang(S, \Gamma_\procB \dunion \set{\emptystring}, T_\procB, s_0, \set{s_1}) \inters \lang(S, \Gamma_\procB \dunion \set{\emptystring}, T_\procB, s_0, \set{s_3}) \neq \emptyset$} $\textbf{return}~\bot$
		\EndIf
		\EndFor
		\State $\textbf{return}~\top$
	\end{algorithmic}
	}
\end{algorithm}

It is easy to see that Send Coherence and No Mixed Choice can be checked in time polynomial in the size of $\mathcal{S}$. However, the inclusion of $\avail_{\procA,\procB,\{\procB\}}(\val,s)$ as a subroutine for checking Receive Coherence yields the following complexity result. 
\newcommand{\nextMsg}{\textit{next}}
\newcommand{\lastMsg}{\textit{last}}
\newcommand{\topMsg}{\textit{top}}
\begin{theorem}
	Implementability of finite protocols is co-NP-complete.
\end{theorem}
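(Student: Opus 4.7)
The plan is to establish matching co-NP upper and lower bounds, leveraging \cref{thm:soundness} to reduce implementability to satisfaction of CC.

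For the upper bound I would show that non-implementability, i.e.\ a violation of SC, RC, or NMC, admits a polynomial-size witness that is checkable in polynomial time. A non-deterministic algorithm guesses which condition is violated together with the offending transitions/states (e.g.\ for $\neg$SC, a transition $s_1 \xrightarrow{\msgFromToNS{\procA}{\procB}{\val}} s_2$ and a state $s_1' \neq s_1$). The simultaneous reachability premise $s_0 \xRightarrow[\procA]{u}\Kleenestar s_1, s_1'$ reduces to non-emptiness of the intersection of two NFAs obtained from $\mathcal{S}_\procA$, decidable in $O(|S|^2)$ time by the standard product construction. For $\neg$SC and $\neg$NMC this already suffices since the remaining checks are polynomial-time reachability tests.

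The only non-trivial case is $\neg$RC, which additionally requires certifying $\avail_{\procA,\procB,\{\procB\}}(\val, s_4)$. Its defining fixpoint ranges over a product state space $S \times 2^{\Procs}$ whose size is exponential in $|\Procs|$. The key observation is that along any derivation the blocked set $\blockedset$ grows \emph{monotonically}: only the first disjunct of \cref{def:sym-availability} may enlarge it, via $\blockedset \cup \{\procE\}$, and the base case terminates. I would argue that a shortest witness derivation can be partitioned into at most $|\Procs|+1$ phases of constant $\blockedset$, and within each phase the control state of $\mathcal{S}$ need not be revisited (otherwise the loop can be excised without losing the witness). This bounds the witness length by $(|\Procs|+1) \cdot |S|$, polynomial in the input size, and verifiable in polynomial time by replaying the derivation against \cref{def:sym-availability} specialized to the finite setting.

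For the lower bound I would appeal to the fact that finite protocols subsume the fragment of global multiparty session types for which implementability is already known to be co-NP-hard~\cite{DBLP:conf/cav/LiSWZ23,DBLP:journals/corr/abs-2305-17079}, giving the matching hardness by an immediate reduction. The principal difficulty throughout is taming $\avail$: its naive unfolding explores an exponential product state space, and the monotonicity argument for $\blockedset$ is precisely what collapses the certificate to polynomial size and thereby delivers the NP membership underlying the co-NP upper bound.
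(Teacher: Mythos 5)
Your upper bound follows essentially the same route as the paper: guess the violated condition together with the offending transitions and states, decide simultaneous reachability by a product construction on $\mathcal{S}_\procA$, and tame $\avail_{\procA,\procB,\{\procB\}}$ by exploiting monotonicity of the blocked set. The paper restricts to \emph{simple} paths (length at most $\card{S}$) and justifies this by noting additionally that $\avail$ is \emph{antitone} in the blocked set; your loop-excision/phase argument implicitly needs that same antitonicity (after excising a loop you arrive with a smaller blocked set, and you must know this cannot destroy the witness), so you should state it, but the certificate-size bound you obtain is equally polynomial and the argument is sound.

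The lower bound, however, has a genuine gap: you appeal to co-NP-hardness of implementability for the multiparty session type fragment as a known prior result, but no such result exists. The cited works establish only a PSPACE \emph{upper} bound for MST implementability (an earlier claimed matching lower bound was erroneous and retracted); indeed, one of this paper's contributions is to \emph{tighten} that upper bound to co-NP and to establish co-NP-hardness for the first time. Your argument is therefore circular. The hardness proof has to be done from scratch, and the paper does so via a reduction from 3-SAT that is far from immediate: it builds a protocol $\mathcal{S}_\varphi$ in which a gadget $\mathcal{S}_X$ forces the blocked set computed by $\avail_{\procA,\procB,\{\procB\}}$ along a path to encode a truth assignment (one of $\roleFmt{x_i}$, $\roleFmt{\overline{x}_i}$ per variable), a gadget $\mathcal{S}_C$ lets $\procA$ avoid being blocked exactly when each clause has an unblocked literal, and a final choice gadget makes $\procB$ violate Receive Coherence precisely when $\procA$'s message can become available, i.e.\ precisely when $\varphi$ is satisfiable. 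One must also verify that the construction yields a bona fide GCLTS (sender-driven, deterministic, sink-final, deadlock-free) and that every participant other than $\procB$ is unproblematic, so that non-implementability hinges solely on the $\avail$ check. Without supplying such a reduction (or an equivalent one), your proof establishes only co-NP membership, not completeness.
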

\begin{proof}
	To see that implementability is in co-NP, observe that violations of Send Coherence and No Mixed Choice can be checked in NP, by guessing a participant $\procA$ and a pair of states $s_1, s_2$ that satisfy the respective preconditions, and verifying simultaneous reachability of $s_1$ and $s_2$ for $\procA$. For Send Coherence, we guess an additional state $s_3$ with an outgoing transition labeled with $\msgFromToNS{\procA}{\procB}{\val}$, and check $\epsilon$-reachability from $s_1$ to $s_3$. 
	For Receive Coherence, $\avail_{\procA,\procB,\{\procB\}}(\val,s_2)$ can be checked in NP by guessing a simple path in $\mathcal{S}$ from $s_2$ to some state $s'$ with an outgoing transition labeled with $\msgFromToNS{\procA}{\procB}{\val}$. We then evaluate $\avail_{\procA,\procB,\{\procB\}}(\val,s_2)$ along that path, which can be done in polynomial time. We can restrict ourselves to simple paths because the blocked set $\blockedset$ monotonically increases when traversing a path in $\mathcal{S}$. Moreover, $\avail_{\procA,\procB,\{\procB\}}(\val,s_2)$ is antitone in the blocked set.
	
  We show NP-hardness of non-implementability via a reduction from the 3-SAT problem.
  Assume a 3-SAT instance $\varphi = C_1 \land \ldots \land C_k$. Let $x_1,\dots,x_n$ be the variables occurring in $\varphi$ and let $L_{ij}$ be the $j$th literal of clause~$C_i$, with $1 \leq i \leq k$ and $1 \leq j \leq 3$. 
  
  We construct a protocol $\mathcal{S}_\varphi$ over participants $\Procs = \{\procA,\procB, \procC, \roleFmt{x_1},\roleFmt{\overline{x}_1}, \dots,\roleFmt{x_n},\roleFmt{\overline{x}_n}\}$, such that $\varphi$ is satisfiable iff $\mathcal{S}_\varphi$ is implementable. 
  In particular, we ensure that $\mathcal{S}_\varphi$ is implementable iff $\avail_{\procA,\procB,\{\procB\}}(\val,s)$ does not hold for some state $s$ in $\mathcal{S}_\varphi$. 
  The protocol $\mathcal{S}_\varphi$ is constructed from the following subprotocols: 
  \begin{enumerate}
  \item Define a protocol $\mathcal{S}_X$ representing a truth assignment to variables $x_i$  with states $s_1,\dots,s_{n+1}$ as follows: for every $1 \leq i \leq n$ there are two paths of four transitions each between $s_i$ and $s_{i+1}$. The paths consist of transitions labeled with $\msgFromToNS{\procC}{\roleFmt{x_i}}{\bot}$, $\msgFromToNS{\procC}{\roleFmt{\overline{x}_i}}{\top}$, $\msgFromToNS{\procC}{\procB}{\val_{x_i}}$, $\msgFromToNS{\procB}{\roleFmt{x_i}}{\val}$, 
  and 
 $\msgFromToNS{\procC}{\roleFmt{\overline{x}_i}}{\bot}$, $\msgFromToNS{\procC}{\roleFmt{x_i}}{\top}$, $\msgFromToNS{\procC}{\procB}{\val_{\overline{x}_i}}$, $\msgFromToNS{\procB}{\roleFmt{\overline{x}_i}}{\val}$, 
  respectively.
  \item Define a protocol $\mathcal{S}_C$ representing the clauses $C_i$ with states $t_{1},\dots,t_{k+1}$ as follows. For each $1 \leq i \leq k$ there are three paths of three transitions between each $t_i$ and $t_{i+1}$, one for each $1 \leq j \leq 3$, labeled with 
  $\msgFromToNS{\procC}{s}{\val_j}$, $\msgFromToNS{\procC}{\procA}{\val_r}$, $\msgFromToNS{s}{\procA}{\val}$, where $s=\roleFmt{x}$ if $L_{ij}=x$ and $s=\roleFmt{\overline{x}}$ if $L_{ij} = \neg x$ for $x \in \{x_1,\dots,x_n\}$. 
 
  \item Define a protocol $\mathcal{S}_F$ with two states $q_f'$ and $q_f$ and a single transition from $q_f'$ to $q_f$ labeled with $\msgFromToNS{\procA}{\procB}{\val}$.
  \item Define a protocol $\mathcal{S}_T$ with five states $q_1, \ldots, q_5$, and two paths from $q_1$, respectively
  $q_1 \xrightarrow{\msgFromToNS{\procC}{\procA}{\val_1}} q_2 \xrightarrow{\msgFromToNS{\procC}{\procB}{\val}} q_3$ and 
  $q_1 \xrightarrow{\msgFromToNS{\procC}{\procA}{\val_2}} q_4 \xrightarrow{\msgFromToNS{\procA}{\procB}{\val}} q_5$. 
  \end{enumerate}
  
   We merge all of the above protocols to obtain $\mathcal{S}_\varphi$ by identifying the state $q_3$ with $s_1$, $s_{n+1}$ with $t_{1}$ and $t_{k+1}$ with $q_f'$. The initial state is $q_1$ and the final states are $\set{q_5, q_f}$. 
    
  Observe that the size of $\mathcal{S}_\varphi$ is linear in the size of $\varphi$. Moreover, it is easy to check that $\mathcal{S}_\varphi$ is indeed a GCLTS: 
  all choices are sender-driven and deterministic, and 
  final states are the only states with no outgoing transitions, 
  yielding sink-finality and deadlock-freedom. 
  
  We first establish that $\avail_{\procA,\procB,\{\procB\}}(\val,q_3)$ holds in $\mathcal{S}_\varphi$ iff $\varphi$ is satisfiable. Observe that the blocked set $\blockedset$ computed by $\avail_{\procA,\procB,\{\procB\}}(\val,q_3)$ along a path between $s_1$ and $s_{n+1}$ contains for each variable $x_i$ either $\roleFmt{x_i}$ or $\roleFmt{\overline{x}_i}$.
  The blocked set $\blockedset$ thus encodes a truth assignment $\rho_\blockedset$ for the $x_i$'s where $\rho_\blockedset(x_i)=\top$ iff $\roleFmt{x_i} \not\in \blockedset$.
  By construction of $\mathcal{S}_X$, for every truth assignment $\rho$, there exists a path between $s_1$ and $s_{n+1}$ such that $\rho=\rho_\blockedset$ for the blocked set $\blockedset$ computed along that path.
  The paths between states $t_{i}$ and $t_{i+1}$ in subprotocol $\mathcal{S}_C$ allow $\procA$ to proceed and not be blocked if one of the paths has a participant not in $\blockedset$, i.e. $C_i$ is satisfied by $\rho_\blockedset$.
  Thus, a path from $s_{n+1} = t_{1}$ to $t_{k+1} = q_f'$ adds $\procA$ to $\blockedset$ at $t_i$ iff $\rho_\blockedset$ does not satisfy at least one of the clauses $C_i$.
  Therefore, $\val$ is available in $q_3$ iff there exists a $\blockedset$ such that $\rho_\blockedset$ satisfies $\varphi$.
  
  It remains to show that $\mathcal{S}_\varphi$ is non-implementable iff $\avail_{\procA,\procB,\{\procB\}}(\val,q_3)$ holds in $\mathcal{S}_\varphi$. 
  We argue that all participants except $\procB$ have sufficient local information about the control flow of the protocol to behave accordingly. 
  Participant $\procC$ dictates the control flow at every branching point of the protocol, and thus is implementable. 
  Participants $\roleFmt{x_1}, \roleFmt{\overline{x}_1}, \ldots \roleFmt{x_n}, \roleFmt{\overline{x}_n}$ learn the control flow via receiving messages from participant $\procC$, whose labels uniquely determine their next actions: receiving $\top$ means inaction, receiving $\bot$ means receive a further message from $\procB$, and receiving $m$ means send a message encoding its own variable name to $\procA$. 
  Participant $\procA$ is likewise informed by $\procC$ about the control flow, and only sends $\val$ to $\procB$ upon either receiving $\val_2$ or $\topMsg$ from $\procC$. Upon receiving $\procC$'s choice of disjunct for each clause, it anticipates a message from the participant encoding that disjunct. 
  
  Participant $\procB$, on the other hand, is not informed by $\procC$ about $\procC$'s initial choice at $G_{x_1}$, and can locally choose between receptions from $\procA$ or $\procC$. In the case that $\avail_{\procA,\procB,\{\procB\}}(\val,q_3)$ holds, there exists a path from $\overline{G}$ to $\GG_\varphi$ in which $\procA$ is not blocked. Thus, the message from $\procA$ can be asynchronously reordered to arrive in $\procB$'s channel such that both receptions are enabled, and $\procB$ may violate implementability by receiving the message from $\procA$ out of order.  
  If $\avail_{\procA,\procB,\{\procB\}}(\val,q_3)$ does not hold, only one reception is enabled, which uniquely informs $\procB$ about $\procC$'s choice. In the case that the reception from $\procA$ is enabled, $\procB$ terminates, otherwise it receives messages from $\procC$ encoding participants to send further messages to, and terminates upon receiving the final message from $\procA$. 
  Thus, $\mathcal{S}_\varphi$ is non-implementable iff $\procB$ violates Receive Coherence for the transitions 
  $q_2 \xrightarrow{\msgFromToNS{\procC}{\procB}{\val}} q_3$ and 
  $q_4 \xrightarrow{\msgFromToNS{\procA}{\procB}{\val}} q_5$, i.e.\
  $\avail_{\procA,\procB,\{\procB\}}(\val,q_3)$ does not hold. 
  
  We obtain that $\mathcal{S}_\varphi$ is non-implementable iff $\avail_{\procA,\procB,\{\procB\}}(\val,q_3)$ holds in $\mathcal{S}_\varphi$ iff $\varphi$ is satisfiable.
\end{proof}
Implementability for global multiparty session types was shown in~\cite{DBLP:conf/cav/LiSWZ23} to be in PSPACE, with the matching lower bound corrected in \cite{DBLP:journals/corr/abs-2305-17079}. 
We show that, in fact, the same 3-SAT reduction can be adapted to show co-NP-completeness of implementability for global multiparty session types. 
\paragraph{Global multiparty session types}
 Global types for MSTs 
\cite{DBLP:conf/cav/LiSWZ23} are defined by the grammar:
\begin{grammar}
	G \is
	0
	\mid \sum_{i ∈ I} \msgFromTo{\procA}{\procB_{i}}{\val_i.G_i}
	\mid \mu t. \; G
	\mid t
\end{grammar}\\[-3ex]
where $\procA, \procB_i$ range over $\Procs$, $\val_i$ over a finite set $\MsgVals$, and $t$ over a set of recursion variables. 
The semantics of a global type $\GG$ are defined using a finite state machine
$\semglobalsync(\GG) = (Q_{\GG}, \AlphSync_{sync} \uplus \set{\emptystring}, δ_{\GG}, q_{0, \GG}, F_{\GG})$
where 
	$Q_{\GG}$ is the set of all syntactic subterms in $\GG$ together with the term $0$,
	$δ_{\GG}$ is the smallest set containing
	$(\sum_{i ∈ I} \msgFromTo{\procA}{\procB_i}{\val_i.G_i}, \msgFromTo{\procA}{\procB_i}{\val_i}, G_i)$ for each $i ∈ I$,
	as well as $(μ t. G', ε, G')$ and $(t, ε, μ t. G')$ for each subterm~$\mu t.G'$, 
	$q_{0, \GG} = \GG$ and
	$F_{\GG} = \set{0}$.
	
Each branch of a choice is assumed to be distinct: 
$∀ i,j ∈ I.\, i≠j ⇒ (\procB_{i},\val_i) ≠ (\procB_{j},\val_j)$, 
and the sender and receiver of an atomic action is assumed to be distinct: 
$∀ i ∈ I.\, \procA ≠ \procB_i$. 
Recursion is guarded: 
in $μ t. \, G$, there is at least one message between $μt$ and each $t$ in $G$.

Each $\emptystring$-transition in $\semglobal(\GG)$ is the only transition from the state it originates from.
This makes removing them easy, yielding a protocol $\mathcal{S_\GG} = (Q_\GG, \AlphSyncSubscript, \delta'_\GG, q_{0,\GG}, F_\GG)$, where $\delta'_\GG$ contains only transitions labeled with $l \in \AlphSyncSubscript$.
It is easy to verify that $\mathcal{S_\GG}$ is indeed a GCLTS. 
\begin{restatable}[]{lemma}{globalTypeImplementability}								
	Implementability of global types is co-NP-complete.  
\end{restatable}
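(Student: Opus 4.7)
The plan is to handle the two bounds separately, with the upper bound as an immediate application of the preceding theorem.

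For the upper bound, I would convert a global type $\GG$ to its induced GCLTS $\mathcal{S}_\GG$ in linear time using the construction sketched in the excerpt: the state space is the set of syntactic subterms of $\GG$, and $\varepsilon$-transitions are trivially contracted since each is the sole outgoing transition from its source. The resulting $\mathcal{S}_\GG$ has size linear in $|\GG|$ and satisfies the GCLTS conditions: sink-finality holds because $0$ is the only final state and has no outgoing transitions; sender-driven choice is immediate from the $\sum$ constructor in the grammar; and deadlock-freedom follows from guardedness of recursion together with the fact that every non-$0$ subterm has an outgoing transition. Applying the co-NP algorithm from the previous theorem to $\mathcal{S}_\GG$ gives the upper bound.

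For the lower bound, I would adapt the 3-SAT reduction $\varphi \mapsto \mathcal{S}_\varphi$ from the preceding theorem into a polynomial-time reduction $\varphi \mapsto \GG_\varphi$ whose induced GCLTS $\mathcal{S}_{\GG_\varphi}$ carries the same coherence-violation witness structure as $\mathcal{S}_\varphi$. The grammar's well-formedness constraints are all satisfied by the shape of $\mathcal{S}_\varphi$: each choice has a fixed sender (always $\procC$), distinct branches differ in either the receiver or the payload, and no atomic action has equal sender and receiver. The key property to exploit is that the subterm-based semantics of global types identifies syntactically equal subterms as the same state in $Q_\GG$, so the DAG-like convergence in $\mathcal{S}_\varphi$ at each $s_{i+1}$ and each $t_{i+1}$ can be realized by writing the same continuation subterm in both (respectively, all three) branches of the corresponding sender-driven choice; no $\mu$-recursion is required since the original reduction is acyclic, so guardedness is automatic.

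The hard part will be keeping $|\GG_\varphi|$ polynomial in $|\varphi|$, since a naive textual duplication of shared continuations at every convergence point would blow up exponentially despite the polynomial DAG-size of $\mathcal{S}_\varphi$. I would address this by restructuring the gadgets so that each shared tail is written out only a bounded number of times -- for instance, by placing both the early-exit branch of $\mathcal{S}_T$ and the end of $\mathcal{S}_F$ before a single textually shared final transition, and by arranging the variable and clause gadgets so that their shared continuations are short and do not cascade through many levels of nesting. Once the polynomial encoding is in place, the state-and-transition-level correspondence with $\mathcal{S}_\varphi$ is immediate (both accepting states of $\mathcal{S}_\varphi$ collapse to the unique state $0$, which affects neither the availability nor the simultaneous-reachability argument), and the remainder of the proof transfers verbatim: the pair $\procC \to \procB : \val$ and $\procA \to \procB : \val$ out of the states analogous to $q_2$ and $q_4$ are simultaneously reachable for $\procB$, and $\mathcal{S}_{\GG_\varphi}$ is non-implementable iff $\avail_{\procA, \procB, \{\procB\}}(\val, q_3)$ holds iff $\varphi$ is satisfiable.
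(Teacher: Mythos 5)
Your upper bound is fine and matches the paper, which simply notes that the co-NP membership argument for finite protocols carries over after the linear-time contraction of $\varepsilon$-transitions. The gap is in the lower bound. The property you lean on --- that $Q_\GG$ is the \emph{set} of syntactic subterms, so textually identical continuations collapse to a single state --- compresses only the semantics $\semglobal(\GG)$, not the input: the reduction must output the global type itself, and a recursion-free global type is a tree, so every reconvergence point of $\mathcal{S}_\varphi$ forces its entire continuation to be written out once per incoming branch. Since $\mathcal{S}_\varphi$ is a chain of $n$ two-way diamonds followed by $k$ three-way diamonds, and the continuation of the $i$-th gadget contains all subsequent gadgets, any tree encoding has on the order of $2^n 3^k$ leaves no matter how you ``arrange'' the gadgets; there is no bounded-duplication restructuring, because the $\avail_{\procA,\procB,\{\procB\}}$ argument inherently needs paths that traverse all $n+k$ gadgets in sequence while accumulating the blocked set. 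So the claim that ``no $\mu$-recursion is required'' is exactly backwards: recursion variables are the \emph{only} sharing mechanism available in the global type syntax, and they are unavoidable here.

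This is also why the paper's actual proof is not a transcription of the $\mathcal{S}_\varphi$ reduction. Because a recursion variable $t$ must occur within the scope of its binder $\mu t.\,G$, branches can only rejoin at a textually enclosing point; the paper therefore makes the variable and clause gadgets recurse ``backwards'' toward the top-level type, adds to each $\mu$-bound choice an extra forward-edge branch (with additional bookkeeping messages from $\procC$ so every participant still learns the control flow), and, as a consequence of the reversed traversal order, the initial $\procC$-choice and the Receive-Coherence-violating exchange $\msgFromToNS{\procA}{\procB}{\val}$ swap places relative to $\mathcal{S}_\varphi$. None of this is cosmetic; it is forced by the tree-plus-back-edges shape of global types, and your proposal would have to confront it before the remainder of the argument could ``transfer verbatim.''
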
 
The proof can be found in \cref{app:symbolic}. 
\subsection{Symbolic Finite Protocols} 
Finally, we study symbolic representations of finite protocols. More precisely, we consider the fragment of symbolic protocols where $\MsgVals$ is the set of Booleans and all transition constraints~$\varphi$ are given by propositional formulas. 
We show that for this class of symbolic protocols, the implementability problem is PSPACE-complete. 
\begin{theorem}
	\label{thm:bitvector-global-types-pspace-complete}
	Implementability of symbolic finite protocols is PSPACE-complete. 
\end{theorem}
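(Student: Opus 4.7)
The plan is to establish PSPACE membership by lifting the finite-protocol algorithm of Section~5.2 to the concretization of a symbolic finite protocol, and PSPACE-hardness by reduction from a canonical PSPACE-complete problem such as \textsc{QBF}.

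\textbf{PSPACE upper bound.} Let $\SymProt = (S, R, \Delta, s_0, \rho_0, F)$ be a symbolic finite protocol. Since $\MsgVals = \Bool$ and register assignments range over $\Bool^{|R|}$, the concretization $\mathcal{S}_{\SymProt}$ has at most $|S|\cdot 2^{|R|}$ states, exponential in $|\SymProt|$, but each state has a polynomial-size description (a pair of a control location and a bit-vector). The transition relation is succinctly computable: given two concrete states and a message value, one can evaluate the propositional constraint $\varphi$ in polynomial time. The plan is to run the finite-protocol version of \Characterization (\cref{alg:characterization-finite}) on $\mathcal{S}_{\SymProt}$. Each check reduces to (co)reachability on exponentially large but implicitly represented graphs:
\begin{itemize}
\item Violations of Symbolic Send Coherence and Symbolic No Mixed Choice are witnessed by a pair of concrete states simultaneously reachable for some participant $\procA$, which corresponds to reachability in the synchronized product of two copies of $\mathcal{S}_{\SymProt}$ (capturing $\prodreach$). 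This product has at most $(|S|\cdot 2^{|R|})^2$ states with polynomial-size descriptions and polynomial-time transition checks, so reachability is in NPSPACE, hence in PSPACE by Savitch.
\item Violations of Symbolic Receive Coherence additionally require evaluating $\avail_{\procA,\procB,\{\procB\}}$, which is reachability augmented with a blocked set $\blockedset \subseteq \Procs$ of polynomial size. This is again in PSPACE.
\end{itemize}
Since the outer guesses (pairs of symbolic transitions and one additional control state) are polynomial in $|\SymProt|$ and each inner check is in PSPACE, the overall complementary procedure lies in coNPSPACE = PSPACE.

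\textbf{PSPACE lower bound.} I would reduce from \textsc{True-QBF}. Given $\Phi = Q_1 x_1 \ldots Q_n x_n.\,\psi(x_1,\dots,x_n)$ with $\psi$ a CNF, the idea is to build a symbolic protocol $\SymProt_\Phi$ with Boolean registers $r_1,\dots,r_n$ encoding a truth assignment and a ``pointer'' register $r_i$ tracking the current quantifier level. The protocol is shaped to generalize the 3-SAT reduction from \cref{alg:characterization-finite}'s proof: a designated participant $\procC$ drives quantifier-assignment choices, while universally-quantified variables are realized by forcing the concretization to enumerate both values via the structure of $\prodreach$, and existentially-quantified variables are realized by sender-driven branching. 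Non-implementability (specifically, a Receive Coherence violation witnessed by $\avail$) is arranged to hold iff the QBF evaluates to true. The Boolean registers give polynomial space to store the current assignment, while the protocol's transition structure sequences through quantifier levels.

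\textbf{Main obstacle.} The cleanest part is the upper bound---Savitch's theorem and the succinct product construction make it essentially routine once one observes that all ingredients of \cref{alg:characterization-finite} remain reachability queries on implicitly-represented exponential graphs. The hard part will be the lower bound: we must realize quantifier alternation within the rigid framework of sender-driven-choice GCLTS without violating sink-finality or deadlock-freedom, so the gadget must be designed carefully so that the alternation between $\prodreach$ (which witnesses simultaneously reachable pairs, i.e., an existential over runs) and $\avail$ (which in turn alternates between existential path choice and universal constraints captured by the blocked set) matches the alternation of quantifiers in $\Phi$. An alternative, possibly smoother, route for hardness is to reduce from the PSPACE-complete reachability problem for succinctly represented (Boolean-program) graphs, encoding the program state in the registers $R$ and arranging the protocol so that a specific violation of Symbolic Receive Coherence occurs iff the target configuration is reachable.
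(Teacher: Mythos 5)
Your upper bound is essentially the paper's argument: the paper likewise observes that a witness to the negation of \Characterization can be verified in nondeterministic polynomial space by treating the concretization as a succinctly represented exponential-size graph (the paper phrases this as a reduction to reachability in extended finite state machines), then invokes Savitch's theorem and closure of PSPACE under complement. Your explicit product construction for $\prodreach$ and the observation that the blocked set of $\avail$ only needs polynomial space are the right details; nothing is missing here.

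The lower bound is where your primary plan has a genuine gap. The negation of \Characterization is, structurally, ``there exists a reachable bad pair of configurations'' combined with at most one safety-style check (the greatest-fixpoint $\varepsilon$-unreachability predicate $\unreach^\emptystring_{\procA,\procB}$ inside Send Coherence, or the least-fixpoint $\avail$ inside Receive Coherence). That gives you one quantifier alternation, not the $n$-fold alternation of a QBF instance, and there is no mechanism in the coherence conditions by which universally quantified variable choices at intermediate points of a run could be enforced --- so the ``match the $\prodreach$/$\avail$ alternation to the quantifier prefix'' gadget you describe does not have anything to attach to, and you correctly sense this in your ``main obstacle'' paragraph. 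The fix is that quantifier alternation is unnecessary: PSPACE-hardness already follows from reachability in a succinctly represented graph, which is exactly the ``alternative, possibly smoother, route'' you mention in passing. The paper does precisely this, reducing from reachability in 1-safe Petri nets: a Boolean register vector $v$ of length $\card{S}$ tracks the marking, participant $\procC$ simulates transition firings in a loop (with a $\mathit{restart}$ branch to preserve deadlock-freedom), and an uninformed participant $\procA$ must match a send $\msgFromTo{\procA}{\procB}{l}$ from a parallel branch; Send Coherence for $\procA$ then holds iff the target marking $M$ is $\varepsilon$-reachable, while Receive Coherence and No Mixed Choice hold trivially. So you should promote your alternative to the main argument and target Send Coherence (whose $\varepsilon$-reachability component is the natural place to hide an exponential reachability question) rather than Receive Coherence.
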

\begin{proof}[Proof sketch]
To show that implementability is in PSPACE, we show that a witness to the negation of \Characterization can be checked in nondeterministic polynomial space. This follows by a reduction to the reachability problem for extended finite state machines, which is in PSPACE~\cite{DBLP:conf/tacas/GodefroidY13}. By Savitch's Theorem, it follows that the negation of \Characterization is in PSPACE. Because PSPACE is closed under complement and \Characterization precisely characterizes implementability, it follows that implementability is in~PSPACE. 	
 
We show PSPACE-hardness of the implementability problem by a reduction from the PSPACE-hard problem of deciding reachability  for 1-safe Petri nets~\cite{DBLP:journals/eik/EsparzaN94}.
	Let $(N, M_0)$ be a 1-safe Petri net, with $N = (S, T, F)$. 
	
	Let $M$ be a marking of $N$. 
        We construct a symbolic protocol that is implementable iff $N$ does not reach $M$. For ease of exposition, we present this symbolic protocol as a symbolic dependent global type $\GG_N$ with the understanding that the encoding of $\GG_N$ as a symbolic protocol is clear.
	
	We first describe the construction of $\GG_N$.
The outermost structure of $\GG_N$ consists of a participant~$\procC$ communicating a choice between two branches to $\procD$
       where the bottom branch solely consists of $\procA$ sending $l$ to $\procB$:
$
               \GG_N \is
				(
                \msgFromTo{\procC}{\procD}{m_1 \set{\top}}.\, G_t 
                +
                \msgFromTo{\procC}{\procD}{m_2 \set{\top}}.\, \msgFromTo{\procA}{\procB}{l \set{\top}}.\, 0 
                )
$.
       Since $\procA$ is not informed about the choice of the branch taken by $\procD$, it will have to be able to match this send transition in every run that follows the continuation $G_t$ of the top branch. We will construct $G_t$ such that this match is possible iff $M$ is reachable in $N$.
	In $G_t$, participants $\procC$ and $\procD$ enter a loop that simulates $N$:
	\[ \small
		G_t \is
			\mu s [v \is M_0]. \, +
			\begin{cases}
				\sum_{t \in T} \,
				\msgFromTo{\procC}{\procD}{m_t \set{v \Rightarrow t^-}}
				.\, s [ v \is ((v \land \neg t^-) \lor t^+) ]
				\\
				\msgFromTo{\procC}{\procD}{\mathit{restart} \set{\top}}. \, s [ v \is M_0 ] \\
				\msgFromTo{\procC}{\procD}{\mathit{reach}_M \set{v = M}}. \,
				\msgFromTo{\procA}{\procB}{l \set{\top}}. \, 0 \\
			\end{cases}
	\]
	The loop variable $v$ is a $\card{S}$-length bitvector that tracks the current marking of the net. It is initialized to $M_0$.
	Inside the loop, $\procC$ has the following choices. First, it may pick any transition $t \in T$ of the net and send an $m_t$ message to $\procD$, provided the transition is enabled for firing (i.e., the input places of $t$ all contain a token: $v \Rightarrow t^-$). After this communication, $v$ is updated according to the fired transition $t$. 
        The last branch of the choice in the loop is enabled if $v$ is equal to $M$. Here, $\procC$ can send $\mathit{reach}_M$ to~$\procD$, which gives $\procA$ the opportunity to send the $l$ message to $\procB$, allowing it to match the send transition from the lower branch in the top level choice of $G_N$.
        Finally, the middle branch allows $\procC$ to abort the simulation at any point and start over. This ensures that if the simulation ever reaches a dead state due to firing a transition that would render $M$ unreachable, it can recover by starting again from $M_0$. Thus, for all states of the simulator, $\procA$ has an $\varepsilon$ path from that state to a state where it can send $l$ to $\procB$ iff $M$ is reachable from $M_0$ in $N$. The only other sender is $\procC$ which makes all choices and, hence, never reaches two different states along the same prefix trace, thus satisfying Send Coherence trivially. It follows that Send Coherence for $\procA$ holds iff $M$ is reachable from $M_0$ in $N$. To see that Receive Coherence holds, observe that no participant receives messages from two different senders. No Mixed Choice similarly holds trivially. 
        $G_N$ is deadlock-free because the branch in the loop of $G_t$ where $\procC$ sends the $\mathit{restart}$ message is always enabled. Moreover, it is easy to see that $G_N$ is deterministic because each branch of a choice sends a different message value.
        In summary, $G_N$ is a GCLTS that is implementable iff $N$ reaches $M$.	The size of $\GG_N$ is linear in the size of $N$, so we obtain the desired reduction.

\end{proof}
\section{Related Work}
\label{sec:related-work}
\begin{table}[b]
	\small
	\centering
	\scalebox{0.9}{
	\begin{tabular}{|c|c|c|c|c|c}
		\hline
                \makecell{\bf Paper} &
		\makecell{\bf Communication\\\bf paradigm} &
		\makecell{\bf Branching\\\bf restrictions} &
		\makecell{\bf History\\\bf sensitivity} &
		\makecell{\bf Characterization} 
		
		\\
		\hline\hline
		\cite{DBLP:conf/concur/BocchiHTY10} &
		asynchronous &
		\makecell{directed choice} &
		required &
		\makecell{incomplete} 
		
		\\
		\hline
		\cite{DBLP:conf/tgc/BocchiDY12} &
		asynchronous &
		\makecell{directed choice} &
		required &
		\makecell{incomplete} 
		
		\\
		\hline
		\cite{DBLP:journals/jlp/ToninhoY17} &
		synchronous &
		\makecell{directed choice} &
		required &
		\makecell{incomplete} 
		
		\\
		\hline
		\cite{DBLP:journals/pacmpl/00020HNY20} &
		synchronous &
		\makecell{directed choice} &
		required &
		\makecell{incomplete} 
		
		\\
		\hline
		\cite{DBLP:conf/ecoop/GheriLSTY22} &
		synchronous &
		\makecell{well-sequencedness} & 
		required &
		\makecell{unknown} 
		
		\\
		\hline
		\makecell{this work} &
		asynchronous &
		\makecell{sender-driven choice} &
		not required &
		\makecell{relatively complete} 
		
		\\
		\hline 
	\end{tabular}
	}
	\vspace{1ex}
	\caption{Comparison of related work (in chronological order)} 
	\vspace{-4ex}
	\label{tab:comparison}
\end{table}
\cref{tab:comparison} summarizes the most closely related works that address the implementability problem of communication protocols with data refinements. 
We discuss these works in terms of key expressive features and completeness of characterization. 
\paragraph{Expressivity.} 
All existing works in \cref{tab:comparison} effectively require \emph{history-sensitivity}, which means that a
``predicate guaranteed by a [participant $\procA$] can only contain those interaction variables that [$\procA$] knows'' \cite{DBLP:conf/concur/BocchiHTY10}, see also \cite[Def.\,2]{DBLP:conf/tgc/BocchiDY12}.
As discussed in \S\ref{sec:characterization}, syntactic approaches to analyzing variable knowledge is overly conservative, and as a result no prior work can handle protocols such as the example in \cref{fig:implementable-local-type}. 
In a similar vein, \citet{DBLP:journals/pacmpl/00020HNY20} impose the syntactic restriction that all participants in a loop must be able to update all loop registers, which rules out loops like the one in the two-bidder protocol (\cref{fig:two-bidder-protocol}).
Furthermore, all prior works except for
\cite{DBLP:conf/ecoop/GheriLSTY22}
employ the directed choice restriction, which is strictly less general than sender-driven choice. \fs{FIX: citet?}
Many of these works also feature separate constructs for selecting branches and sending data.
In our symbolic protocols, this is not necessary because selecting branches can be modeled with equality predicates, as demonstrated by \cref{fig:two-bidder-symbolic}.
\citet{DBLP:conf/ecoop/GheriLSTY22} generalize choreography automata, which are finite-state LTSs with communciation events as transition labels but without final states.
One major difference between our work and theirs lies in the treatment of interleavings.
Unlike our protocol semantics, which are closed under the indistinguishability relation $\interswap$, inspired by Lamport's happened-before relation, choreography automata languages do not include any interleavings not present in the language. 
Setting aside asynchronous traces, the protocol 
$
\msgFromTo{\procA}{\procB}{\val}.\,
\msgFromTo{\procC}{\procD}{\val}.\, 0
$
in our setting would need to be represented as 
$
\msgFromTo{\procA}{\procB}{\val}.\,
\msgFromTo{\procC}{\procD}{\val}.\, 0
\; + \;
\msgFromTo{\procC}{\procD}{\val}.\,
\msgFromTo{\procA}{\procB}{\val}.\, 0
$ 
in their setting, and the protocol 
$
\mu t.\,
\msgFromTo{\procA}{\procB}{\val}.\,
\msgFromTo{\procC}{\procD}{\val}.\,
t
$
does not admit a representation as a choreography automaton. 
The branching behaviors are restricted with a well-sequencedness condition \cite[Def.\,3.2]{DBLP:conf/ecoop/GheriLSTY22}, a condition that has since been refined because it was shown to be flawed \cite{DBLP:journals/lmcs/FinkelL23}. 
\citet{DBLP:journals/corr/abs-2107-03984} showed that well-formedness conditions on synchronous choreography automata do not generalize soundly to the asynchronous~setting. 
Asynchronous communication is more challenging to analyze in general because it easily gives rise to infinite-state systems. 
\citet{NODBLPthesiszhou} conjectures that the framework in
\cite{DBLP:journals/pacmpl/00020HNY20} ``can be extended to support asynchronous communication'', \fs{FIX: citet?}
but does not conjecture if and how the projection operator would change. 
Due to directed choice, the same projection operator may remain sound under asynchronous semantics, because it rules out protocols where participants have a choice to receive from different senders. 
However, it will also likely inherit the same sources of incompleteness present in the synchronous~setting. 
In contrast to all aforementioned works, 
several works~\cite{DBLP:conf/coordination/DagninoGD21,DBLP:journals/corr/abs-2203-12876,DBLP:journals/fuin/CastellaniDG24} allow to specify send and receive events separately with ``deconfined'' global types. 
Deconfined global types are specified as a parallel composition of local processes, and then checked for desirable correctness properties, which were shown to be undecidable \cite{DBLP:conf/coordination/DagninoGD21}.
\paragraph{Completeness.} 
Implementability is a thoroughly-studied problem in the high-level message sequence chart (HMSC) literature. 
HMSCs are a standardized formalism for describing communication protocols in industry~\cite{z120-standard} and are well-studied in academia
~\cite{
	DBLP:conf/sdl/MauwR97,
	DBLP:conf/ac/GenestMP03,
	DBLP:conf/acsd/GenestM05,
	DBLP:conf/concur/GazagnaireGHTY07,
	DBLP:journals/tosem/RoychoudhuryGS12
}.
In the HMSC setting, implementability is called safe realizability, and is defined with respect to the implementation model of  communicating finite state machines \cite{DBLP:journals/jacm/BrandZ83}. 
Similar to our setting, a canonical implementation exists for any HMSC~\cite[Thm.\,13]{DBLP:journals/tse/AlurEY03}; unlike our setting, it is always computable. 
Therefore, existing work has focused less on synthesis and more on checking implementability. 
Despite having only finite states and data, HMSC implementability was shown to be undecidable in general \cite{DBLP:journals/tcs/Lohrey03}. 
Various fragments have since been identified in which the problem regains decidability.
\citet{DBLP:journals/tcs/Lohrey03} showed implementability to be EXPSPACE-complete for bounded HMSCs~\cite{DBLP:conf/concur/AlurY99,DBLP:conf/mfcs/MuschollP99} and globally-cooperative HMSCs ~\cite{DBLP:conf/stacs/Morin02,DBLP:journals/jcss/GenestMSZ06}.
These fragments restrict the communication topology of loops to be strongly and weakly connected respectively.
For HMSCs where every two consecutive communications share a participant, implementability was shown to be PSPACE-complete~\cite{DBLP:journals/tcs/Lohrey03}. 
In contrast, works that study comparably expressive protocol fragments to ours often sidestep the implementability question. 
Instead, implementability is addressed in the form of syntactic well-formedness conditions, as mentioned above, or indirectly through synthesis. 
None of the prior works attempted to show completeness; it was later shown in \cite{DBLP:conf/ecoop/Stutz23,DBLP:conf/cav/LiSWZ23} that all but \citet{DBLP:conf/ecoop/GheriLSTY22} are incomplete. \fs{FIX: citet?}
Several works \cite{DBLP:conf/concur/BocchiHTY10,DBLP:conf/tgc/BocchiDY12,DBLP:journals/jlp/ToninhoY17,DBLP:journals/pacmpl/00020HNY20} synthesize local implementations using the ``classical'' projection from multiparty session types. 
One kind of merge operator, called the plain merge, allows only the two participants in a choice to exhibit different behavior on each branch, a condition which is breached by our two-bidder protocol
(\cref{fig:two-bidder-protocol}).
\citet{DBLP:journals/pacmpl/00020HNY20} proves the soundness of projection with plain merge, but implements a more permissive variant called full merge in the toolchain. 
However, the projected local types are not guaranteed to be implementable:
both \cref{fig:unimplementable-local-type} and \cref{fig:implementable-local-type} are projectable in \cite{DBLP:journals/pacmpl/00020HNY20}. \fs{FIX: citet?}
Thus, the implementability problem is deferred to local types. 
Our results show that synthesis is ``as possible as'' the determinization of the non-deterministic underlying automata fragment. 
This means that implementations can be synthesized even for expressive classes of protocols that correspond to \eg symbolic finite automata~\cite{DBLP:conf/cav/DAntoniV17, DBLP:journals/corr/abs-2303-00924} and certain classes of timed and register automata~\cite{DBLP:journals/fmsd/BertrandSJK15, DBLP:journals/lmcs/ClementeLP22} due to the existence of off-the-shelf determinization algorithms for these classes~\cite{DBLP:conf/icst/VeanesHT10, DBLP:conf/tacas/VeanesB12,  DBLP:journals/jlp/BertrandBBC18}. 
\citet{DBLP:journals/pacmpl/ScalasY19} check safety properties of collections of local types by encoding the properties as $\mu$-calculus formulas and then model checking the typing context against the specification. They focus primarily on finite-state typing contexts under synchronous semantics, and thus all properties in their setting are decidable. For the asynchronous setting, only three sound approximations of safety are proposed, one of which bounds channel sizes and thus falls back into the finite-state setting. 
\smallskip
Next, we discuss further related works on choreographic programming and binary session types. 
\paragraph{Choreographic programming.}
Choreographic programming~\cite{DBLP:journals/tcs/Cruz-FilipeM20,DBLP:conf/ecoop/GiallorenzoMPRS21,DBLP:journals/pacmpl/HirschG22} describes global message-passing behaviors as programs rather than protocols, and therefore incorporate many more programming language features that are abstracted away in our model, such as computation and mutable state, in addition to features that our model cannot express, such as higher-order computations and delegation. 
Endpoint projection for choreographic programs, which shares a theoretical basis with multiparty session type projection, then generates individual, executable programs for each participant. 
The question of implementability, though undecidable in the presence of such expressivity, remains relevant to the soundness of endpoint projections. 
We discuss three approaches to endpoint projection. 
Pirouette~\cite{DBLP:journals/corr/abs-2111-03484} requires the programmer to specify explicit synchronization messages to ensure that ``different locations stay in lock-step with each other'', and conservatively rejects programs that are underspecified in this regard. Pirouette provides a mechanized proof of deadlock freedom for endpoint projections in Coq. 
Note that the claims of soundness and completeness in~\cite{DBLP:journals/corr/abs-2111-03484} are not with respect to implementability, but with respect to the translation via endpoint projection. \fs{FIX: citet?}
HasChor~\cite{DBLP:journals/corr/abs-2303-00924} rules out non-implementability by automatically incorporating location broadcasts when a choice is made. 
No formal correctness claims are made in \cite{DBLP:journals/corr/abs-2303-00924}. \fs{FIX: citet?}
\citet{DBLP:conf/esop/JongmansB22} allow if- and while- statements to be annotated with a conjunction of conditional choices for each participant, which expresses decentralized decision-making in protocols. 
They show that their endpoint projection for well-formed choreographies guarantees deadlock freedom and functional correctness. 
All aforementioned choreographic programming works assume a synchronous network. 
\paragraph{Binary session types with refinements.}
Finally, we briefly mention work on binary session types with refinements and data dependencies.  
In the binary setting, implementability is a less interesting problem due to the inherent duality between the two protocol participants; the distinction between global and local types is no longer meaningful. 
\citet{DBLP:conf/nfm/GriffithG13} refine binary sessions with basic data types, and shows decidability of the subtyping problem. 
\citet{DBLP:conf/esop/GommerstadtJP18} applies a similar type system for runtime monitoring of binary communication.
\citet{DBLP:journals/pacmpl/ThiemannV20} propose a label-dependent binary session type framework which allows the subsequent behavior of the protocol to depend on previous labels, which are drawn from a finite set. 
\citet{DBLP:conf/concur/DasP20} study the undecidable problem of local type equality, and provide a sound approximate algorithm. 
Das et al. \cite{DBLP:journals/lmcs/DasP22, DBLP:conf/csfw/DasB0PS21} further apply binary session types with refinements to resource analysis of blockchain smart contracts and amortized cost analysis. 
Actris~\cite{DBLP:journals/pacmpl/HinrichsenBK20} embeds binary session types into the Iris framework~\cite{DBLP:journals/jfp/JungKJBBD18}. 
The framework assumes asynchronous communication with FIFO channels, and can verify programs that combine message-passing concurrency and shared-memory concurrency. 
Actris has been extended with session type subtyping (Actris 2.0~\cite{DBLP:journals/lmcs/HinrichsenBK22}) and with linearity to prove both preservation and progress (LinearActris~\cite{DBLP:journals/pacmpl/JacobsHK24}).
Multris~\cite{Multris} is an extension of Actris in Iris to the multiparty setting. 
The message-passing layer of Multris is more restricted than Actris: Multris assumes synchronous communication and prohibits choice over channels: choices can only be made about message values between a given sender and receiver.
Multris takes a bottom-up approach~\cite{DBLP:journals/pacmpl/ScalasY19} to correctness: given a collection of local types, the type system checks that they can be safely combined. 
Multris guarantees protocol fidelity but not progress.

	\fs{Camera-ready: Add acks back}
	\subsubsection*{Acknowledgements}
	
	This work is supported in parts by 
	the \grantsponsor{GS100000001}{National Science Foundation}{http://dx.doi.org/10.13039/100000001} under the grant agreement~\grantnum{GS100000001}{2304758} 
	and 
	by the Luxembourg National Research Fund (FNR) under the grant agreement C22/IS/17238244/AVVA. 
	We thank the anonymous OOPSLA reviewers for their comments which improved the paper, and for identifying an erroneous claim in an earlier draft related to the complexity analysis of MST implementability. 
	
	\phantomsection\label{paper-last-page}
	\clearpage

	\bibliography{biblio,dblp}
	\clearpage
	\appendix
\section{Additional Material for Section~\ref{sec:characterization}}
\label{app:characterization}
\noMixedChoice*
\begin{proof}
	Suppose by contradiction that $x_1 \in \Alphabet_?$ and $x_2 \in \Alphabet_!$. 
	Let $\run_1$ be a run in $\mathcal{S}$ such that $wx_1 \leq \SyncToAsync(\trace(\run_1)) \wproj_{\Alphabet_{\procA}}$. 
	Let $\alpha_1 \cdot s_1 \xrightarrow{l_1} s_1' \cdot \beta_1$ be the unique splitting of $\run$ for $\procA$ with respect to $w$. 
	Then, $\procA$ is the receiver in $l_1$ and $\SyncToAsync(\trace(\alpha_1 \cdot s_1)) \wproj_{\Alphabet_{\procA}} = w$. 
	Let $\run_2$ be a run in $\mathcal{S}$ such that $wx_2 \leq \SyncToAsync(\trace(\run_2)) \wproj_{\Alphabet_{\procA}}$.
	Let $\alpha_2 \cdot s_2 \xrightarrow{l_2} s_2' \cdot \beta_2$ be the unique splitting of $\run_2$ for $\procA$ with respect to $w$. 
	Then, $\procA$ is the sender in $l_2$ and $\SyncToAsync(\trace(\alpha_2 \cdot s_2)) \wproj_{\Alphabet_{\procA}} = w$. 
	If $s_1 = s_2$, then we find a violation to the assumption that $\mathcal{S}$ is sender-driven. 
	Hence, $s_1 \neq s_2$ and we can instantiate NMC (\cref{cond:nmc}) with $s_2 \xrightarrow{l_2} s_2'$,  $s_1$ and $w$ to obtain a contradiction. 
\end{proof}
\ccIPrefix*
\begin{proof}
	Let $\run$ be a run in $I(w)$, and let 
	$w' = \SyncToAsync(\trace(\run)) \in \lang(\mathcal{S})$.
	In the case that $I(w)$ contains finite runs, we can pick a finite $\run$. Otherwise, $\run$ is infinite. 
	We reason about each case in turn. 
	
	\paragraph{Case: $\run$ is a finite run.}
	In the case that $\run$ is a finite run, to show that 
	$w \in \pref(\lang(\mathcal{S}))$
	we need to show the existence of a $w'' \in \lang(\mathcal{S})$ such that $w \leq w''$. 
	We construct such a $w''$ by constructing a $u$ such that in $wu$, all participants have completed their actions in $\run$, and furthermore $wu$ is channel-compliant. 
	Then, because $w'$ is channel-compliant by construction, and for all participants $\procA \in \Procs$, it holds that $wu \wproj_{\Alphabet_{\procA}} = w' \wproj_{\Alphabet_{\procA}}$, by \cite[Lemma 23]{DBLP:conf/concur/MajumdarMSZ21} it follows that $wu \interswap w'$, and thus $wu \in \lang(\mathcal{S})$. 
	
	For each participant $\procA \in \Procs$, let $y_\procA$ be defined such that 
	$w \wproj_{\Alphabet_{\procA}} \cdot y_\procA = w' \wproj_{\Alphabet_{\procA}}$. 
	We construct $u$ from the $y_\procA$ for each participant, starting with $u = \emptystring$. 
	If there exists some participant in $\Procs$ such that $y_\procA[0] \in \Alphabet_{\procA,!}$, append $y_\procA$ to $u$ and update $y_\procA$. 
	If not, for all participants $\procA\in \Procs$, $y_\procA[0] \in \Alphabet_{\procA,?}$. 
	Each symbol $y_\procA[0]$ for all participants appears in $v$. 
	Let $i_\procA$ denote for each participant the index in $w'$ such that $w'[i] = y_\procA[0]$.
	Let $\procC$ be the participant with the minimum index $i_\procC$: append $y_\procC$ to $u$ and update $y_\procC$. 
	Termination is guaranteed by the strictly decreasing measure of $\sum_{\procA \in \Procs} |y_\procA|$. 
	Furthermore, it is clear that upon termination, for all participants $\procA \in \Procs$, 
	$wu \wproj_{\Alphabet_{\procA}} = w' \wproj_{\Alphabet_{\procA}}$. 
	
	We argue that $wu$ satisfies the inductive invariant of channel compliancy. 
	In the case where $u$ is extended with a send action, channel compliancy is trivially re-established. 
	In the receive case, channel compliancy is re-established by the fact that the append order for receive actions follows that in $v$, which is channel-compliant by construction.
	
	\paragraph{Case: $\run$ is an infinite run.}
	In the case that $\run$ is a infinite run, to show that 
	$w \in \pref(\lang(\mathcal{S}))$
	we likewise need to show the existence of a $w'' \in \lang(\mathcal{S})$ such that $w \leq w''$. 
	Like before, we construct a $u$ and show that $wu \in \lang(\mathcal{S})$. 
	However, unlike before, we cannot rely on the fact that $wu \interswap w'$ to show that $wu \in \lang(\mathcal{S})$, because $w'$ is an infinite word and \cite[Lemma 23]{DBLP:conf/concur/MajumdarMSZ21} applies only to finite words. 
	Instead, we must prove that $wu \in \lang(\mathcal{S})$ by the definition of infinite word membership in $\lang(\mathcal{S})$, namely: 
	$wu \preceq_\interswap^\omega w'$. 
	By the definition of $ \preceq_\interswap^\omega$, it further suffices to show that: 
	\[
	\forall v \leq wu,~
	\exists v' \leq w', u' \in \AlphAsyncSubscript^*.~
	vu' \interswap v' \enspace.
	\] 
	
	For each participant $\procA \in \Procs$, let $\run_\procA$ be defined as the largest prefix of $\run$ such that 
	$\SyncToAsync(\trace(\run_\procA)) \wproj_{\Alphabet_\procA} = w \wproj_{\Alphabet_\procA}$. 
	Let $\procD$ be the participant with the maximum $|\run_\procD|$ in $\Procs$. 
	Clearly, $\run_\procD \leq \run$. 
	Let $\beta$ be defined such that $\run = \run_\procD \cdot \beta$. 
	We split the construction of $u$ into two parts: let $u = u_1u_2$. 
	We construct $u_1$ as above, by appending uncompleted actions in $\run_\procD$, ordering send events before receive events, and further ordering receive events by the order in which they appear in $\run_\procD$. 
	Then, upon termination, $wu_1$ is channel-compliant and satisfies for all $\procA \in \Procs$, 
	$wu_1 \wproj_{\Alphabet_{\procA}} = \SyncToAsync(\trace(\run_\procD)) \wproj_{\Alphabet_{\procA}}$. 
	Let $u_2 = \SyncToAsync(\trace(\beta))$. 
	
	We now show that $wu_1u_2 \preceq_\interswap^\omega w'$. 
	
	Let $v$ be an arbitrary prefix of $wu_1u_2$. 
	If $v \leq wu_1$, we pick $v' = \SyncToAsync(\trace(\run_\procD)) \leq w'$ and $u' \in \AlphAsyncSubscript*$ to be such that $vu' = wu_1$. 
	Otherwise, if $wu_1 < v$, let $\run'$ be defined as the smallest prefix of $\run$ such that for all participants $\procA \in \Procs$, 
	$v \wproj_{\Alphabet_{\procA}} = \SyncToAsync(\trace(\run')) \wproj_{\Alphabet_{\procA}}$. 
	We pick $v' = \SyncToAsync(\trace(\run'))$. 
	Because $v$ is channel-compliant, we can repeat the reasoning in the finite case to extend $v$ with $u'$ and apply \cite[Lemma 23]{DBLP:conf/concur/MajumdarMSZ21} to conclude that $vu' \interswap v'$. 
\end{proof}
\languageInclusionLeft* 
\begin{proof}
	Let $w$ be a word in $\lang(\mathcal{S})$. 
	Prior to case splitting on whether $w$ is a finite or infinite word, we establish a claim that is used in both cases.
	
	\paragraph{Claim 1.} $\pref(\lang(\mathcal{S})) \subseteq \pref(\lang(\CLTS{T}))$. \fs{FIX: the use of paragraph gives more spacing here (medskip?) as there is after the end of the proof usually; added smallskip below}
	Let $w \in \pref(\lang(\mathcal{S}))$. 
	We prove that $w \in \pref(\lang(\CLTS{T}))$ via structural induction on $w$. 
	The base case, $w = \emptystring$, is trivial.
	For the inductive step, let 
	$wx \in \text{pref}(\lang(\mathcal{S}))$. 
	From the induction hypothesis, 
	$w \in \text{pref}(\lang\CLTS{T})$. 
	It suffices to show that the transition labeled with $x$ is enabled for the active participant in~$x$.
	Let $(\vec{s},\xi)$ denote the $\CLTS{T}$ configuration reached on $w$. 
	In the case that $x \in \Alphabet_!$, let $x = \snd{\procA}{\procB}{\val}$. 
	The existence of an outgoing transition $\xrightarrow{\snd{\procA}{\procB}{\val}}$ from $\vec{s}_\procA$ follows from the fact that $\pref(\lang(\mathcal{S})) \wproj_{\Alphabet_\procA} \subseteq \pref(\lang(T_\procA))$ (\cref{def:local-language-property}). 
	The fact that $wx \in \text{pref}(\lang\CLTS{T})$ follows immediately from this and the fact that send transitions in a CLTS are always enabled.
	In the case that $x \in \Alphabet_?$, let $x = \rcv{\procB}{\procA}{\val}$.
	We obtain an outgoing transition $\xrightarrow{\rcv{\procB}{\procA}{\val}}$ from $\vec{s}_\procA$ analogously.
	We additionally need to show that $\xi(\procB,\procA)$ contains $m$ at the head. 
	This follows from the fact that $w$ is channel-compliant (\cref{prop:clts-channel-compliant}) and the induction hypothesis. 
	This concludes our proof of prefix set inclusion. 
	\textit{End Proof of Claim 1.}
	\smallskip
	
	\paragraph{Case:} $w \in \AlphAsyncSubscript^*$.  
	In the finite case, it remains to show that $\CLTS{T}$ reaches a final configuration on $w$. 
	From the canonicity of $\CLTS{T}$, it holds that all states in $\vec{s}$ are final. From the fact that all finite words in $\lang(\mathcal{S})$ contain matching receive events, all channels in $\xi$ are empty. \fs{unclear where the ``fact'' comes from}
	
	\paragraph{Case:} $w \in \AlphAsyncSubscript^\omega$. 
	The infinite case when $w \in \AlphAsyncSubscript^\omega$ is immediate from Claim 1. 
\end{proof}
\languageInclusionRight*
\begin{proof}
	Let $w \in \lang\CLTS{T} $.
	We again case split on whether $w$ is a finite or infinite word. 
	
	\paragraph{Case:} $w \in \AlphAsync^*$. 
	First, we establish that $w$ is terminated. 
	Let $(\vec{s}, \xi)$ be the $\CLTS{T}$ configuration reached on $w$. 
	Because $w$ is a finite, maximal word in $\lang(\CLTS{T})$, it holds that all states in $\vec{s}$ are final, and all channels in $\xi$ are empty. Therefore, no receive transitions are enabled from $(\vec{s}, \xi)$. 
	We argue that no send transitions are enabled from $(\vec{s}, \xi)$ either. 
	Suppose by contradiction that there exists an outgoing transition $\vec{s}_\procA \xrightarrow{\snd{\procA}{\procB}{\val}} s' \in T_\procA$ for participant $\procA$. 
	Then, $w \wproj_{\Alphabet_{\procA}} \cdot \snd{\procA}{\procB}{\val} \in \pref(\lang(T_\procA))$, and by the canonicity of $T_\procA$, 
	$w \wproj_{\Alphabet_{\procA}} \cdot \snd{\procA}{\procB}{\val} \in \pref(\lang(\mathcal{S})) \wproj_{\Alphabet_{\procA}}$.
	Then, there exists a maximal run $\run'$ in $\mathcal{S}$ such that 
	$w \wproj_{\Alphabet_{\procA}} \cdot \snd{\procA}{\procB}{\val}  \leq \SyncToAsync(\trace(\run')) \wproj_{\Alphabet_{\procA}}$. 
	Furthermore, there exists a finite, maximal run $\run_{fin}$ in $\mathcal{S}$ such that 
	$w \wproj_{\Alphabet_{\procA}} = \SyncToAsync(\trace(\run)) \wproj_{\Alphabet_{\procA}}$. 
	Let $s_{fin}$ be the last state in $\run_{fin}$. By assumption, $s_{fin} \in F$. 
	Let $\alpha \cdot s_1 \xrightarrow{\msgFromToNS{\procA}{\procB}{\val}} s_2 \cdot \beta$ be the unique splitting of $\run'$ for $\procA$ with respect to $w$. 
	Then, $s_1$ and $s_{fin}$ are simultaneously reachable for $\procA$ on prefix $w \wproj_{\Alphabet_{\procA}}$. 
	From SC, there exists a $s_2'$ such that $s_{fin} \xRightarrow[\procA]{\msgFromToNS{\procA}{\procB}{\val}} \Kleenestar s_2'$. 
	We find a contradiction to the assumption that final states in $\mathcal{S}$ do not have outgoing transitions. 
	
	Next, we show that for every $\run \in I(w)$, it holds that for every $\procA \in \Procs$, 
	$w \wproj_{\Alphabet_{\procA}} = \SyncToAsync(\trace(\run)) \wproj_{\Alphabet_{\procA}}$. 
	This implies that there exist no infinite runs in $I(w)$. 
	Suppose by contradiction that there exists a run $\run \in I(w)$ and a non-empty set of participants $\mathcal{Q}$ such that for every $\procC \in \mathcal{Q}$, it holds that
	$
	w \wproj_{\Alphabet_\procC} <
	\bigl(
	\SyncToAsync(\trace(\run))
	\bigr)
	\wproj_{\Alphabet_\procC}
	$ (*).

	Given a participant $\procA$, let $\run_\procA$ denote the largest prefix of $\run$ that contains $\procA$'s local view of $w$. Formally,
	\[
	\run_\procA = max\{\run'~|~\run' \leq \run ~\land~ 
	\SyncToAsync(\trace(\run'))
	\wproj_{\Alphabet_\procA} = w\wproj_{\Alphabet_\procA}
	\}\enspace.
	\]
	Note that due to maximality, the next transition in $\run$ after $\run_\procA$ must have $\procA$ as its active participant. 
	Let $\procB$ be the participant in $\mathcal{S}$ for whom $\run_\procB$ is the smallest.
	From the canonicity of $T_\procB$ and (*), it follows that $\vec{s}_{\procB}$ has outgoing transitions.
	If $\vec{s}_{\procB}$ has outgoing send transitions, then we reach a contradiction to the fact that $w$ is terminated. 
	If $\vec{s}_{\procB}$ has outgoing receive transitions, it must be the case that the next transition in $\run$ after $\run_\procB$ is of the form $\msgFromTo{\procA}{\procB}{\val}$ for some $\procA$ and $\val$.
	From the fact that $\procB$ is the participant with the smallest $\run_\procB$, we know that $\run_\procB < \run_\procA$, and from the FIFO property of CLTS channels it follows that $\val$ is in $\xi(\procA,\procB)$. 
	Then, the receive transition is enabled for $\procB$, and we again reach a contradiction to the fact that $w$ is terminated. 
	
	Thus, we can pick any finite run $\run \in I(w)$ which is maximal by definition, and invoke \cite[Lemma 23]{DBLP:conf/concur/MajumdarMSZ21} to conclude that $\SyncToAsync(\trace(\run)) \interswap w$, and thus $w \in \lang(\mathcal{S})$. 
	
	\paragraph{Case:} $w \in \AlphAsync^\infty$. 
	By the semantics of $\lang(\mathcal{S})$, to show $w \in \lang(\mathcal{S})$ it suffices to show:
	\[
	\exists w' \in \Alphabet^\omega.~
	w' \in \SyncToAsync(\lang(\mathcal{S}))
	\land 
	w \preceq_\interswap^\omega w' \enspace.
	\]
	
	\paragraph{Claim.} $\Inters_{u \leq w} I(u)$ contains an infinite run. 
	
	First, we show that there exists an infinite run in $\mathcal{S}$. 
	We apply König's Lemma to an infinite tree where each vertex corresponds to a finite run. 
	We obtain the vertex set from the intersection sets of $w$'s prefixes; each prefix ``contributes'' a set of finite runs.
	Formally, for each prefix $u \leq w$, let $V_u$ be defined as: 
	\[
	V_u \is \Union_{\run_u \in I(u)} \text{min}\{\run' \mid \run' \leq \run_u \land \forall \procA \in \Procs.~u \wproj_{\Alphabet_{\procA}} \leq \SyncToAsync(\trace(\run')) \wproj_{\Alphabet_{\procA}} \} \enspace.
	\]
	By the assumption that $I(u) \neq \emptyset$, $V_u$ is guaranteed to be non-empty. 
	We construct a tree $\mathcal{T}_w(V,E)$ with 
	$V \is \Union_{u \leq w} V_u$ and 
	$E \is \{(\run_1, \run_2) \mid \run_1 \leq \run_2\}$. 
	The tree is rooted in the empty run, which is included in $V$ by the prefix $\emptystring$. 
	$V$ is infinite because there are infinitely many prefixes of $w$. 
	$\mathcal{T}_w$ is finitely branching due to the fact that $\mathcal{S}$ is deterministic: while there can be infinitely many transitions from a given state in $S$, there are only finitely many transitions from a given state in $S$ on a particular transition label. In fact, there is only a single transition. 
	Therefore, we can apply König's Lemma to obtain a ray in $\mathcal{T}_w$ representing an infinite run in $\mathcal{S}$. 
	
	Let $\run'$ be such an infinite run. 
	We now show that $\run' \in \Inters_{u \leq w} I(u)$. 
	Let $v$ be a prefix of $w$. 
	To show that $\run' \in I(v)$, it suffices to show that one of the vertices in $V_v$ lies on $\run'$. In other words, 
	\[
	V_v \inters \{v \mid v \in \run'\} \neq \emptyset \enspace.
	\]
	Assume by contradiction that $\run'$ passes through none of the vertices in $V_v$. 
	Then, for any $u' \geq u$, because intersection sets are monotonically decreasing, it must be the case that $\run'$ passes through none of the vertices in $V_u'$.
	Therefore, $\run'$ can only pass through vertices in $V_u''$, where $u'' \leq u$. 
	However, the set $\Union_{u'' \leq u} V_u''$ has finite cardinality. 
	We reach a contradiction, concluding our proof of the above claim. 
	
	Let $\run' \in \Inters_{u \leq w} I(u)$, and let 
	$w' = \SyncToAsync(\trace(\run'))$. 
	It is clear that 
	$w' \in \AlphAsyncSubscript^\omega$ and $w' \in \SyncToAsync(\lang(\mathcal{S}))$.
	It remains to show that 
	$w \preceq_\interswap^\omega w'$. 
	By the definition of $ \preceq_\interswap^\omega$, it further suffices to show that: 
	\[
	\forall u \leq w,~
	\exists u' \leq w', v \in \AlphAsyncSubscript^*.~
	uv \interswap u' \enspace.
	\]
	Let $u$ be an arbitrary prefix of $w$. 
	Because by definition $\run' \in I(u)$, it holds that 
	$u \wproj_{\Alphabet_\procA} \leq \SyncToAsync(\trace(\run')) \wproj_{\Alphabet_\procA}$.
	
	For each participant $\procA \in \Procs$, let $\run'_\procA$ be defined as the largest prefix of $\run'$ such that 
	$\SyncToAsync(\trace(\run'_\procA)) \wproj_{\Alphabet_\procA} = u \wproj_{\Alphabet_\procA}$. 
	Such a run is well-defined by the fact that $u$ is a prefix of an infinite word $w$, and there exists a longer prefix $v$ such that $u \leq v$ and 
	$v \wproj_{\Alphabet_\procA} \leq \SyncToAsync(\trace(\run')) \wproj_{\Alphabet_\procA}$.
	
	Let $\procD$ be the participant with the maximum $|\run'_\procD|$ in $\Procs$. 
	Let $u' = \SyncToAsync(\trace(\run'_\procD))$. 
	Clearly, $u' \leq w'$. 
	Because $u'$ is $\SyncToAsync(\trace(\run'_\procD))$ for the participant with the longest $\run'_\procD$, it holds for all participants $\procA \in \Procs$ that 
	$u \wproj_{\Alphabet_\procA} \leq u' \wproj_{\Alphabet_\procA}$. 
	Then, there must exist $y_\procA \in \Alphabet_\procA^*$ such that 
	\[
	u \wproj_{\Alphabet_\procA} \cdot y_\procA = u' \wproj_{\Alphabet_\procA}\enspace.
	\]
	Let $y_\procA$ be defined in this way for each participant. 
	We construct $v \in \AlphAsyncSubscript^*$ such that $uv \interswap u'$. 
	Let $v$ be initialized with $\emptystring$.
	If there exists some participant in $\Procs$ such that $y_\procA[0] \in \Alphabet_{\procA,!}$, append $y_\procA$ to $v$ and update $y_\procA$. 
	If not, for all participants $\procA\in \Procs$, $y_\procA[0] \in \Alphabet_{\procA,?}$. 
	Each symbol $y_\procA[0]$ for all participants appears in $u'$. 
	Let $i_\procA$ denote for each participant the index in $u'$ such that $u'[i] = y_\procA[0]$.
	Let $\procC$ be the participant with the minimum index $i_\procC$. 
	Append $y_\procC$ to $v$ and update $y_\procC$. 
	Termination is guaranteed by the strictly decreasing measure of $\sum_{\procA \in \Procs} |y_\procA|$. 
	
	We argue that $uv$ satisfies the inductive invariant of channel compliancy. 
	In the case where $v$ is extended with a send action, channel compliancy is trivially re-established. 
	In the receive case, channel compliancy is re-established by the fact that the append order for receive actions follows that in $u'$, which is channel-compliant by construction.
	We conclude that $uv \interswap u'$ by applying \cite[Lemma~22]{DBLP:conf/concur/MajumdarMSZ21}.  
	
\end{proof} 
\INonEmpty*
\begin{proof}
	We prove the claim by induction on the length of $w$.
	
	\paragraph{\textbf{Base Case.}} $w = \emptystring$.
	The trace $w = \emptystring$ is trivially consistent with all maximal runs, and $I(w)$ therefore contains all maximal runs. 
	By assumption, $\mathcal{S}$ contains at least one maximal run. 
	Thus, $I(w)$ is non-empty.
	\efl{Technically we assume here that the LTS is non-empty. Empty LTS are trivially implementable.}
	
	\paragraph{\textbf{Induction Step.}} Let $wx$ be an extension of $w$ by $x \in \AlphAsyncSubscript$.
	
	The induction hypothesis states that 
	$I(w) \neq \emptyset$. 
	To re-establish the induction hypothesis, we need to show
	$I(wx) \neq \emptyset$. 
	We proceed by case analysis on whether $x$ is a receive or send event.
	
	\paragraph{Send Case.} Let $x$ = $\snd{\procA}{\procB}{\val}$.
	By \cref{lm:sndPrefixPreservation}, there exists a run in $I(wx)$ that shares a prefix with a run in $I(w)$. $I(wx) \neq \emptyset$ again follows immediately.
	
	\paragraph{Receive Case.} Let $x$ = $\rcv{\procB}{\procA}{\val}$.
	By \cref{lm:rcvIntersectionSetEquality}, $I(wx) = I(w)$. $I(wx) \neq \emptyset$ follows trivially from the induction hypothesis and this equality.
\end{proof}
\begin{proposition}[CLTS traces are channel-compliant]
	\label{prop:clts-channel-compliant}
	Let $\CLTS{T}$ be a CLTS, and let $w \in \AlphAsyncSubscript^*$ be a trace of $\CLTS{T}$. 
	Let $(\vec{s}, \xi)$ be the $\CLTS{T}$ configuration reached on $w$. 
	Then, $w$ is channel-compliant, and for every pair of participants $\procA \neq \procB \in \Procs$, 
	$\MsgVals(w \wproj_{\snd{\procA}{\procB}{\hole}}) = \MsgVals(w \wproj_{\rcv{\procA}{\procB}{\hole}}) \cdot \xi(\procA, \procB)$. 
\end{proposition}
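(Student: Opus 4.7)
The plan is to proceed by straightforward induction on the length of the trace $w$, proving the two claims (channel compliance, and the equation relating send/receive projections to the channel contents) simultaneously, since they feed into each other in the receive case.

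For the base case $w = \varepsilon$, the reached configuration is the initial configuration $(\vec{s}_0, \xi_0)$, where $\xi_0$ maps every channel to $\varepsilon$. Both $\MsgVals(w\wproj_{\snd{\procA}{\procB}{\hole}})$ and $\MsgVals(w\wproj_{\rcv{\procA}{\procB}{\hole}}) \cdot \xi_0(\procA,\procB)$ equal $\varepsilon$, so the equation holds, and channel compliance is vacuous.

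For the inductive step, assume the claim for $w'$ with reached configuration $(\vec{s}, \xi)$, and consider $w = w' \cdot x$ reaching $(\pvec{s}', \xi')$. I would split on whether $x$ is a send or a receive.
\begin{itemize}
\item If $x = \snd{\procC}{\procD}{\val}$, the CLTS semantics give $\xi'(\procC,\procD) = \xi(\procC,\procD)\cdot\val$ and $\xi'(c) = \xi(c)$ for every other channel $c$. For any channel $(\procA,\procB) \neq (\procC,\procD)$, both sides of the equation are unchanged from the induction hypothesis (IH). For $(\procC,\procD)$, the send projection grows by $\val$ on the left, and $\xi'(\procC,\procD)$ grows by $\val$ on the right, preserving the equation. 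Channel compliance for all new prefixes follows from the IH plus the fact that extending a word with a send event only strengthens the send-projection side of the prefix inequality.
\item If $x = \rcv{\procC}{\procD}{\val}$, the semantics require $\xi(\procC,\procD) = \val \cdot \xi'(\procC,\procD)$. For channels other than $(\procC,\procD)$, both sides are unchanged by IH. For $(\procC,\procD)$, the IH gives $\MsgVals(w'\wproj_{\snd{\procC}{\procD}{\hole}}) = \MsgVals(w'\wproj_{\rcv{\procC}{\procD}{\hole}}) \cdot \val \cdot \xi'(\procC,\procD)$, and appending $\val$ to the receive projection re-establishes the equation with $\xi'$. For channel compliance at the new prefix $w' \cdot x$, I would observe that the IH gives $\MsgVals(w'\wproj_{\rcv{\procC}{\procD}{\hole}}) \leq \MsgVals(w'\wproj_{\snd{\procC}{\procD}{\hole}})$, and the just-established equation shows that the extended receive projection $\MsgVals(w'\wproj_{\rcv{\procC}{\procD}{\hole}}) \cdot \val$ is still a prefix of $\MsgVals(w'\wproj_{\snd{\procC}{\procD}{\hole}})$, since $\val$ was the next symbol recorded in the channel by IH.
\end{itemize}

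This is a routine induction; there is no genuine obstacle. The only subtlety worth stating carefully is that the two invariants must be carried together in the induction: the equation is what allows channel compliance to be re-established in the receive case, since it identifies the yet-unreceived messages as exactly the channel contents. Everything else is bookkeeping about how the transition relation updates $\xi$ on the single channel $(\procC,\procD)$ while leaving others fixed, matched against the corresponding single-channel change in the projections.
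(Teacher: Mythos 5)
Your proposal is correct. Worth knowing: the paper does not actually prove this proposition in-line --- it states that the proof of the analogous result for communicating state machines ``can be generalized directly to CLTSs'' and defers to Lemma~19 of Majumdar et al.\ \cite{DBLP:conf/concur/MajumdarMSZ21}. The argument behind that cited lemma is precisely the simultaneous induction you describe: carry the equation $\MsgVals(w \wproj_{\snd{\procA}{\procB}{\hole}}) = \MsgVals(w \wproj_{\rcv{\procA}{\procB}{\hole}}) \cdot \xi(\procA, \procB)$ together with channel compliance, and use the equation in the receive case to identify the head of the channel with the next unmatched sent value. Your self-contained version also makes explicit why the generalization from finite state machines to CLTSs is immediate --- nothing in the induction uses finiteness of the local state spaces or of $\MsgVals$; only the shape of the transition relation on $\xi$ matters. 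Two cosmetic points: in the base case the channel-compliance condition is trivially satisfied ($\varepsilon \leq \varepsilon$) rather than vacuous, and since channel compliance quantifies over all prefixes of $w$, you should note (as you implicitly do) that each prefix of a CLTS trace is itself a trace, so the per-step check at the end of the word suffices to cover every prefix.
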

The proof of the same proposition for communicating state machines can be generalized directly to CLTSs, and thus we refer the reader to \cite[Lemma 19]{DBLP:conf/concur/MajumdarMSZ21}.
\rcvIntersectionEquality*
\begin{proof}
	Let $x$ = $\rcv{\procB}{\procA}{\val}$.
	Because $wx$ is a trace of $\CLTS{T}$, there exists a run
	$(\vec{s}_0, \xi_0) \xrightarrow{w} \Kleenestar (\vec{s}, \xi) \xrightarrow{x} (\pvec{s}',\xi')$
	such that
	$m$ is at the head of $\xi(\procB,\procA)$. 
	
	We assume that $I(w)$ is non-empty; if $I(w)$ is empty then $I(wx)$ is trivially empty.
	
	To show $I(w) = I(wx)$, let $\run \in I(w)$ and we show that $\run \in I(wx)$.
	Recall that
	$
	I(wx)
	$ is defined as $
	\Inters_{\procC \in \Procs} \globcomplocal{\mathcal{S}}{\procC}{wx}
	$.
	
	Because
	$\globcomplocal{\mathcal{S}}{\procC}{wx} =
	\globcomplocal{\mathcal{S}}{\procC}{w}$
	for every $\procC \in \Procs$ with $\procC \neq \procA$,
	it suffices to show that
	$\run \in \globcomplocal{\mathcal{S}}{\procA}{wx}$
	to show $\run \in I(wx)$.

	We proceed via proof by contradiction so let
	$\run \notin \globcomplocal{\mathcal{S}}{\procA}{wx}$ for $\run \in I(w)$. 
	
	Let $\alpha \cdot s_{\mathit{pre}} \xrightarrow{l} s_{\mathit{post}} \cdot \beta$ be the unique splitting of $\run$ for $\procA$ matching $w$. 
	By definition of unique splittings, $\procA$ is the active participant in $l$. 
	Because 
	$\run \notin \globcomplocal{\mathcal{S}}{\procA}{wx}$, it follows that 
	$l \neq \msgFromTo{\procB}{\procA}{\val}$. 
	By \cref{cor:no-mixed-choice}, $\procA$ is the receiver in $l$, and $l$ is of the form $\msgFromTo{\procC}{\procA}{\val'}$, 
	where $\procC \neq \procB$ or $\val' \neq \val$. 
	
	Before performing case analysis, we first establish a claim that is used in both cases. 
	Let $\run_\procA$ denote the largest prefix of $\run$ that is consistent with $w$ for $\procA$.
	Formally,
	$
	\run_\procA =
	\max
	\set{
		\run \mid \run \leq \run ~\land~ \bigl(
		\SyncToAsync(\trace(\run))
		\bigr)
		\wproj_{\Alphabet_\procA} \preforder  w\wproj_{\Alphabet_\procA}
	}
	$.
	Let $\run_\procB$ be defined analogously.
	It is clear that 
	$\run_\procA = \alpha \cdot s_{pre}$.

	\textit{Claim I. } $\run_\procB > \run_\procA$.
	
	From \cref{prop:clts-channel-compliant}, 
	$\MsgVals(w \wproj_{\snd{\procB}{\procA}{\hole}}) = 
	\MsgVals(w \wproj_{\rcv{\procB}{\procA}{\hole}}) \cdot \xi(\procB,\procA)$.
	Because $\run_\procA = \alpha \cdot s_{pre}$, it follows that
	$\MsgVals(w \wproj_{\rcv{\procB}{\procA}{\hole}}) = 
	\MsgVals(\SyncToAsync(\trace(\alpha \cdot s_{pre})) \wproj_{\rcv{\procB}{\procA}{\hole}})$. 
	
	Because $\val$ is at the head of $\xi(\procB, \procA)$ by assumption, there exists $u \in \MsgVals^*$ such that 
	$\MsgVals(w \wproj_{\snd{\procB}{\procA}{\hole}}) = 
	\MsgVals(\SyncToAsync(\trace(\alpha \cdot s_{pre}))) \wproj_{\rcv{\procB}{\procA}{\hole}} \cdot \val \cdot u$. 
	Thus, 
	$\MsgVals(w \wproj_{\snd{\procB}{\procA}{\hole}}) > 
	\MsgVals(\SyncToAsync(\trace(\alpha \cdot s_{pre}))) \wproj_{\rcv{\procB}{\procA}{\hole}}$ and 
	$\run_\procB > \run_\procA$ follows. 
	\textit{End Proof of Claim I.}
	
	\noindent
	\textit{Case:} $\procC = \procB$ and $\val' \neq \val$.
	
	We discharge this case by showing a contradiction to the assumption that $m$ is at the head of $\xi(\procB,\procA)$. 
	Because $\alpha \cdot s_{pre} \leq \run_\procA$ and $\run_\procA < \run_\procB$ from Claim I, it must be the case that
	$\alpha \cdot s_{pre} \xrightarrow{l} s_{post} \leq  \run_\procB$
	and
	$\snd{\procB}{\procA}{\val'}$ is in $w \wproj_{\Alphabet_{\procB}}$. 
	From \cref{prop:clts-channel-compliant}, it follows that 
	$\mathcal{V}(w \wproj_{\snd{\procB}{\procA}{\hole}}) = \MsgVals(w \wproj_{\rcv{\procB}{\procA}{\hole}}) \cdot \val' \cdot u'$ and
	$\xi(\procB,\procA)  =  \val' \cdot u'$, i.e. $m'$ is at the head of $\xi(\procB, \procA)$.
	We find a contradiction to the assumption that $m' \neq m$. 
	
	\noindent
	\textit{Case:} $\procC \neq \procB$.
	
	We discharge this case by showing a contradiction to RC.
	First, we establish the existence of a transition $s_1 \xrightarrow{\msgFromToNS{\procB}{\procA}{\val}} s_2 \in T$ such that $s_1 \neq s_{pre}$ and $s_1$ is reachable by $\procA$ on $\SyncToAsync^{-1}(w \wproj_{\Alphabet_{\procA}})$.
	\FS{Minor: at first, I thought that ``transition $\in T$'' does not ``type'' but it does. The reason was that $T_\procA$ is part of the CLTS and $T$ are the transitions of the sender-driven LTS. Maybe change one?}
	By the assumption that $wx$ is a trace of $\CLTS{T}$, 
	it follows that $wx \wproj_{\Alphabet_{\procA}}$ is a prefix of $\lang(T_\procA)$. 
	By the canonicity of $\CLTS{T}$, it holds that 
	$\pref(\lang(T_\procA)) \subseteq \pref(\lang(\mathcal{S}) \wproj_{\Alphabet_{\procA}})$, and thus
	$wx \wproj_{\Alphabet_{\procA}} \in \pref(\lang(\mathcal{S}) \wproj_{\Alphabet_{\procA}})$.
	Thus, there exists a maximal run $\run'$ in $\mathcal{S}$ such that 
	$wx \wproj_{\Alphabet_{\procA}} \leq \SyncToAsync(\trace(\run')) \wproj_{\Alphabet_{\procA}}$
	and 
	$s_1 \xrightarrow{\msgFromToNS{\procB}{\procA}{\val}} s_2 \in \run'$. 
	Because $\mathcal{S}$ is sender-driven, there does not exist a state $s \in S$ with two outgoing transition labels with different senders. 
	Therefore, $s_1 \neq s_{pre}$.
	
	By the fact that 
	$\alpha \cdot s_{pre} \xrightarrow{l} s_{post} \cdot \beta$ is the unique splitting of $\run$ for $\procA$ matching~$w$, it holds that $s_{pre}$ is also reachable by $\procA$ on $\SyncToAsync^{-1}(w \wproj_{\Alphabet_{\procA}})$.
	
	We instantiate RC with 
	$s_1 \xrightarrow{\msgFromToNS{\procB}{\procA}{\val}} s_2$, $s_{pre} \xrightarrow{\msgFromToNS{\procC}{\procA}{\val}} s_{post}$ and $\SyncToAsync^{-1}(w \wproj_{\Alphabet_{\procA}})$ 
	to obtain:
	\[
	\neg (\exists v \in \pref (\lang_{s_{post}}).~
	v \wproj_{\Alphabet_\procA} = \emptystring \land 
	\MsgVals(v \wproj_{\snd{\procB}{\procA}{\_}}) =
	\MsgVals(v \wproj_{\rcv{\procB}{\procA}{\_}}) \cdot \val) \enspace.
	\]
	We show, on the contrary, that
	\[
	\exists v \in \pref (\lang_{s_{post}}).~
	v \wproj_{\Alphabet_\procA} = \emptystring \land 
	\MsgVals(v \wproj_{\snd{\procB}{\procA}{\_}}) =
	\MsgVals(v \wproj_{\rcv{\procB}{\procA}{\_}}) \cdot \val \enspace.
	\]
	It is clear that $s_{post} \cdot \beta$ is a maximal run in $\mathcal{S}_{s_{post}}$.  
	By \cref{lm:cc-intersection-nonemptiness-implies-prefix}, to show that a witness $v \in \pref(\lang_{s_{post}})$, it suffices to show that $v$ is channel-compliant and furthermore, that for all participants $\procD \in \Procs$, 
	$v \wproj_{\Alphabet_{\procD}} \leq \SyncToAsync(\trace(s_{post} \cdot \beta)) \wproj_{\Alphabet_{\procD}}$. 
	
	Recall that $w$ is a trace of $\CLTS{T}$ and is thus channel-compliant. 
	Intuitively, we obtain a witness for $v$ by deleting from $w$ symbols that belong to $\SyncToAsync(\trace(\alpha \cdot s_{pre} \xrightarrow{l} s_{post}))$. 
	Formally, let $v$ be initialized to $w$ and let $l_1 \ldots l_n = \trace(\alpha \cdot s_{pre} \xrightarrow{l} s_{post})$. 
	For each $i \in \set{1, \ldots, n}$, let $l_i \is \msgFromTo{\procA_i}{\procB_i}{\val_i}$.
	We check whether $\snd{\procA_i}{\procB_i}{\val_i} \leq w \wproj_{\Alphabet_{\procA_i}}$, and if so, we delete the symbol $\snd{\procA_i}{\procB_i}{\val_i}$ from $w$. 
	We then check whether $\rcv{\procA_i}{\procB_i}{\val_i} \leq w \wproj_{\Alphabet_{\procB_i}}$, and again delete the symbol if so. 
	Note that due to the channel-compliancy of $v$, either both symbols are deleted, or only the send action is deleted.  
	We argue that the inductive invariant of channel-compliancy is satisfied: if a matching pair of send and receive actions are found in $v$ and deleted, each of $\MsgVals(v \wproj_{\rcv{\procA_i}{\procB_i}{\_}})$ and $\MsgVals(v \wproj_{\snd{\procA_i}{\procB_i}{\_}})$ lose their head message, and $\MsgVals(v \wproj_{\rcv{\procA_i}{\procB_i}{\_}}) \leq \MsgVals(v \wproj_{\snd{\procA_i}{\procB_i}{\_}})$ continues to hold; if only the send action is found and deleted, then it must be the case that $\MsgVals(v \wproj_{\rcv{\procA_i}{\procB_i}{\_}}) = \emptystring$ and the invariant is trivially re-established. 
	Thus, we establish that upon termination, $v$ is channel-compliant. 
	Furthermore, it holds that $s_{post} \cdot \beta \in I^{\mathcal{S}_{s_{post}}}(v)$. 
	\efl{Add sentence for why $\procA$ has no actions left}
	
	Recall that 
	$\MsgVals(w \wproj_{\snd{\procB}{\procA}{\_}}) =
	\MsgVals(w \wproj_{\rcv{\procB}{\procA}{\_}}) \cdot \val$.
	It remains to show that 
	$\MsgVals(v \wproj_{\snd{\procB}{\procA}{\_}}) =
	\MsgVals(v \wproj_{\rcv{\procB}{\procA}{\_}}) \cdot \val$.
	This holds from the fact that $\alpha \cdot s_{pre} = \run_\procA < \run_\procB$, which means that any labels of the form $\msgFromTo{\procB}{\procA}{\hole}$ in $l_1 \ldots l_n$ must find and delete a matching pair of send and receive actions in~$v$, thus preserving the above equality.

\end{proof}
\sndPreservation*
\begin{proof}
	Let $x$ = $\snd{\procA}{\procB}{\val}$.
	We prove the claim by induction on the length of $w$.
	
	\paragraph{\textbf{Base Case.}} $w = \emptystring$.
	By definition, $I(\emptystring)$ contains all maximal runs in $\mathcal{S}$. 
	Then, $\SyncToAsync(\trace(\alpha \cdot s_{pre})) \wproj_{\Alphabet_\procA} = \emptystring$ and it holds that 
	$s_0 \xRightarrow[\procA]{\emptystring} \Kleenestar s_{pre}$. 
	We argue that there exists $s_1 \in S$ such that $s_0 \xRightarrow[\procA]{\emptystring} \Kleenestar s_1$. 
	From the canonicity of $\CLTS{T}$ and the fact that $x \in \pref(\lang(T_\procA))$, it follows that 
	$x \in \pref(\lang(\mathcal{S}) \wproj_{\Alphabet_{\procA}})$.
	Thus, there exists $w \in \lang(\mathcal{S})$ such that $x \leq w \wproj_{\Alphabet_{\procA}}$, and consequently there exists a run $\run'$ such that $x \leq \SyncToAsync(\trace(\run')) \wproj_{\Alphabet_{\procA}}$.
	The unique splitting of $\run'$ for $\procA$ with respect to $\emptystring$ gives us a candidate for $s_1$. 
	By \cref{cond:scc}, there exists a $s_2$ such that $s_1 \xRightarrow[\procA]{l} \Kleenestar s_2$. 
	By the assumption that every run in $\mathcal{S}$ extends to a maximal run, there exists a maximal run in $I(x)$. 
	
	\paragraph{\textbf{Induction Step.}} Let $wx$ be an extension of $w$ by $x \in \Alphabet_{\procA, !}$.
	To re-establish the induction hypothesis, we need to show the existence of a run $\bar \run$ in $I(wx)$ such that 
	$\alpha \cdot s_{pre} \leq \bar \run$.
	Since $\procA$ is the active participant in $x$, it holds for any $\procC \neq \procA$ that
	$\globcomplocal{\mathcal{S}}{\procC}{w} = \globcomplocal{\mathcal{S}}{\procC}{wx}$.
	Therefore, to prove the existential claim, it suffices to construct a run $\bar \run$ that satisfies:
	\begin{enumerate}
		\item \label{claim:soundness-snd-case-in-extension-run-set}
		$\bar \run \in \globcomplocal{\mathcal{S}}{\procA}{wx}$,
		\item \label{claim:soundness-snd-case-in-original-run-set}
		$\bar \run \in I(w)$, and 
		\item \label{claim:soundness-snd-case-correct-prefix}
		$\alpha \cdot s_{pre} \leq \bar \run$.
	\end{enumerate}
	
	In the case that $l \wproj_{\Alphabet_{\procA}} = x$, we \fs{Minor: $l$ is quite ``far away'' here.}
	are done:
	Property~\ref{claim:soundness-snd-case-correct-prefix} and \ref{claim:soundness-snd-case-in-original-run-set} hold by construction, and Property~\ref{claim:soundness-snd-case-in-extension-run-set} holds by the definition of possible run sets.
	
	In the case that $l \wproj_{\Alphabet_{\procA}} \neq x$, we show the existence of a different continuation such that the resulting run satisfies all three conditions. 
	
	First, we establish that $\procA$ is the sender in $l$. 
	By definition of unique splitting, we know that $\procA$ is active in $l$.
	Assume towards a contradiction that $\procA$ is the receiver in $l$.
	Then, $l$ is of the form $\msgFromTo{\procB}{\procA}{\val}$. 
	Because $\alpha \cdot s_{pre} \xrightarrow{\msgFromToNS{\procB}{\procA}{\val}} s_{post} \cdot \beta$ is a maximal run in $\mathcal{S}$, 
	we have that
	$(w \cdot \rcv{\procB}{\procA}{\val})\wproj_{\Alphabet_{\procA}} \in \pref(\lang(\mathcal{S})\wproj_{\Alphabet_{\procA}})$. 
	By the canonicity of $\CLTS{T}$, 
	it holds that
	$\pref(\lang(\mathcal{S}) \wproj_{\Alphabet_{\procA}}) \subseteq \pref(\lang(T_\procA))$,
	and therefore
	$(w \cdot \rcv{\procB}{\procA}{\val})\wproj_{\Alphabet_{\procA}} \in \pref(\lang(T_\procA))$. 
	By assumption that $wx$ is a trace of $\CLTS{T}$, it holds that 
	$(wx) \wproj_{\Alphabet_{\procA}} \in \pref(\lang(T_\procA))$.
	From the fact that $\rcv{\procB}{\procA}{\val} \in \Alphabet_{\procA,?}$ and $x \in \Alphabet_{\procA,!}$, we find a contradiction to \cref{cor:no-mixed-choice}. 
	Therefore, $l$ must be of the form $\msgFromTo{\procA}{\procB'}{\val'}$, with $\procB' \neq \procB$ or $\val' \neq \val$.

	By assumption that $wx$ is a trace of $\CLTS{T}$, it holds that $wx \wproj_{\Alphabet_{\procA}} \in \pref(\lang(T_\procA))$.
	By the canonicity of $\CLTS{T}$ (\cref{def:local-language-property}(ii)), 
	we have
	$\pref(\lang(T_\procA)) \subseteq \pref((\lang(\mathcal{S}) \wproj_{\Alphabet_{\procA}}))$
	and hence,
	$wx \wproj_{\Alphabet_{\procA}} \in \pref(\lang(\mathcal{S}) \wproj_{\Alphabet_{\procA}})$.
	Thus, there exists $v \in \lang(\mathcal{S})$ such that
	$wx \wproj_{\Alphabet_{\procA}} \leq v \wproj_{\Alphabet_{\procA}}$, and consequently there exists a run $\run'$ such that $wx \wproj_{\Alphabet_{\procA}} \leq \SyncToAsync(\trace(\run')) \wproj_{\Alphabet_{\procA}}$.
	The unique splitting of $\run'$ for $\procA$ with respect to $w$ gives us a transition 
	$s_1 \xrightarrow{\msgFromToNS{\procA}{\procB}{\val}} s_2 \in T$. 
	
	If $s_1 = s_{pre}$, then $\alpha \cdot s_{pre} \xrightarrow{\msgFromToNS{\procA}{\procB}{\val}} s_2$ is a run in $\mathcal{S}$. 
	Otherwise, we instantiate SC (\cref{cond:scc}) with $s_1 \xrightarrow{\msgFromToNS{\procA}{\procB}{\val}} s_2$, $s_{pre}$ and the witness $w \wproj_{\Alphabet_{\procA}}$.
	Then, there exists $s'$ such that 
	$s_{pre} \xRightarrow[\procA]{\msgFromToNS{\procA}{\procB}{\val}} \Kleenestar s'$. 
	We argue that, in fact, 
	$s_{pre} \xrightarrow{\msgFromToNS{\procA}{\procB}{\val}} s' \in T$. 
	This follows from the fact established above that $\procA$ is the sender in $l$, and that $s_{pre} \xrightarrow{l} s_{post} \in T$. 
	By the assumption that $\mathcal{S}$ is sender driven, there does not exist a state with outgoing transitions that do not share a sender. 
	Therefore, 
	$\alpha \cdot s_{pre} \xrightarrow{\msgFromToNS{\procA}{\procB}{\val}} s'$ is a run in $\mathcal{S}$. 
	
	Either way, we have found a run that thus far satisfies Property~\ref{claim:soundness-snd-case-in-extension-run-set} and \ref{claim:soundness-snd-case-correct-prefix} regardless of its choice of maximal suffix. 
	Let $\alpha \cdot s_{pre} \xrightarrow{\msgFromToNS{\procA}{\procB}{\val}} \bar{s'}$ be a run in $\mathcal{S}$. 
	Then, for all choices of $\bar \beta$ such that $\alpha \cdot s_{pre} \xrightarrow{\msgFromToNS{\procA}{\procB}{\val}} \bar{s'} \cdot \bar \beta$ 
	is a maximal run, 
	both
	$wx \wproj_{\Alphabet_\procA} \leq  \SyncToAsync(\trace(\alpha \cdot s_{pre} \xrightarrow{\msgFromToNS{\procA}{\procB}{\val}} \bar{s'}))$ 
	and $\alpha \cdot s_{pre} \leq \alpha \cdot s_{pre} \xrightarrow{\msgFromToNS{\procA}{\procB}{\val}} \bar{s'} \cdot \bar \beta$ hold.
	
	Property~\ref{claim:soundness-snd-case-in-original-run-set}, however, requires that the projection of $w$ onto each participant is consistent with $\bar \run$, and this cannot be ensured by the prefix alone.
	
	We construct the remainder of $\bar \run$ by picking an arbitrary maximal suffix to form a candidate run, and iteratively performing suffix replacements on the candidate run until it lands in $I(w)$.
	Let $\bar \beta$ be a run suffix such that 
	$\alpha \cdot s_{pre} \xrightarrow{\msgFromToNS{\procA}{\procB}{\val}} \bar{s'} \cdot \bar \beta$ 
	is a maximal run in $\mathcal{S}$.
	Let $\run_c$ denote this candidate run. 
	
	If $\rho_c \in I(w)$, we are done.
	Otherwise, $\run_c \notin I(w)$ and there exists a non-empty set of processes $\mathcal{Q} \subseteq \Procs$ such that for each $\procC \in \mathcal{Q}$,
	\begin{align}\label{eq:not-prefix-of-rho-c}
		w \wproj_{\Alphabet_\procC} \nleq \SyncToAsync(\trace(\run_c)) \wproj _{\Alphabet_\procC}\enspace.
	\end{align}
	
	By the fact that $\rho \in I(w)$,
	\begin{align}\label{eq:prefix-of-rho}
		w \wproj_{\Alphabet_\procC} \leq \SyncToAsync(\trace(\run)) \wproj _{\Alphabet_\procC}\enspace.
	\end{align}
	We can rewrite (\ref{eq:not-prefix-of-rho-c}) and (\ref{eq:prefix-of-rho}) above to make explicit their shared prefix $\alpha \cdot s_{pre}$:
	\begin{align}
		\label{eq:not-prefix-of-rho-c-expanded}
		w \wproj_{\Alphabet_{\procC}} &\nleq \SyncToAsync(\trace(
		\alpha \cdot s_{pre} \xrightarrow{\msgFromToNS{\procA}{\procB}{\val}} \bar{s'} \cdot \bar{\beta})) \wproj _{\Alphabet_\procC} \\
		w \wproj_{\Alphabet_{\procC}} &\leq \SyncToAsync(\trace(\alpha \cdot s_{pre}  \xrightarrow{\msgFromToNS{\procA}{\procB'}{\val'}} s_{post} \cdot \beta)) \wproj _{\Alphabet_\procC}\enspace.
		\label{eq:prefix-of-rho-expanded}
	\end{align}
	
	We can further rewrite (\ref{eq:not-prefix-of-rho-c-expanded}) and (\ref{eq:prefix-of-rho-expanded}) to make explicit their point of disagreement: 
	\begin{align}
		\label{eq:not-prefix-of-rho-c-shared}
		w \wproj_{\Alphabet_\procC} &\nleq (\SyncToAsync(\trace(\alpha \cdot s_{pre})).~\snd{\procA}{\procB}{\val}.~\rcv{\procA}{\procB}{\val}.~ \SyncToAsync(\trace(\bar \beta))) \wproj _{\Alphabet_\procC} \\
		w \wproj_{\Alphabet_\procC} &\leq (\SyncToAsync(\trace(\alpha \cdot s_{pre})).~\snd{\procA}{\procB'}{\val'}.~\rcv{\procA}{\procB'}{\val'}.~\SyncToAsync(\trace(\beta))) \wproj _{\Alphabet_\procC}                                        
		\label{eq:prefix-of-rho-shared}
	\end{align}
	
	It is clear that in order for both \ref{eq:not-prefix-of-rho-c-shared} and \ref{eq:prefix-of-rho-shared} to hold, it must be the case that 
	$\SyncToAsync(\trace(\alpha \cdot s_{pre})) \wproj_{\Alphabet_{\procC}} < w \wproj_{\Alphabet_{\procC}}$.

	We formalize the point of disagreement between $w \wproj_{\Alphabet_\procC}$ and $\run_c$ using an index~$i_\procC$ representing the position of the first disagreeing symbol in $\trace(\run_c)$:
	\[
	i_\procC \is \text{max}\{i \mid  \SyncToAsync(\trace(\run_c)[0..i-1]) \wproj_{\Alphabet_{\procC}} \leq w \wproj_{\Alphabet_{\procC}} \}\enspace.
	\]
	By the maximality of $i_\procC$, it holds that $\procC$ is the active participant in $\trace(\run_c)[i_\procC]$. 
	By the fact that 
	$\SyncToAsync(\trace(\alpha \cdot s_{pre})) \wproj_{\Alphabet_{\procC}} < w \wproj_{\Alphabet_{\procC}}$ we know that 
	\[
	i_\procC > \card{\trace(\alpha \cdot s_{pre})} \enspace.
	\]
	
	We identify the participant in $\mathcal{Q}$ with the \textit{earliest disagreement} in $\SyncToAsync(\trace(\run_c))$: let $\bar{\procC}$ be the participant in $\mathcal{Q}$ with the smallest $i_{\bar{\procC}}$.
	If two participants that share the same smallest index, then by the fact that both participants are active in $\trace(\run_c)[i_{\bar\procC}]$, it must be the case that one is the sender and one is the receiver: we pick the sender to be $\bar \procC$.  
	Let $y_{\bar \procC}$ denote $\SyncToAsync(\trace(\run_c[i_{\bar \procC}])) \wproj_{\Alphabet_{\bar \procC}}$.
	
	\paragraph{Claim I} $y_{\bar \procC}$ is a send event.
	
	Assume by contradiction that $y_{\bar \procC}$ is a receive event.
	We identify the symbol in $w$ that disagrees with $y_{\bar \procC}$: let $w'$ be the largest prefix of $w$ such that $w' \wproj_{\Alphabet_{\bar \procC}} \leq \SyncToAsync(\trace(\run_c)) \wproj_{\Alphabet_{\bar \procC}}$.
	By definition, $w' \wproj_{\Alphabet_{\bar \procC}} = \SyncToAsync(\trace(\run_c)[0..i_{\bar \procC}-1]) \wproj_{\Alphabet_{\bar \procC}}$.
	Let $z$ be the next symbol following $w'$ in $w$; then $w'z \leq w$ and $z \in \Alphabet_{\bar \procC}$ with $z \neq y_{\bar \procC}$.
	Furthermore, by No Mixed Choice (\ref{cor:no-mixed-choice}) we have that $z \in \Alphabet_{\bar \procC, ?}$.
	
	By assumption,
	$w'z \nleq \SyncToAsync(\trace(\run_c)[0..i_{\bar \procC}])$.
	Therefore, any run with a trace that begins with $\run_c[0..i_{\bar \procC}]$ cannot be contained in $\globcomplocal{\mathcal{S}}{\bar \procC}{w'z}$, or consequently in $I(w'z)$.
	We show however, that $I(w'z)$ must contain some runs that begin with $\run_c[0..i_{\bar \procC}]$.
	From Lemma \ref{lm:rcvIntersectionSetEquality} for traces $w'$ and $w' z$, we obtain that
	$I(w') = I(w'z)$.
	Therefore, it suffices to show that $I(w')$ contains runs that begin with $\run_c[0..i_{\bar \procC}]$.

	\paragraph{Claim II} $\forall w'' \leq w'.\, I(w'')$ contains runs that begin with $\run_c[0..i_{\bar \procC}]$. 
	
	We prove the claim via induction on $w'$. 
	
	The base case is trivial from the fact that $I(\emptystring)$ contains all maximal runs.
	
	For the inductive step, let $w''y \leq w'$.
	
	In the case that $y \in \Alphabet_?$, we know $I(w''y) = I(w'')$ from Lemma \ref{lm:rcvIntersectionSetEquality} and the witness from $I(w'')$ can be reused.
	
	In the case that $y \in \Alphabet_!$, let $\procD$ be the active participant of $y$ and let $\run'$ be a run in $I(w'')$ beginning with $\run_c[0..i_{\bar \procC}]$ given by the inner induction hypothesis. 
	Let $\alpha' \cdot s_3 \xrightarrow{l'} s_4 \cdot \beta'$ be the unique splitting of $\run'$ for $\procD$ with respect to $w''$.
	If $\SyncToAsync(l') \wproj_{\Alphabet_\procD} = y$, then $\rho'$ can be used as the witness.
	Otherwise, $\SyncToAsync(l') \wproj_{\Alphabet_\procD} \neq y$, and $\rho' \notin \globcomplocal{\mathcal{S}}{\procD}{w''y}$.

	The outer induction hypothesis holds for all prefixes of $w$: we instantiate it with $w''$ and $y$ to obtain:
	\[
	\exists~\rho'' \in I(w''y).~\alpha' \cdot s_3 \leq \rho''\enspace.
	\]
	Let $i_\procD$ be defined as before; it follows that $\run'[i_\procD] = s_3$.
	It must be the case that $i_\procD > i_{\bar \procC}$: if $i_\procD \leq i_{\bar \procC}$, because $\run_c$ and $\run'$ share a prefix $\run_c[0..i_{\bar \procC}]$ and $w''y \leq w$, $\procD$~would be the earliest disagreeing participant instead of $\bar \procC$.
	
	Because $i_\procD > i_{\bar \procC}$, 
	$\run_c[0..i_{\bar \procC}] = \run'[0..i_{\bar \procC}] \leq \run'[0..i_\procD]$.
	Because $\run'[0..i_\procD] = \alpha' \cdot s_3 \leq \run''$,
	it follows from prefix transitivity that $\run_c[0..i_{\bar \procC}] \leq \run''$,
	thus re-establishing the induction hypothesis for $w''y$ with $\run''$ as a witness run that begins with $\run_c[0..i_{\bar \procC}]$.
	
	\FS{I found it a bit hard to follow this structurally but I think this is similar to an earlier proof?}
	This concludes our proof that $I(w')$ contains runs that begin with $\run_c[0..i_{\bar \procC}]$, and in turn our proof by contradiction that $y_{\bar \procC}$ must be a send event.

	Having established that $l_{i_{\bar \procC}}$ is a send event for $\bar \procC$, we can now reason from the canonicity of $\CLTS{T}$ and SC and conclude that there exists an outgoing transition from $\run_c[i_{\bar \procC}]$ and a maximal suffix such that the resulting run no longer disagrees with $w \wproj_{\Alphabet_{\bar \procC}}$. 
	The reasoning is identical to that which is used to construct our candidate run $\run_c$, and is thus omitted. 
	We update our candidate run $\run_c$ with the correct transition label and maximal suffix, update the set of states $\mathcal{Q} \in \Procs$ to the new set of participants that disagree with the new candidate run, and repeat the construction above on the new candidate run until $\mathcal{Q}$ is empty.
	
	Termination is guaranteed in at most $|w|$ steps by the fact that the number of symbols in $w$ that agree with the candidate run up to $i_{\bar \procC}$ must increase.
	
	Upon termination, the resulting $\run_c$ serves as our witness for $\bar \run$ and $\bar \run$ thus satisfies the final remaining property \ref{claim:soundness-snd-case-correct-prefix}: $\bar \run \in I(w)$. 
	This concludes our proof by induction of the prefix-preservation of send transitions.
\end{proof}
\completenessThm* 
\begin{proof}
	Let communicating LTS $\CLTS{B}$ implement $\mathcal{S}$. \fs{Minor: $B$ is the natural choice if the CLTS use $A$ usually. In contrast to earlier proofs, like the one in CAV, we do not need $T$ here so this could be used, couldn't it?}
	Specifically, we contradict protocol fidelity, and show that $\lang(\mathcal{S}) \neq \lang(\CLTS{B})$ by constructing a witness $v_0$ satisfying:
	\begin{enumerate}[(a)]
		\item \label{cond:word-is-trace}
		$v_0$ is a trace of $\CLTS{B}$, and
		\item \label{cond:word-I-set-empty}
		$I(v_0) = \emptyset$.
	\end{enumerate}

	The reasoning for the sufficiency of the above two conditions is as follows. 
	To prove the inequality of the two languages, it suffices to prove the inequality of their respective prefix sets, i.e. 
	\[
	\pref(\lang(\mathcal{S})) \neq \pref(\lang(\CLTS{B})) \enspace.
	\]
	Specifically, we show the existence of a $v \in \AlphAsyncSubscript^*$ such that
	\begin{align*}
		v &\in \{ u \mid u \leq w \land w \in \lang(\CLTS{B})\}
		~\land \\
		v &\notin \{ u \mid u \leq w \land w \in \lang(\mathcal{S})\} \enspace.
	\end{align*}
	Because $\CLTS{B}$ is deadlock-free by assumption, every trace can be extended to a maximal trace. 
	Therefore, every trace $v \in \AlphAsyncSubscript^*$ of $\CLTS{B}$ is a member of the prefix set of $\CLTS{B}$, \ie
	\[
	\exists ~(\vec{s},\xi).~(\vec{s}_0, \xi_0) \xrightarrow{v} \mathrel{\vphantom{\to}^*} (\vec{s}, \xi) 
	\implies v \in \{ u \mid u \leq w \land w \in \lang(\CSM{B})\}\enspace.
	\] 
	For any $w \in \lang(\mathcal{S})$, it holds that $I(w) \neq \emptyset$.

	Because $I(\hole)$ is monotonically decreasing, if $I(w)$ is non-empty then for any $v \leq w$, $I(v)$ is non-empty.
	By the following, to show that a word $v$ is not a member of the prefix set of $\lang(\mathcal{S})$ it suffices to show that $I(v)$ is empty:
	\[
	\forall v \in \AlphAsync^*.~
	I(v) = \emptyset \implies \forall w. ~v \leq w \implies w \notin \lang(\mathcal{S})\enspace.
	\]
	
	\myparagraph{Send Coherence. }  
	Assume that SC does not hold for some transition 
	$s_1 \xrightarrow{\msgFromToNS{\procA}{\procB}{\val}} s_2 \in T$. 
	The negation of SC says that there exists a simultaneously reachable state with no post-state reachable on $\msgFromTo{\procA}{\procB}{\val}$. 
	Formally, let $s \in S$ be a state with $s \neq s_1$ and $u \in \Alphabet^*_\procA$ be a word such that $s_0 \xRightarrow[\procA]{u} \Kleenestar s_1, s$. Then, there does not exist $s' \in S$ such that $s \xRightarrow[\procA]{\msgFromToNS{\procA}{\procB}{\val}} \Kleenestar s'$.
	
	Because $s_0 \xRightarrow[\procA]{u} \Kleenestar s$, there exists a run $\alpha \cdot s$ such that $\SyncToAsync(\trace(\alpha \cdot s)) \wproj_{\Alphabet_{\procA}} = u$. 
	
	Let $\bar w$ be $\SyncToAsync(\trace(\alpha \cdot s))$. 
	Let $\bar w \cdot \snd{\procA}{\procB}{\val}$ be our witness $v_0$; we show that $v_0$ satisfies \ref{cond:word-is-trace} and \ref{cond:word-I-set-empty}. 
	
	Because $\CLTS{B}$ implements $\mathcal{S}$, $\bar w$ is a trace of $\CLTS{B}$ and there exists a configuration $(\vec{t},\xi)$ of $\CLTS{B}$ such that
	$(\vec{t}_0,\xi_0) \xrightarrow{\bar w}\mathrel{\vphantom{\to}^*} (\vec{t},\xi)$.
	Because \mbox{$s_0 \xRightarrow[\procA]{u} \Kleenestar s_1$}, there again exists a run $\alpha_1 \cdot s_1$ such that $\SyncToAsync(\trace(\alpha_1 \cdot s_1)) \wproj_{\Alphabet_{\procA}} = u$.
	Thus, $\SyncToAsync(\trace(\alpha_1 \cdot s_1 \xrightarrow{\msgFromToNS{\procA}{\procB}{\val}} s_2))$ is a prefix of $\lang(\mathcal{S})$ and consequently, 
	$\SyncToAsync(\trace(\alpha_1 \cdot s_1 \xrightarrow{\msgFromToNS{\procA}{\procB}{\val}} s_2)) \wproj_{\Alphabet_{\procA}}$ is a prefix of $\lang(B_\procA)$. 
	In other words, $u \cdot \snd{\procA}{\procB}{\val}$ is a prefix of $\lang(B_\procA)$. 
	Because $B_\procA$ is deterministic, there exists an outgoing transition from $\vec{s}_\procA$ labeled with $\snd{\procA}{\procB}{\val}$.
	Because send transitions are always enabled in a communicating LTS, 
	$\bar w \cdot \snd{\procA}{\procB}{\val}$ is a trace of $\CLTS{B}$. 
	Thus, \ref{cond:word-is-trace} is established for $v_0$. 
	
	It remains to show that $v_0$ satisfies \ref{cond:word-I-set-empty}, namely $I(\bar w \cdot \snd{\procA}{\procB}{\val}) = \emptyset$.
	
	\textit{Claim.} All runs in $I(\bar w)$ begin with $\alpha \cdot s$.
	
	\textit{Proof of Claim.}

	This claim follows from the fact that $\mathcal{S}$ is deterministic and sender-driven. 
	Assume by contradiction that $\run' \in I(\bar w)$ and $\run'$ does not begin with $\alpha \cdot s$. 
	Because $\alpha \cdot s \neq \run'$, and $\mathcal{S}$ is deterministic, 
	$\trace(\alpha \cdot s) \neq \trace(\run')$. 
	Let $l = \trace(\alpha \cdot s)$ and let $l' = \trace(\run')$. 
	Moreover, let $\bar l$ be the largest common prefix of $l$ and $l'$. \fs{minor: $l$ is a single element usually, isn't it?}
	From the assumption that $\mathcal{S}$ is sender-driven, the first divergence between the traces of any two runs must correspond to a send action by some participant. 
	Let $\procA'$ be the sender in the first divergence between $l$ and $l'$.
	Because $\run' \in \globcomplocal{\mathcal{S}}{\procA'}{\bar w}$, it holds that
	$\bar w \wproj_{\Alphabet_{\procA'}} \leq \SyncToAsync(\trace(\rho')) \wproj_{\Alphabet_{\procA'}}$.
	We can rewrite the inequality as
	$\SyncToAsync(l) \wproj_{\Alphabet_{\procA'}} \leq \SyncToAsync(l') \wproj_{\Alphabet_{\procA'}}$.
	
	Because $\bar l$ is the largest common prefix shared by $l$ and $l'$, $\SyncToAsync(l) \wproj_{\Alphabet_{\procA'}}$ and $\SyncToAsync(l') \wproj_{\Alphabet_{\procA'}}$ are respectively of the form 
	$\bar l \wproj_{\Alphabet_{\procA'}} \cdot \snd{\procA'}{\procB_i}{\val'_i} \cdot z'$ and 
	$\bar l \wproj_{\Alphabet_{\procA'}} \cdot \snd{\procA'}{\procB_j}{\val'_j} \cdot y'$, with 
	$\procB_i \neq \procB_j$ or $\val'_i \neq \val'_j$.
	From this and 
	$\bar l \wproj_{\Alphabet_{\procA'}} \cdot \snd{\procA'}{\procB_i}{\val'_i} \cdot z' \leq 
	\bar l \wproj_{\Alphabet_{\procA'}} \cdot \snd{\procA'}{\procB_j}{\val'_j} \cdot y'$, we arrive at a contradiction.

	\textit{End Proof of Claim.}
	
	Because $I(\hole)$ is monotonically decreasing, $I(v_0) \subseteq I(\bar w)$. 
	With \textit{Claim}, every run in $I(v_0)$ begins with $\alpha \cdot s$.
	From the negation of SC, there does not exist $s' \in S$ such that $s \xRightarrow[\procA]{\msgFromToNS{\procA}{\procB}{\val}} \Kleenestar s'$, and thus there does not exist a maximal run $\bar \run \in \mathcal{S}$ such that 
	$v_0 \wproj_{\Alphabet_{\procA}} \leq \SyncToAsync(\trace(\bar \run)) \wproj_{\Alphabet_{\procA}}$. 
	
	Therefore, 
	$\globcomplocal{\mathcal{S}}{\procA}{\bar w \cdot \snd{\procA}{\procB}{\val}} = \emptyset$, and 
	$I(\bar w \cdot \snd{\procA}{\procB}{\val}) = \emptyset$
	follows.
	
	This concludes our proof by contradiction for the necessity of SC. 
	
	\myparagraph{Receive Coherence.}
	Assume that RC does not hold for a pair of transitions 
	$s_1 \xrightarrow{\msgFromToNS{\procA}{\procB}{\val}} s_2, s \xrightarrow{\msgFromToNS{\procC}{\procB}{\val}} s' \in T$. 
	
	Then, $s \neq s_1$, $\procC \neq \procA$ and let $u \in \Alphabet_{\procB}^*$ be a word such that 
	$s_0 \xRightarrow[\procB]{u} \Kleenestar s_1, s$. 
	Furthermore there exists 
	$w \in \pref(\lang(\mathcal{S}_{s'}))$ with
	$w \wproj_{\Alphabet_{\procB}} = \emptystring \land 
	\MsgVals(w \wproj_{\snd{\procA}{\procB}{\_}}) =
	\MsgVals(w \wproj_{\rcv{\procA}{\procB}{\_}}) \cdot \val$.

	Because 
	$s_0 \xRightarrow[\procB]{u} \Kleenestar s_1, s$ and 
	$s \xrightarrow{\msgFromToNS{\procC}{\procB}{\val}} s'$, 
	there exists a run $\alpha \cdot s \xrightarrow{\msgFromToNS{\procC}{\procB}{\val}} s'$ such that 
	$\SyncToAsync(\trace(\alpha \cdot s)) \wproj_{\Alphabet_{\procB}} = u$. 
	
	Let $\SyncToAsync(\trace(\alpha \cdot s)) \cdot \snd{\procC}{\procB}{\val} \cdot w \cdot \rcv{\procA}{\procB}{\val}$ be our witness $v_0$; we show that $v_0$ satisfies \ref{cond:word-is-trace} and \ref{cond:word-I-set-empty}. 
	
	First, we show that $v_0$ is a trace of $\CLTS{B}$. 
	We reason about each extension of $v_0$ in turn, starting with $\SyncToAsync(\trace(\alpha \cdot s))$. 
	It is clear that $\SyncToAsync(\trace(\alpha \cdot s))$ is a trace of $\CLTS{B}$: this follows immediately from the assumption that $\CLTS{B}$ implements $\mathcal{S}$. 
	Let $(\vec{s}, \xi)$ be the $\CLTS{B}$ configuration reached on $\SyncToAsync(\trace(\alpha \cdot s))$:
	\[
	(\vec{s_0}, \xi_0) \xrightarrow{\SyncToAsync(\trace(\alpha \cdot s))} \Kleenestar (\vec{s}, \xi)
	\] 
	
	Next, we reason about the extension $\snd{\procC}{\procB}{\val} \cdot w$ together. 
	We first establish that $\snd{\procC}{\procB}{\val} \cdot w \in \pref(\lang(\mathcal{S}_s))$.
	Because $w \in \pref(\lang(\mathcal{S}_{s'}))$, there exists a maximal run $s' \cdot \beta$ such that $s' \cdot \beta \in I(w)$. \fs{FIX: parametrise $I(w)$?}
	Observe that $s \xrightarrow{\msgFromToNS{\procC}{\procB}{\val}} s' \cdot \beta \in I(\snd{\procC}{\procB}{\val} \cdot w)$ and that $\snd{\procC}{\procB}{\val} \cdot w$ remains channel-compliant due to the assumption that $w \wproj_{\Alphabet_{\procB}} = \emptystring$. \fs{``remains'' refers here to $w$, does it? It would help if this was more explicit.}
	Thus, by \cref{lm:cc-intersection-nonemptiness-implies-prefix} it holds that $\snd{\procC}{\procB}{\val} \cdot w \in \pref(\lang(\mathcal{S}_s))$.
	Therefore, $\SyncToAsync(\trace(\alpha \cdot s)) \cdot \snd{\procC}{\procB}{\val} \cdot w \in \pref(\lang(\mathcal{S}))$, and by the assumption that $\CLTS{B}$ implements $\mathcal{S}$, 
	$\SyncToAsync(\trace(\alpha \cdot s)) \cdot \snd{\procC}{\procB}{\val} \cdot w$ is a trace of $\CLTS{B}$: 
	\[
	(\vec{s_0}, \xi_0) \xrightarrow{\SyncToAsync(\trace(\alpha \cdot s))} \Kleenestar (\vec{s}, \xi) \xrightarrow{\snd{\procC}{\procB}{\val} \cdot w} (\pvec{s}', \xi')
	\]
	Finally, we reason about the extension $\rcv{\procA}{\procB}{\val}$. 
	We show that there exists a $\CLTS{B}$ configuration $(\pvec{s}'', \xi'')$ such that
	$(\pvec{s}', \xi') \xrightarrow{\rcv{\procA}{\procB}{\val}} (\pvec{s}'', \xi')$.
	To do so, we need to show that
	\begin{enumerate}[label=(\arabic*)]
		\item \label{cond:outgoing-1}
		there exists an outgoing transition labeled with $\rcv{\procA}{\procB}{\val}$ from $\pvec{s}'_\procB$, and
		\item \label{cond:channels-2}
		$\xi'(\procA,\procB) = \val \cdot u'$, with $u' \in \MsgVals^*$.
	\end{enumerate}
	
	We know that
	$s_0 \xRightarrow[\procB]{u} \Kleenestar s_1$ and 
	$s_1 \xrightarrow{\msgFromToNS{\procA}{\procB}{\val}} s_2$, 
	so there exists a run $\alpha_1 \cdot s_2$ such that
	$\SyncToAsync(\trace(\alpha_1 \cdot s_2)) \wproj_{\Alphabet_{\procB}} = u \cdot \rcv{\procA}{\procB}{\val}$. 
	Because 
	$\SyncToAsync(\trace(\alpha_1 \cdot s_2)) \wproj_{\Alphabet_{\procB}} \in \pref(\lang(\mathcal{S})) \wproj_{\Alphabet_{\procB}}$ and $\CLTS{B}$ implements $\mathcal{S}$,
	it follows that 
	$u \cdot \rcv{\procA}{\procB}{\val} \in \pref(\lang(B_\procB))$.
	Let $t \in Q_\procB$ be the state reached on $u$ in $B_\procB$.
	The state $t$ is unique since $B_\procB$ is deterministic.
	Because $u \cdot \rcv{\procA}{\procB}{\val}$ is a prefix in $B_\procB$, there exists a transition
	$t \xrightarrow{\rcv{\procA}{\procB}{\val}} t_1 \in \delta_\procB$. 
	It holds that $(\SyncToAsync(\trace(\alpha \cdot s)) \cdot \snd{\procC}{\procB}{\val} \cdot w) \wproj_{\Alphabet_{\procB}} = u$, so it follows that $\vec{s'}_\procB = t$ and there exists an outgoing transition from $\vec{s'}_\procB$ labeled with $\rcv{\procA}{\procB}{\val}$. This establishes \ref{cond:outgoing-1}.
	
	\ref{cond:channels-2} is established from the fact that send actions are immediately followed by their matching receive action in $\SyncToAsync(\trace(\alpha \cdot s))$, and therefore all channels in $\xi$ are empty, including $\xi(\procA,\procB)$.
	Because $\snd{\procC}{\procB}{\val}$ does not concern $\xi(\procA,\procB)$, $\val$ remains the first unmatched send action from $\procA$ to $\procB$ in $\SyncToAsync(\trace(\alpha \cdot s)) \cdot \snd{\procC}{\procB}{\val} \cdot w$, \fs{What about $w$? Add that ``$w \wproj_{\Alphabet_\procB} = \emptystring$ by assumption and thus there cannot be any receive event for $\procB$ in $w$''?}
	and thus $\val$ is at the head of channel $\xi'(\procA,\procB)$:
	\[
	(\vec{s_0}, \xi_0) \xrightarrow{\SyncToAsync(\trace(\alpha \cdot s))} \Kleenestar (\vec{s}, \xi) \xrightarrow{\snd{\procC}{\procB}{\val} \cdot w} (\vec{s'}, \xi') \xrightarrow{\rcv{\procA}{\procB}{\val}} (\vec{s''}, \xi'')
	\enspace .
	\]
	
	This concludes our proof of \ref{cond:word-is-trace}.
	
	Next, we argue that $I(v_0) = \emptyset$. This claim follows trivially from the observation that every run in $I(v_0)$ must begin with $\alpha \cdot s \xrightarrow{\msgFromToNS{\procC}{\procB}{\val}} s'$, \fs{FIX: I believe it but I do not see where it comes from directly.}
	and therefore $v_0$ must satisfy
	$v_0 \wproj_{\Alphabet_{\procB}} \leq u \cdot \rcv{\procC}{\procB}{\val}$, yet 
	$v_0 \wproj_{\Alphabet_{\procB}} = u \cdot \rcv{\procA}{\procB}{\val}$ and we find a contradiction. 
	
	\myparagraph{No Mixed Choice. }  
	Assume that NMC does not hold for a pair of transitions 
	$s_1 \xrightarrow{\msgFromToNS{\procA}{\procB}{\val}} s_2, s \xrightarrow{\msgFromToNS{\procC}{\procA}{\val}} s' \in T$. 
	The negation of NMC says that $s_1$ and $s$ are simultaneously reachable.
	Let $u \in \Alphabet_{\procB}^*$ be a word such that 
	$s_0 \xRightarrow[\procB]{u} \Kleenestar s_1, s$. 
	
	Because $s_0 \xRightarrow[\procA]{u} \Kleenestar s$, there exists a run $\alpha \cdot s$ such that $\SyncToAsync(\trace(\alpha \cdot s)) \wproj_{\Alphabet_{\procA}} = u$. 
	
	Let $\bar w$ be $\SyncToAsync(\trace(\alpha \cdot s))$. 
	Let $\bar w \cdot \snd{\procC}{\procA}{\val} \cdot \snd{\procA}{\procB}{\val}$ be our witness $v_0$; we show that $v_0$ satisfies \ref{cond:word-is-trace} and \ref{cond:word-I-set-empty}. 
	
	The reasoning is similar to that for the witness constructed for Send Coherence Condition, and is thus omitted. 
\end{proof}
\section{Additional Material for Section~\ref{sec:symbolic}}
\label{app:symbolic}
\globalTypeImplementability*
\begin{proof}
	The arguments for co-NP membership of implementability for global types are identical to those for general finite protocols, and are thus omitted. 
	
	As in the proof of Theorem 5.8, we show NP-hardness of non-implementability via a reduction from the 3-SAT problem.
	Assume a 3-SAT instance $\varphi = C_1 \land \ldots \land C_k$. Let $x_1,\dots,x_n$ be the variables occurring in $\varphi$ and let $L_{ij}$ be the $j$th literal of clause~$C_i$, with $1 \leq i \leq k$ and $1 \leq j \leq 3$. 
	
	We construct a global type $\GG_\varphi$ over participants $\Procs = \{\procA,\procB, \procC, \roleFmt{x_1},\roleFmt{\overline{x}_1}, \dots,\roleFmt{x_n},\roleFmt{\overline{x}_n}\}$, such that $\varphi$ is satisfiable iff $\GG_\varphi$ is implementable. 
	In particular, we ensure that $\GG_\varphi$ is implementable iff $\avail_{\procA,\procB,\{\procB\}}(\val,G')$ does not hold for some subterm $G'$ in $\GG_\varphi$. 
	
	The construction idea for $\GG_\varphi$ is identical to that for $\mathcal{S}_\varphi$ from Theorem 5.8, but with several modifications to yield a tree-shaped protocol which corresponds to a global type. 
	First, for each branching state from which $\procC$ selects variables or clauses, represented as $\mu t$ terms, we introduce a new branch that acts as a forward edge connecting to the next branching state. 
	Because branches in a global type can only join at a single state via recursion variables, and recursion variables must appear in scope of their $\mu t$ terms, variable and clause selection proceeds by recursing ``backwards'' towards the top-level global type. Due to this reversal of traversal order, the initial choice by $\procC$ and the message exchange $\msgFromToNS{\procA}{\procB}{\val}$ potentially violating Receive Coherence swap places in the protocol. 
	The construction of global type $\GG_\varphi$ is detailed below:

	\begin{enumerate}
		\item Define for every variable $x_i$ with $2 < i < n$ a global type $G_{x_i}$ representing a truth assignment to variable $x_i$ as follows: 
		\[\small
		G_{x_i} \is 
		\mu t_{x_i}.~
		+ \;
		\begin{cases}
			\msgFromToNS{\procC}{\roleFmt{x_i}}{\bot}. \,
			\msgFromToNS{\procC}{\roleFmt{\overline{x}_i}}{\top}. \,
			\msgFromToNS{\procC}{\procB}{\val_{x_i}}. \,
			\msgFromToNS{\procB}{\roleFmt{x_i}}{\val}. \,
			t_{x_{i+1}} \, 
			\\
			\msgFromToNS{\procC}{\roleFmt{\overline{x}_i}}{\bot}. \,
			\msgFromToNS{\procC}{\roleFmt{x_i}}{\top}. \,
			\msgFromToNS{\procC}{\procB}{\val_{\overline{x}_i}}. \,  \msgFromToNS{\procB}{\roleFmt{\overline{x}_i}}{\val}. \,
			t_{x_{i+1}}\, 
			\\ 
			\msgFromToNS{\procC}{\roleFmt{x_i}}{next}. \,
			\msgFromToNS{\procC}{\roleFmt{\overline{x}_i}}{next}. \,
			\msgFromToNS{\procC}{\procB}{next}. \,
			G_{x_{i-1}} \, 
		\end{cases}
		\]
		For $x_2$ and $x_n$, the construction is modified as follows.  
		For $G_{x_n}$, the recursion variable in the first and second branches is replaced with $t_{C_1}$. 
		For $G_{x_2}$, the following is added before $G_{x_1}$ in the third branch:  
		\[\small
		\msgFromToNS{\procC}{\procB}{\lastMsg}. \,
		\msgFromToNS{\procC}{\procA}{\lastMsg}. \,
		\msgFromToNS{\procC}{\roleFmt{x_1}}{\lastMsg}. \,
		\msgFromToNS{\procC}{\roleFmt{\overline{x}_1}}{\lastMsg}. \,
		\msgFromToNS{\procB}{\procA}{\val}. \,
		\msgFromToNS{\procB}{\roleFmt{x_1}}{\val}. \,
		\msgFromToNS{\procB}{\roleFmt{\overline{x}_1}}{\val}. \,
		\]
		
		\item Define for every clause $C_i = L_{i1} \lor L_{i2} \lor L_{i3}$ with $2 \leq i < k$ a global type $G_{C_i}$ as follows, where $x_{ij}$ is defined as $\roleFmt{x}$ if $L_{ij}=x$ and $\roleFmt{\overline{x}}$ if $L_{ij} = \neg x$: 
		\[\small
		G_{C_i} \is 
		\mu t_{C_i}.~
		+ \;
		\begin{cases}
			\Sigma_{j = 1..3} \,
			\msgFromToNS{\procC}{x_{ij}}{\val}. \,
			\msgFromToNS{\procC}{\procA}{\val_{x_{ij}}}. \,
			\msgFromToNS{x_{ij}}{\procA}{\val}. \,
			t_{C_{i+1}} \, 
			\\
			\msgFromToNS{\procC}{x_{i1}}{\nextMsg}. \,
			\msgFromToNS{\procC}{x_{i2}}{\nextMsg}. \,
			\msgFromToNS{\procC}{x_{i3}}{\nextMsg}. \,
			\msgFromToNS{\procC}{\procA}{\nextMsg}. \,
			G_{C_{i-1}} \, 
		\end{cases}
		\]
		For $C_1$ and $C_k$, the construction is modified as follows.

		For $G_{C_1}$, the last branch continues with $G_{x_n}$. 
		For $G_{C_k}$, the recursion variable in the first three branches is replaced with $t$. 
		
		\item Define $G_{x_1}$ for variable $x_1$ as follows: 
		\[\small
		G_{x_1} \is 
		+ \;
		\begin{cases}
			\msgFromToNS{\procC}{\procA}{\val_1}. \,
			\msgFromToNS{\procC}{\procB}{\val}. \, 
			\overline{G}
			\\
			\msgFromToNS{\procC}{\procA}{\val_2}. \,
			\msgFromToNS{\procA}{\procB}{\val}. \,
			0 \, 
		\end{cases}
		\\
		\overline{G} \is 
		+ \;
		\begin{cases}
			\msgFromToNS{\procC}{\roleFmt{x_1}}{\bot}. \, 
			\msgFromToNS{\procC}{\roleFmt{\overline{x}_1}}{\top}. \, 
			\msgFromToNS{\procC}{\procB}{\val_{x_1}}. \, 
			\msgFromToNS{\procB}{\roleFmt{x_1}}{\val}. \,
			t_{x_2} \, 
			\\
			\msgFromToNS{\procC}{\roleFmt{\overline{x}_1}}{\bot}. \,
			\msgFromToNS{\procC}{\roleFmt{x_1}}{\top}. \,
			\msgFromToNS{\procC}{\procB}{\val_{\overline{x}_1}}. \,  \msgFromToNS{\procB}{\roleFmt{\overline{x}_1}}{\val}. \,
			t_{x_2}\, 
		\end{cases}
		\]
	\end{enumerate}
	
	The global type $\GG_\varphi$ is thus defined as: 
	\[
	\GG_\varphi \is 
	\mu t.~\msgFromToNS{\procC}{\procB}{\topMsg}. \, 
	\msgFromToNS{\procA}{\procB}{\val}. \,
	G_{C_k}
	\]
	
	Observe that $\GG_\varphi$ is linear in the size of $\varphi$. 
	
	We first establish that $\avail_{\procA,\procB,\{\procB\}}(\val,\overline{G})$ holds in $\GG_\varphi$ iff $\varphi$ is satisfiable. Observe that the $G_{x_i}$'s contain two branches that recurse ``backwards'' to the previous $G_{x_{i+1}}$, and one branch that proceeds ``forwards'' towards $G_{x_1}$. Each time a backward branch is taken, either $\roleFmt{x_i}$ or $\roleFmt{\overline{x}_i}$ is added to the blocked set $\blockedset$ along the path. Forward branches do not change the blocked set, as participant $\procB$ does not send messages in them. Thus, the path computed by $\avail_{\procA,\procB,\{\procB\}}(\val,\overline{G})$ from $\overline{G}$ to $G_{C_1}$ must contain for each variable $x_i$ either $\roleFmt{x_i}$ or $\roleFmt{\overline{x}_i}$. 
	The blocked set $\blockedset$ thus encodes a truth assignment $\rho_\blockedset$ for the $x_i$'s where $\rho_\blockedset(x_i)=\top$ iff $\roleFmt{x_i} \not\in \blockedset$.
	By construction of $G_{x_i}$, for every truth assignment $\rho$, there exists at least one path between $\overline{G}$ and $G_{C_1}$ such that $\rho=\rho_\blockedset$ for the blocked set $\blockedset$ computed along that path. 
	
	The $G_{C_i}$ terms allow $\procA$ to proceed backwards towards $\GG_\varphi$ by selecting a branch whose participant $x$ is not in $\blockedset$, i.e. $C_i$ is satisfied by $\rho_\blockedset$.
	Thus, a path from $G_{C_1}$ to $\GG_\varphi$ adds $\procA$ to $\blockedset$ at $t_i$ iff $\rho_\blockedset$ does not satisfy at least one of the clauses $C_i$.
	Therefore, $\val$ is available in $\overline{G}$ iff there exists a $\blockedset$ such that $\rho_\blockedset$ satisfies $\varphi$.
	
	The reasoning that $\GG_\varphi$ is implementable iff $\avail_{\procA,\procB,\{\procB\}}(\val,\overline{G})$ does not hold again follows that for $\mathcal{S}_\varphi$, and below we only discuss new behavior introduced by the structural changes to $\mathcal{S}_\varphi$. 
	
	Participant $\procC$ still dictates the control flow in the global type, but now additionally sends $\nextMsg$ messages to inform participants in the branch when a forward edge is taken, $\lastMsg$ messages to inform $\procA, \procB, \roleFmt{x_1}$ and $\roleFmt{\overline{x}_1}$ when the last forward edge is taken, and $\topMsg$ to $\procB$ to inform $\procB$ to receive $\val$ from $\procB$.  
	Receiving $\nextMsg$ messages means inaction for all other participants. 
	Receiving $\lastMsg$ prompts $\procB$ to send a message to $\procA, \roleFmt{x_1}$ and $\roleFmt{\overline{x}_1}$, which they anticipate by receiving $\lastMsg$ first from $\procC$. 
	
	As before, the only potential source of non-implementability lies in participant $\procB$, who can violate Receive Coherence for transitions labeled with 
	$\msgFromToNS{\procC}{\procB}{\val}$ and 
	$\msgFromToNS{\procA}{\procB}{\val}$ in $G_{x_1}$ when 
	$\avail_{\procA,\procB,\{\procB\}}(\val,\overline{G})$ does not hold, and the message from $\procA$ can be received out of order. 
	
	We obtain that $\GG_\varphi$ is non-implementable iff $\avail_{\procA,\procB,\{\procB\}}(\val,\overline{G})$ holds in $\GG_\varphi$ iff $\varphi$ is satisfiable.
\end{proof}
 \end{document}